\definecolor{bleudefrance}{rgb}{0.19, 0.55, 0.91}
\definecolor{ao(english)}{rgb}{0.0, 0.5, 0.0}
\newcommand{\enrique}[1]{  \ifthenelse{\boolean{showcomments}}
{\todo[inline,color=bleudefrance]{Enrique says: #1}}{}}
\newcommand{\addcite}[0]{\ifthenelse{\boolean{showcomments}}
{\textcolor{purple}{(add cite(s)) }}{}}%
\newcommand{\emmargin}[1]{\ifthenelse{\boolean{showcomments}}
{\todo{Enrique: #1)}}{}}
\newcommand{\adde}[1]{
\ifthenelse{\boolean{showedits}}
{\added[id=EM]{#1}}
{\!#1\hspace{-4.75pt}}
}
\newcommand{\repe}[2]{
\ifthenelse{\boolean{showedits}}
{\replaced[id=EM]{#1}{#2}}
{\!#1\hspace{-4.75pt}}
}
\newcommand{\dele}[1]{
\ifthenelse{\boolean{showedits}}
{\deleted[id=EM]{#1}}
{}
}
\newcommand{\addy}[1]{
\ifthenelse{\boolean{showedits}}
{\added[id=YJ]{#1}}
{\!#1\hspace{-4.75pt}}
}
\newcommand{\repy}[2]{
\ifthenelse{\boolean{showedits}}
{\replaced[id=YJ]{#1}{#2}}
{\!#1\hspace{-4.75pt}}
}
\newcommand{\dely}[1]{
\ifthenelse{\boolean{showedits}}
{\deleted[id=YJ]{#1}}
{}
}
\newtheoremstyle{bfnote}%
{}{}%
{\itshape}{}%
{\bfseries}{.}%
{ }%
{\thmname{#1}\thmnumber{ #2}\thmnote{ (#3)}}
\theoremstyle{bfnote}
\newtheorem{thm}{Theorem}
\newtheorem{cor}{Corollary}
\newtheorem{rem}{Remark}
\newtheorem{lem}{Lemma}
\newtheorem{ass}{Assumption}
\newtheorem{dyn-g}{Generator Dynamics}
\newtheorem{dyn-i}{Inverter Dynamics}
\newcommand{\real}[0]{\mathbb R}
\DeclareSymbolFont{bbold}{U}{bbold}{m}{n}
\DeclareSymbolFontAlphabet{\mathbbold}{bbold}
\DeclareMathOperator*{\argmin}{argmin}
\newcommand{\diag}[1]{\ensuremath{\mathrm{diag}\left(#1\right)}}
\newcommand{\tr}[1]{\ensuremath{\mathrm{tr}\left(#1\right)}}
\setlist[enumerate]{leftmargin=*}
\setlist[itemize]{leftmargin=*}
\title{
\huge Dynamic Droop Control in Low-inertia Power Systems
}
\author{Yan Jiang, Richard Pates, and Enrique Mallada%
\thanks{This material is supported by ARO through contract W911NF-17-1-0092, US DoE EERE award DE-EE0008006, NSF through grants CNS 1544771, EPCN 1711188, AMPS 1736448, and CAREER 1752362,
the Swedish Foundation
for Strategic Research, and the Swedish Research Council through the LCCC
Linnaeus Center.
Y. Jiang and E. Mallada are with the Johns Hopkins University, Baltimore, MD 21218, USA. Emails: {\tt \{yjiang,mallada\}@jhu.edu}. Richard Pates is with the Lund University, Box 118, SE-221 00 Lund, Sweden. Email: {\tt{richard.pates@control.lth.se}}. %
A preliminary version of part of the results in this paper has been presented in \cite{m2016cdc} and \cite{y2017cdc}.}  %
}%
\begin{document}
\maketitle
\begin{abstract}
% The shift from conventional synchronous generation  to renewable converter-based sources has recently led to a noticeable degradation of the power system frequency dynamics.
% A widely embraced approach to mitigate this problem is to mimic inertial response using grid-connected inverters. That is, introduce \emph{virtual inertia} to restore the stiffness that the system used to enjoy.
% Two widely embraced approaches to mitigate the grid dynamic degradation is to use  \emph{virtual inertia} or \emph{droop control}.
A widely embraced approach to mitigate the dynamic degradation in low-inertia power systems
is to mimic generation response using grid-connected inverters to restore the  grid's stiffness.
In this paper, we seek to challenge this approach and advocate for a principled design based on a systematic analysis of the performance trade-offs of inverter-based frequency control.
With this aim, we perform a qualitative and quantitative study comparing the effect of conventional control strategies --droop control (DC) and virtual inertia (VI)-- on several performance metrics induced by $\mathcal L_2$ and $\mathcal L_\infty$ signal norms. By extending a recently proposed modal decomposition method,
% --that relies on simplifying yet insightful proportionality assumptions--
we capture the effect of step and stochastic power disturbances, and frequency measurement noise, on the overall transient and steady-state behavior of the system. Our analysis unveils several limitations of these solutions, such as the inability of DC to improve dynamic frequency response without increasing steady-state control effort, or the large frequency variance that VI introduces in the presence of measurement noise.
% , and motivates the search for a better solution.
% accompanying dynamic degradation, inverter-based control strategies such as Droop Control (DC) and Virtual Inertia (VI) have been  widely used. However, we show through a norm-based analysis that DC cannot decouple the dynamic performance improvement from the steady-state cost minimization while VI will introduce unbounded noise amplification.
%
We further propose a novel dynam-i-c Droop controller (iDroop) that overcomes the limitations of DC and VI.
%We explore the systematic performance trade-offs among steady-state cost, noise amplification, synchronization cost, and frequency nadir that arise when measurement noise and/or power disturbances are considered.
More precisely, we show that iDroop can be tuned to achieve high noise rejection, fast system-wide synchronization, or frequency overshoot (Nadir) elimination without affecting the steady-state control effort share, and propose a tuning recommendation that strikes a balance among these objectives.
Extensive numerical experimentation shows that the proposed tuning is effective even when our proportionality assumptions are not valid, and that the particular tuning used for Nadir elimination strikes a good trade-off among various performance metrics.
\end{abstract}

\section{Introduction}\label{sec:intro}
The shift from conventional synchronous generation to renewable converter-based sources has recently led to a noticeable degradation of the power system frequency dynamics ~\cite{benjamin2017}. At the center of this problem is the reduction of the system-wide inertia that accentuates frequency fluctuations in response to disturbances~\cite{milano2018,ackermann2017}. Besides increasing the risk of frequency instabilities and blackouts~\cite{ulbig2014impact}, this
dynamic degradation also places limits on the total amount
of renewable generation that can be sustained by the grid~\cite{ahmadyar2018}.
Ireland, for instance, is already resorting to wind curtailment
whenever wind becomes larger than $50$\% of existing demand
in order to preserve the grid stability~\cite{o2014studying}.

A widely embraced approach to mitigate this problem is to mimic synchronous generation response using grid-connected converters~\cite{bose2013}. That is, to introduce \emph{virtual inertia} to restore the stiffness that the system used to enjoy~\cite{ofir2018DCvsvVI}. Notable works within this line of research focus on leveraging computational methods~\cite{ Poolla2019place, Guggilam2018TPS, Markovic2019SE} to efficiently allocate synthetic inertial or droop response, or analytical methods that characterize the sensitivity of different performance metrics to global or spatial variations of system parameters~\cite{guo2018cdc,pm2019preprint,pagnier2019optimal}.
% to exlore the interplay between s
% Much of the research interest nowadays related to low inertia power systems has been concentrated on the spatial placement and value tuning of DC and VI. Those work built upon numerical optimization  shed little light on the dependence of performance metrics on the controller parameters allocation. This problem is partly addressed in \cite{pagnier2019optimal} by means of matrix perturbation theory which helps to compute the sensitivities of the synchronization cost in local variations of controller parameters.
However, to this day, it is unclear whether this particular choice of control is the most suitable for the task.
On the one hand, unlike synchronous generators that leverage stored kinetic energy to modulate electric power injection, converter-based controllers need to  actively change their power injection based on noisy measurements of frequency or power.
On the other hand, converter-based control can be significantly faster than conventional generators. Therefore, using converters to mimic generator behavior does not take advantage of their full potential. In this paper, we seek to challenge this approach of mimicking generation response and advocate for a principled control design perspective.

To achieve this goal, we build on recent efforts by the control community on quantifying power network dynamic performance using $\mathcal L_2$ and $\mathcal L_\infty$ norms~\cite{p2017ccc,Poolla2019place}, and perform a systematic  study evaluating the effect of different control strategies, such as droop control (DC)~\cite{Brabandere2007dc} and virtual inertia (VI)~\cite{beck2007virtual}, on a set of static and dynamic figures of merits that are practically relevant from the power engineering standpoint. More precisely, under a mild --yet insightful-- proportionality assumption, we compute closed form solutions and sensitivities of controller parameters on the steady-state control effort share, frequency Nadir, $\mathcal L_2$-synchronization cost, and frequency variance of the response of a power network to step and stochastic disturbances.
% Much of the research interest nowadays related to low inertia power systems has been concentrated on the spatial placement and value tuning of DC and VI. Those work built upon numerical optimization \cite{ Poolla2019place, Guggilam2018TPS, Markovic2019SE} shed little light on the dependence of performance metrics on the controller parameters allocation. This problem is partly addressed in \cite{pagnier2019optimal} by means of matrix perturbation theory which helps to compute the sensitivities of the synchronization cost in local variations of controller parameters.
% %and to suggest the optimal distribution is close to inertia should be homogeneous but droop should be reinforced on the slow modes.
Our analysis unveils the inability of DC and VI to cope with seemingly opposing objectives, such as synchronization cost reduction without increasing steady-state effort share (DC), or frequency Nadir reduction without high frequency variance (VI). Therefore, rather than clinging to the idea of efficiently allocating synthetic inertia or droop, we advocate the search of a better solution.

To this end,  we propose novel dynam-i-c Droop (iDroop) control --inspired by classical lead/lag compensation-- which outperforms current control strategies (VI and DC) in an overall sense. More precisely:
\begin{itemize}
    \item Unlike DC that sacrifices steady-state effort share to improve dynamic performance, the added degrees of iDroop allow to decouple steady-state effort  from dynamic performance improvement.
    \item Unlike VI that amplifies frequency measurement noise, the lead/lag property of iDroop makes it less sensitive to noise and power disturbances, as measured by the $\mathcal{H}_2$ norm \cite{g2015tran} of the input-output system defined from measurement noise and power fluctuations to frequency deviations.
    \item iDroop can further be tuned to either eliminate the frequency Nadir, by compensating for the turbine lag, or to eliminate synchronization cost; a feature shown to be unattainable by virtual inertia control.

    % without eliminating turbines contribution to mitigating the system imbalance.

    % either cancel the Nadir or reduce synchronization, w

    % there exists a tuning of iDroop such that the system frequency evolves like the one in a first-order system, which successfully eliminates the frequency nadir.
\end{itemize}
All of above properties are attained through rigorous analysis on explicit expressions for performance metrics that are achieved under a mild yet insightful proportionality assumption that generalizes prior work~\cite{m2016cdc,y2017cdc}.

We further validate our analysis through extensive numerical simulations, performed on a low-inertia system --the Icelandic Grid-- that does not satisfy our parameter assumptions. Our numerical results also show that iDroop with the Nadir eliminated tuning designed based on the proportional parameter assumption works well even in environments with mixed step and stochastic disturbances.

The rest of this paper is organized as follows. Section~\ref{sec:prelim} describes the power network model and defines performance metrics. Section~\ref{sec:result} introduces our assumptions and a system diagonalization that eases the computations and derives some generic results that provide a foundation for further performance analysis. Section~\ref{sec:need} analyzes both  steady-state and dynamic performance of DC and VI, illustrates their limitations, and motivates the need for a new control strategy. Section~\ref{sec:idroop} describes the proposed iDroop and shows how it outperforms DC and VI from different perspectives. Section~\ref{sec:simulation} validates our results through detailed simulations. Section~\ref{sec:conclusion} concludes the paper.

\section{Preliminaries} \label{sec:prelim}
\subsection{Power System Model}
We consider a connected power network composed of $n$ buses indexed by $i \in \mathcal{V} := \{1,\dots, n\} $ and transmission lines denoted by unordered pairs $\{i,j\} \in \mathcal{E}$, where $\mathcal{E}$ is a set of $2$-element subsets of $\mathcal{V}$. As illustrated by the block diagram in Fig. \ref{fig:model2}, the system dynamics are modeled as a feedback interconnection of bus dynamics and network dynamics.
The input signals $p_\mathrm{in} := \left(p_{\mathrm{in},i}, i \in \mathcal{V} \right) \in \real^n$ and  $d_\mathrm{p} := \left(d_{\mathrm{p},i}, i \in \mathcal{V} \right) \in \real^n$ represent power injection set point changes and power fluctuations around the set point, respectively, and $n_\omega := \left(n_{\omega,i}, i \in \mathcal{V} \right) \in \real^n $ represents frequency measurement noise. The weighting functions $\hat{W}_\mathrm{p}(s)$ and $\hat{W}_\omega{}(s)$ can be used to adjust the size of these disturbances in the usual way. The output signal $\omega:=\left(\omega_i, i \in \mathcal{V} \right) \in \real^n$ represents the bus frequency deviation from its nominal value. We now discuss the dynamic elements in more detail.

\begin{figure}
\centering
\includegraphics[width=\columnwidth]{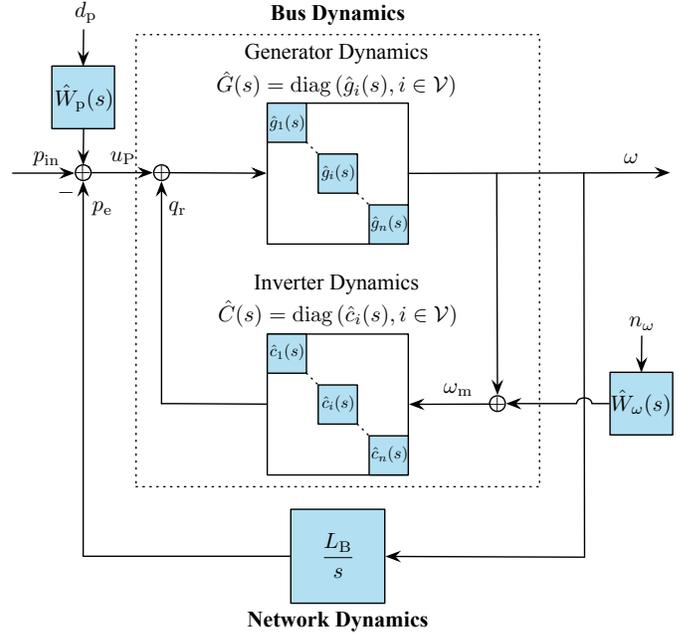}
\caption{Block diagram of power network.}\label{fig:model2}
\end{figure}

% \subsubsection{Network Model}

\subsubsection{Bus Dynamics} The bus dynamics that maps the net power bus imbalance $u_\mathrm{P} = \left( u_{\mathrm{P},i}, i \in \mathcal{V} \right) \in \real^n$ to the vector of frequency deviations $\omega$ can be described as a feedback loop that comprises a forward-path $\hat{G}(s)$ and a feedback-path $\hat{C}(s)$, where $\hat{G}(s) := \diag {\hat{g}_i(s), i \in \mathcal{V}}$ and $\hat{C}(s) := \diag {\hat{c}_i(s), i \in \mathcal{V}} $ are the transfer function matrices of generators and inverters, respectively.

\paragraph{Generator Dynamics}
The generator dynamics are composed of the standard swing equations with a turbine, i.e.,
\begin{equation}\label{eq:sw}
     	m_i \dot{\omega}_i = - d_i \omega_i + q_{\mathrm{r},i} +q_{\mathrm{t},i} + u_{\mathrm{P},i} \,,
\end{equation}
where $m_i>0$ denotes the aggregate generator inertia, $d_i>0$ the aggregate generator damping, $q_{\mathrm{r},i}$ the controllable input power produced by the grid-connected inverter, and $q_{\mathrm{t},i}$ the change in the mechanical power output of the turbine.
%EM this phrase below give too much empahsis to the threshold which later on with remove it by assumptions.
The turbine does not react to the frequency deviation $\omega_i$ until it exceeds a preset threshold $\omega_{\epsilon}\geq0$, i.e.,
% \begin{equation}
% \begin{cases}\label{eq:tu}
%     q_{\mathrm{t},i} = 0\;, &\text{if}\ |\omega_i| < \omega_{\epsilon} \;,\\
% 	\tau_i \dot{q}_{\mathrm{t},i} = -r_{\mathrm{t},i}^{-1} \omega_i -q_{\mathrm{t},i}\;, &\text{if}\ |\omega_i| \geq \omega_{\epsilon} \;,
% \end{cases}
% \end{equation}
\begin{equation}\label{eq:turb}
\tau_i\dot q_{\mathrm{t},i}=\varphi_{\omega_\epsilon}(\omega_i) - q_{\mathrm{t},i}
\end{equation}
with
$$\varphi_{\omega_\epsilon}(\omega_i):=
    \begin{cases}
    -{r_{\mathrm{t},i}^{-1}}(\omega_i+\omega_{\epsilon}) & \omega_i \leq -\omega_{\epsilon}\\
    0 & -\omega_{\epsilon} < \omega_i < \omega_{\epsilon}\\
    -{r_{\mathrm{t},i}^{-1}}(\omega_i-\omega_{\epsilon})  & \omega_i \geq \omega_{\epsilon}
    \end{cases}\,,
$$
where $\tau_i>0$ represents the turbine time constant and $r_{\mathrm{t},i}>0$ the turbine droop coefficient.

Two special cases of our interest are:
\begin{dyn-g}[Standard swing dynamics] \label{dy:sw}
When $|\omega_i(t)| < \omega_{\epsilon}$, the turbines are not triggered and the generator dynamics can be described by the transfer function
\begin{equation}\label{eq:dy-sw}
\hat{g}_i(s) = \frac{1}{m_i s + d_i}
\end{equation}
which is exactly the standard swing dynamics.
\end{dyn-g}

\begin{dyn-g}[Second-order turbine dynamics]
When $\omega_{\epsilon} = 0$, the turbines are constantly triggered and the generator dynamics can be described by the transfer function
\begin{equation} \label{eq:dy-sw-t}
\hat{g}_i(s) = \frac{ \tau_i s + 1 }{m_i \tau_i s^2 + \left(m_i + d_i \tau_i \right) s + d_i + r_{\mathrm{t},i}^{-1}}\;.
\end{equation}
\end{dyn-g}

\paragraph{Inverter Dynamics}
Since power electronics are significantly faster than the electro-mechanical dynamics of generators, we assume that each inverter measures the local grid frequency deviation $\omega_i$ and instantaneously updates the output power $q_{\mathrm{r},i}$. Different control laws can be used to map $\omega_i$ to $q_{\mathrm{r},i}$. We represent such laws using a transfer function $\hat{c}_i(s)$. The two most common ones are:

% \begin{itemize}
% \item{Droop Control (DC):} This control law can provide additional droop capabilities and is given by
% \begin{equation} \label{eq:DC}
% 	q_{\mathrm{r},i} = r_{\mathrm{r},i}^{-1} \omega_i \;,
% \end{equation}
% where $r_{\mathrm{r},i}$ is the droop coefficient.
% \item{Virtual Inertia (VI):} Besides providing additional droop capabilities, this control law can compensate the loss of inertia and is given by
% \begin{equation} \label{eq:VI}
% 	q_{\mathrm{r},i} =  r_{\mathrm{r},i}^{-1} \omega_i\; + m_{\mathrm{v},i} \dot{\omega}_i\;,
% \end{equation}
% where $m_{\mathrm{v},i}$ is the virtual inertia constant.
% \end{itemize}

% Both control laws above can be described by transfer functions:

\begin{dyn-i}[Droop Control]
% The dynamics of an inverter with Droop Control can be described by the transfer function
This control law can provide additional droop capabilities and is given by
\begin{equation} \label{eq:dy-dc}
\hat{c}_i(s) = -r_{\mathrm{r},i}^{-1}\;,
\end{equation}
where $r_{\mathrm{r},i}>0$ is the droop coefficient.
\end{dyn-i}

\begin{dyn-i}[Virtual Inertia]
Besides providing additional droop capabilities, this control law can compensate the loss of inertia and is given by
% The dynamics of an inverter with Virtual Inertia can be described by the transfer function
\begin{equation} \label{eq:dy-vi}
\hat{c}_i(s) = -\left(m_{\mathrm{v},i} s + r_{\mathrm{r},i}^{-1}\right)\;,
\end{equation}
where $m_{\mathrm{v},i}>0$ is the virtual inertia constant.
\end{dyn-i}

\subsubsection{Network Dynamics}
The network power fluctuations $p_\mathrm{e} := \left(p_{\mathrm{e},i}, i \in \mathcal{V} \right) \in \real^n$ are given by a linearized model of the power flow equations~\cite{Purchala2005dc-flow}:
\begin{align}
 \hat p_\mathrm{e}(s) = \frac{L_\mathrm{B}}{s} \hat \omega(s)\;,\label{eq:N}
\end{align}
where $\hat p_\mathrm{e}(s)$ and $\hat \omega(s)$ denote the Laplace transforms of $p_\mathrm{e}$ and $\omega$, respectively.\footnote{We use hat to distinguish the Laplace transform from its time domain counterpart.} The matrix $L_\mathrm{B}$ is an undirected weighted Laplacian matrix of the network with elements
\[
L_{\mathrm{B},{ij}}=\partial_{\theta_j}{\sum_{j=1}^n|V_i||V_j|b_{ij}\sin(\theta_i-\theta_j)}\Bigr|_{\theta=\theta_0}.
\]
Here, $\theta := \left(\theta_i, i \in \mathcal{V} \right) \in \real^n$ denotes the angle deviation from its nominal, $\theta_0 := \left(\theta_{0,i}, i \in \mathcal{V} \right) \in \real^n$ are the equilibrium angles, $|V_i|$ is the (constant) voltage magnitude at bus $i$, and $b_{ij}$ is the line $\{i,j\}$ susceptance.

\subsubsection{Closed-loop Dynamics}
%Combining \eqref{eq:sw} and \eqref{eq:N} through the relationship that $u_\mathrm{P} = p_\mathrm{in} - p_\mathrm{e}$,
%we can get the state-space representation of the feedback interconnection in Fig. \ref{fig:model2} as
% \begin{subequations}\label{eq:ss}
% 	\begin{align}
% 		\dot{\theta} =&\ \omega \,,\\
%      	M \dot{\omega} =& -D \omega -L_\mathrm{B} \theta + q_\mathrm{r} + q_\mathrm{t} + p_\mathrm{in} \,, \label{eq:ss-fre}
% 	\end{align}
% \end{subequations}
% where $M := \diag{m_i, i \in \mathcal{V}} \in \real_{\geq0}^{n \times n}$, $D := \diag{d_i, i \in \mathcal{V}} \in \real_{\geq0}^{n \times n}$, $q_\mathrm{r} := \left(q_{\mathrm{r},i}, i \in \mathcal{V} \right) \in \real^n$, and $q_\mathrm{t} := \left(q_{\mathrm{t},i}, i \in \mathcal{V} \right) \in \real^n$.
 We will investigate the closed-loop responses of the system in Fig.~\ref{fig:model2} from the power injection set point changes $p_\mathrm{in}$, the power fluctuations around the set point $d_\mathrm{p}$, and frequency measurement noise $n_\omega$ to frequency deviations $\omega$, which can be described compactly by the transfer function matrix
\begin{equation}\label{eq:closed-loop}
    \hat{T}(s) := \begin{bmatrix}\hat{T}_{\omega\mathrm{p}}(s) & \hat{T}_{\omega \mathrm{dn}}(s):=\begin{bmatrix}\hat{T}_{\omega\mathrm{d}} (s) & \hat{T}_{\omega\mathrm{n}} (s)\end{bmatrix}\end{bmatrix}\;.
\end{equation}
%\enrique{$\hat T_{st}$ is an odd notation. The notation $T_{yx}$ is meant to describe the transfer function from $x$ to $y$.}
% {\tre with
% \begin{align*}\label{eq:T}
% T_{\omega \mathrm{p}} (s) =& \left[I +(I+G(s)C(s))^{-1}G(s)L_\mathrm{B}/s\right]^{-1}\\&\hfil (I+G(s)C(s))^{-1}G(s)\;,\\
% T_{\omega \mathrm{n}} (s) =& -\left[I +(I+G(s)L_\mathrm{B}/s)^{-1}G(s)C(s)\right]^{-1}\\&\hfill(I+G(s)L_\mathrm{B}/s)^{-1}G(s)C(s)\;.
% \end{align*}}

\begin{rem}[Model Assumptions]
The \emph{linearized} network model \eqref{eq:closed-loop}
implicitly makes the following assumptions which are standard and well-justified for frequency control on transmission networks \cite{kundur_power_1994}:
\begin{itemize}
\item Bus voltage magnitudes $|V_i|$'s are constant;
we are not modeling the dynamics of exciters used for voltage control; these are assumed to operate at a much faster time-scale.
\item Lines $\{i,j\}$ are lossless.
\item Reactive power flows do not affect bus voltage phase angles and frequencies.
\item Without loss of generality, the equilibrium angle difference ($\theta_{0,i}-\theta_{0,j}$) accross each line is less than $\pi/2$.
\end{itemize}
% {\color{blue}{Also implicit in our model is the so-called Kron reduction, a procedure by which algebraic constraints resulting from buses with no generators are eliminated; see e.g. \cite{bergen_vittal_2000,Ishizaki:je}}.}
For a first principle derivation of the model we refer to \cite[Section VII]{Zhao:2013ts}. For applications of similar models for frequency control within the control literature, see, e.g., \cite{Zhao:2014bp,Li:2016tcns,mallada2017optimal}.
\end{rem}

\begin{rem}[Internal Stability of \eqref{eq:closed-loop}]
Throughout this paper we consider feedback interconnections of positive real and strictly positive real subsystems. Internal stability follows from classical results~\cite{khalil2002nonlinear}. Since the focus of this paper is on performance, we do not discuss internal stability here in detail. We refer to the reader to \cite{pm2018tcns}, for a thorough treatment of similar feedback interconnections. From now on a standing assumption --that can be verified-- is that feedback interconnection described in Fig. \ref{fig:model2} is internally stable.
\end{rem}

\subsection{Performance Metrics} \label{ssec:metrics}

Having considered the model of the power network, we are now ready to introduce performance metrics used in this paper to compare different inverter control laws.

%\subsubsection{Steady-State Performance}
% \paragraph*{Steady-State Cost} The problem of finding the most economic tuning of $r_{\mathrm{t},i}$ and {\tre $c_i(0)$} to mitigate an power imbalance $\Delta p$ {\tre with the lowest steady-steady control inputs} can be considered as the linearly-constrained quadratic program given by
% \begin{subequations}\label{eq:ss-cost-pro}
% \begin{align}
% 	\min_{\Delta q_{\mathrm{t},i}, \Delta q_{\mathrm{r},i}}\; & \sum_{i=1}^n \frac{1}{2}\left[ a_{\mathrm{t},i} \left(\Delta q_{\mathrm{t},i}\right)^2 + a_{\mathrm{r},i} \left(\Delta q_{\mathrm{r},i}\right)^2\right] \\
%     \!\sto \; & \sum_{i=1}^n \left(\Delta q_{\mathrm{t},i} + \Delta q_{\mathrm{r},i}  \right) = \Delta p\;,
% \end{align}
% \end{subequations}
% where $\Delta q_{\mathrm{t},i} := r_{\mathrm{t},i}^{-1} \omega_i$, $\Delta q_{\mathrm{r},i} := {\tre c_i(0) \omega_i}$, and $a_{\mathrm{t},i}$, $a_{\mathrm{r},i}$ are the cost coefficients assigned to generators and inverters, respectively.
%\paragraph*{Steady-State Effort Share}
\subsubsection{Steady-state Effort Share}
This metric measures the fraction of the power imbalance addressed by inverters, which is calculated as the absolute value of the ratio between the inverter steady-state input power and the total power imbalance, i.e.,
\begin{align}\label{eq:ES}
   \mathrm{ES} := \left|\frac{\sum_{i=1}^n \hat{c}_i(0) \omega_{\mathrm{ss},i}}{\sum_{i=1}^n  p_{\mathrm{in},i}(0^+) }\right|\;,
\end{align}
when the system $\hat{T}_{\omega\mathrm{p}}$ undergoes a step change in power excitation.
Here, $\hat{c}_i(0)$ is the dc gain of the inverter and $\omega_{\mathrm{ss},i}$ is the steady-state frequency deviation.
%\subsubsection{Dynamic Performance}
%  We will investigate the closed-loop responses of the system in Fig. \ref{fig:model2} from the power injection disturbances $d_\mathrm{p}$ and frequency measurement noise $n_\omega$ to frequency deviations $\omega$, which can be collected as a transfer function matrix given by
% \begin{equation}
%     T(s) = \begin{bmatrix}T_{\omega \mathrm{p}} (s) & T_{\omega \mathrm{n}} (s)\end{bmatrix}
% \end{equation}
% with
% \begin{align*}\label{eq:T}
% T_{\omega \mathrm{p}} (s) =& \left[I +(I+G(s)C(s))^{-1}G(s)L_\mathrm{B}/s\right]^{-1}\\&\hfil (I+G(s)C(s))^{-1}G(s)\;,\\
% T_{\omega \mathrm{n}} (s) =& -\left[I +(I+G(s)L_\mathrm{B}/s)^{-1}G(s)C(s)\right]^{-1}\\&\hfill(I+G(s)L_\mathrm{B}/s)^{-1}G(s)C(s)\;.
% \end{align*}
%\paragraph{Power Disturbances and Measurement Noise}
\subsubsection{Power Fluctuations and Measurement Noise}
This metric measures how the relative intensity of power fluctuations and measurement noise affect the frequency deviations, as quantified by the $\mathcal{H}_2$ norm of the transfer function $\hat{T}_{\omega\mathrm{dn}}$:
% \begin{equation*}
% 	\|T\|_{\mathcal{H}_2}^2 := \begin{cases}
% 	\displaystyle\frac{1}{2\pi}\int_{-\infty{}}^\infty{}\tr{\hat{T}\left(j\omega\right)^*\hat{T}\left(j\omega\right)}\,\mathrm{d}\omega&\textrm{if $T$ is stable,}\\
% 	\infty&\textrm{otherwise.}\label{eq:h2_def_E}
% 	\end{cases}
% \end{equation*}
\begin{align}
	&\|\hat{T}_{\omega\mathrm{dn}}\|_{\mathcal{H}_2}^2 \label{eq:h2_def_E}\\&:=\!\begin{cases}
	\!\displaystyle\frac{1}{2\pi}\!\!\int_{-\infty{}}^\infty{}\!\!\!\tr{\hat{T}_{\omega\mathrm{dn}}(\boldsymbol{j\omega})^\ast \hat{T}_{\omega\mathrm{dn}}(\boldsymbol{j\omega})}\mathrm{d}\boldsymbol{\omega}&\!\!\textrm{if $\hat{T}_{\omega\mathrm{dn}}$ is stable,}\\
	\!\infty&\!\!\textrm{otherwise.}\nonumber\footnotemark
	\end{cases}
\end{align}
The quantity $\|\hat{T}_{\omega\mathrm{dn}}\|_{\mathcal{H}_2}$ has several standard interpretations in terms of the input-output behavior of the system $\hat{T}_{\omega\mathrm{dn}}$~\cite{g2015tran}. In particular, in the stochastic setting, when the disturbance signals $d_{\mathrm{p},i}$ and $n_{\omega,i}$ are independent, zero mean, unit variance, white noise, then $\lim_{t\to \infty}\mathbb{E} \left[\omega(t)^T \omega(t)\right]=\|\hat{T}_{\omega\mathrm{dn}}\|_{\mathcal{H}_2}^2$.
% \begin{equation*}
% 	 \mathbb{E} \left[\omega(t)^T \omega(t)\right]=\|\hat{T}_{\mathrm{st}}\|_{\mathcal{H}_2}^2\;.
% \end{equation*}
This means that the sum of the steady-state variances in the output of $\hat{T}_{\omega\mathrm{dn}}$ in response to these disturbance equals the squared $\mathcal{H}_2$ norm of $\hat{T}_{\omega\mathrm{dn}}$. Thus the $\mathcal{H}_2$ norm gives a precise measure of how the intensity of power fluctuations and measurement noise affects the system's frequency deviations.
\footnotetext{$\boldsymbol{j}$ represents the imaginary unit which satisfies $\boldsymbol{j}^2=-1$ and $\boldsymbol{\omega}$ represents the frequency variable.}
%Given that the definition in \eqref{eq:h2_def_E} is not that straightforward for computation, it is useful to take advantage of the state-space realization of $G$. If $G$ has a realization
%\begin{equation*}
%	G =\left[\begin{array}{c|c}
%    A & B \\\hline
%    C & D
%\end{array}\right]\;,
%\end{equation*}
%then we can compute $\|G\|_{\mathcal{H}_2}^2$ through
%\begin{equation} \label{eq:H2}
%	\|G\|_{\mathcal{H}_2}^2 = \tr{B^T X_\mathrm{O} B} = \tr{C X_\mathrm{C} C^T}\;,
%\end{equation}
%where $X_\mathrm{O}$ and $X_\mathrm{C}$ are observability and controllability Grammians, which can be attained by solving the Lyapunov equations
%\begin{subequations} \label{eq:lyap}
%\begin{align}
%	A^T X_\mathrm{O} + X_\mathrm{O} A = - C^T C\;,\\
%    A X_\mathrm{C} + X_\mathrm{C} A^T = - B B^T\;.
%\end{align}
%\end{subequations}

%\paragraph{Synchronization Cost}
\subsubsection{Synchronization Cost}
This metric measures the size of individual bus deviations from the synchronous response when the system $\hat{T}_{\omega\mathrm{p}}$ is subject to a step change in power excitation given by $p_\mathrm{in} = u_0 \mathds{1}_{ t \geq 0 } \in \real^n$, where $u_0 \in \real^n$ is a given vector direction and $\mathds{1}_{ t \geq 0 }$ is the unit-step function \cite{p2017ccc}. This is quantified by the squared $\mathcal{L}_2$ norm of the vector of deviations $\tilde{\omega} := \omega - \bar{\omega} \mathbbold{1}_n \in \real^n$, i.e.,
\begin{equation}\label{eq:sync_cost}
	\|\tilde{\omega}\|_2^2 := \sum_{i=1}^n \int_0^\infty \tilde{\omega}_i(t)^2 \mathrm{d}t\;.
\end{equation}
Here, $\bar{\omega}:= \left(\sum_{i=1}^n m_i\omega_i\right)/\left(\sum_{i=1}^n m_i\right)$ is the system frequency that corresponds to the inertia-weighted average of bus frequency deviations and $\mathbbold{1}_n \in \real^n $ is the vector of all ones.

%\paragraph{Nadir}
\subsubsection{Nadir} This metric measures the minimum post-contingency frequency of a power system, which can be quantified by the $\mathcal{L}_{\infty}$ norm of the system frequency $\bar{\omega}$, i.e.,
\begin{equation}\label{eq:Nadir}
	\|\bar{\omega}\|_\infty := \max_{t\geq0} |\bar{\omega}(t)|\;,
\end{equation}
when the system $\hat{T}_{\omega\mathrm{p}}$ has as input a step change in power excitation \cite{p2017ccc}, i.e., $p_\mathrm{in} = u_0 \mathds{1}_{ t \geq 0 } \in \real^n$. This quantity matters in that deeper Nadir increases the risk of under-frequency load shedding and cascading outrages.

\section{Results} %% this should be renamed to something more appropriate
\label{sec:result}
In this section we show that under a simplifying assumption, it is possible to compute all of the performance metrics introduced in Section~\ref{ssec:metrics} analytically as functions of the system parameters, which pave us a way to formally compare the conventional control laws DC and VI in Section~\ref{sec:need} as well as suggest an improved control law iDroop in Section~\ref{sec:idroop}. We remark that the assumptions are only used in the analysis, but as we show in Section \ref{sec:simulation} the insights and advantages of the proposed solution are still there when these assumptions do not hold.

\subsection{Diagonalization}
% Most of research on evaluating power networks performance through global system metrics assume homogeneous parameters, which greatly simplify analyses at the cost of lost reality. Provided that parameters naturally scale according to generator ratings, proportional parameters assumption proposed in \cite{p2017ccc} is adopted here.

In order to make the analysis tractable, we require the closed-loop transfer functions to be diagonalizable. This is ensured by the following assumption, which is a generalization of \cite{pm2019preprint,p2017ccc}.

\begin{ass}[Proportionality]\label{ass:proportion}
There exists a proportionality matrix $F := \diag {f_i, i \in \mathcal{V}} \in \real_{\geq 0}^{n \times n} $ such that
\[ \hat{G}(s) = \hat{g}_\mathrm{o}(s) F^{-1} \qquad \text{and} \qquad \hat{C}(s) = \hat{c}_\mathrm{o}(s) F\]
where $\hat{g}_\mathrm{o}(s)$ and $\hat{c}_\mathrm{o}(s)$ are called the representative generator and the representative inverter, respectively.
\end{ass}

\begin{rem}[Proportionality parameters]
% In \cite{pm2019preprint}, t
The parameters $f_i$'s represent the individual machine rating. This definition is rather arbitrary for our analysis, provided that Assumption \ref{ass:proportion} is satisfied. Other alternatives could include $f_i=m_i$ or $f_i=m_i/m$ where $m$ is, for example, either the average or maximum generator inertia. The practical relevance of Assumption~\ref{ass:proportion} is justified, for example, by the empirical values reported in \cite{oakridge2013}, which show that at least in regards of order of magnitude, Assumption \ref{ass:proportion} is a reasonable first-cut approximation to heterogeneity.
\end{rem}

Under Assumption~\ref{ass:proportion},  the representative generator of \eqref{eq:dy-sw} and \eqref{eq:dy-sw-t} are given by
\begin{equation}\label{eq:go-sw}
 \hat{g}_\mathrm{o}(s) = \frac{1}{m s + d}
\end{equation}
and
\begin{equation}\label{eq:go-sw-tb}
 \hat{g}_\mathrm{o}(s) = \frac{ \tau s + 1 }{m \tau s^2 + \left(m + d \tau \right) s + d + r_\mathrm{t}^{-1}}\;, \footnote{We use variables without subscript $i$ to denote parameters of representative generator and inverter.}
\end{equation}
respectively, with $m_i=f_im$, $d_i=f_id$, $r_{\mathrm{t},i}=r_{\mathrm{t}}/f_i$, and $\tau_i=\tau$.

Similarly,
the representative inverters of DC \eqref{eq:dy-dc} and VI \eqref{eq:dy-vi} are given by
\begin{equation}\label{eq:co-dc}
\hat{c}_\mathrm{o}(s) = -r_\mathrm{r}^{-1}
\end{equation}
and
\begin{equation}\label{eq:co-vi}
\hat{c}_\mathrm{o}(s) = -\left(m_\mathrm{v} s + r_\mathrm{r}^{-1}\right)\;,
\end{equation}
with $m_{\mathrm{v},i}=f_im_{\mathrm{v}}$ and $r_{\mathrm{r},i}=r_{\mathrm{r}}/f_i$.
% depending on whether inverters are under DC or VI.

% \enrique{$f_i$ is not defined is $f_i=m_i/m_{max}$ or something different? Yan's response: The detailed definition of $f_i$ is given in the section of numerical illustration. Shall we move it up to here?}

Using Assumption~\ref{ass:proportion}, we can derive a diagonalized version of \eqref{eq:closed-loop}. First, we rewrite
\[\hat{G}(s) = F^{-\frac{1}{2}} [\hat{g}_\mathrm{o}(s)I_n] F^{-\frac{1}{2}} \quad\text{and}\quad \hat{C}(s) = F^{\frac{1}{2}} [\hat{c}_\mathrm{o}(s)I_n] F^{\frac{1}{2}}\]
as shown in Fig. \ref{fig:diag1}, and after a loop transformation obtain Fig. \ref{fig:diag2}. Then, we define the scaled Laplacian matrix
\begin{equation}\label{eq:scale-L}
L_\mathrm{F} := F^{-\frac{1}{2}} L_\mathrm{B} F^{-\frac{1}{2}}
\end{equation}
by grouping the terms in the upper block of Fig. \ref{fig:diag2}. Moreover, since $L_\mathrm{F} \in \real^{n \times n}$ is symmetric positive semidefinite, it is real orthogonally diagonalizable with non-negative eigenvalues~\cite{Horn2012MA}. Thus, there exists an orthogonal matrix $V \in \real^{n \times n}$ with $V^T V = V V^T = I_n$, such that
\begin{equation}\label{eq:ortho-diag}
L_\mathrm{F} = V \Lambda V^T\;,
\end{equation}
where $\Lambda := \diag{\lambda_k, k \in \{1,\dots, n\}} \in \real_{\geq0}^{n \times n}$ with $\lambda_k$ being the $k$th eigenvalue of $L_\mathrm{F}$ ordered non-decreasingly $(0 = \lambda_1 < \lambda_2 \leq \ldots \leq\lambda_n)$\footnote{Recall that we assume the power network is connected, which means that $L_\mathrm{F}$ has a single eigenvalue at the origin.} and $V := \begin{bmatrix} (\sum_{i=1}^n f_i)^{-\frac{1}{2}} F^{\frac{1}{2}} \mathbbold{1}_n & V_{\bot} \end{bmatrix}$ with $V_{\bot} := \begin{bmatrix} v_2 &\ldots & v_n \end{bmatrix}$ composed by the eigenvector $v_k$ associated with $\lambda_k$.\footnote{We use $k$ and $l$ to index dynamic modes but $i$ and $j$ to index bus numbers.} Now, applying \eqref{eq:scale-L} and \eqref{eq:ortho-diag} to Fig. \ref{fig:diag2} and rearranging blocks of $V$ and $V^T$ results in Fig. \ref{fig:diag3}. Finally, moving the block of $\hat{c}_\mathrm{o}(s) I_n$ ahead of the summing junction and combining the two parallel paths produces Fig. \ref{fig:block-diag}, where the boxed part is fully diagonalized.

Now, by defining
%\begin{equation*}
%\hat{H}_\omega(s) = - c_\mathrm{o}(s) \hat{H}_\mathrm{p}(s)\;,
%\end{equation*}
%where $\hat{H}_\mathrm{p}(s)$ is obtained by closing
the closed-loop with a forward-path $\hat{g}_\mathrm{o}(s) I_n$ and a feedback-path $\left(\Lambda/s - \hat{c}_\mathrm{o}(s) I_n\right)$ as
\begin{equation*}
\hat{H}_\mathrm{p}(s) = \diag{\hat{h}_{\mathrm{p},k}(s), k \in \{1,\dots, n\}}
\end{equation*}
where
\begin{equation} \label{eq:hp-s}
\hat{h}_{\mathrm{p},k}(s) = \frac{\hat{g}_\mathrm{o}(s)}{1+\hat{g}_\mathrm{o}(s)\left(\lambda_k/s-\hat{c}_\mathrm{o}(s)\right)}\;,
\end{equation}
and $\hat{H}_\omega(s) = \hat{c}_\mathrm{o}(s) \hat{H}_\mathrm{p}(s)$, i.e.,
\begin{equation*}
\hat{H}_\omega(s) = \diag{\hat{h}_{\omega,k}(s), k \in \{1,\dots, n\}}
\end{equation*}
where
\begin{equation} \label{eq:homega-s}
\hat{h}_{\omega,k}(s) = \hat{c}_\mathrm{o}(s)\hat{h}_{\mathrm{p},k}(s)\;,
\end{equation}
the closed-loop transfer functions from $p_\mathrm{in}$, $d_\mathrm{p}$, and $n_\omega$ to $\omega$ become
\begin{subequations}\label{eq:T-diag}
\begin{equation}\label{eq:Tp}
\hat{T}_{\omega\mathrm{p}} (s) = F^{-\frac{1}{2}} V \hat{H}_\mathrm{p}(s) V^T F^{-\frac{1}{2}}\;,
\end{equation}
\begin{equation}\label{eq:Td}
\hat{T}_{\omega\mathrm{d}} (s) = F^{-\frac{1}{2}} V \hat{H}_\mathrm{p}(s) V^T F^{-\frac{1}{2}}\hat{W}_\mathrm{p}(s)\;,
\end{equation}
\begin{equation}
\hat{T}_{\omega\mathrm{n}} (s) = F^{-\frac{1}{2}} V \hat{H}_\omega(s) V^T F^{\frac{1}{2}}\hat{W}_\omega{}(s)\;,
\end{equation}
\end{subequations}
respectively.

Note that depending on the specific generator and inverter dynamics involved, we may add subscripts in the name of a transfer function without making a further declaration in the rest of this paper. For example, we may add 'T' if the turbine is triggered and 'DC' if the inverter operates in DC mode as in $\hat{h}_{\mathrm{p},k,\mathrm{T,DC}}(s)$.

\begin{figure*}[!t]
\centering
\subfigure[]
{\includegraphics[width=0.32\textwidth]{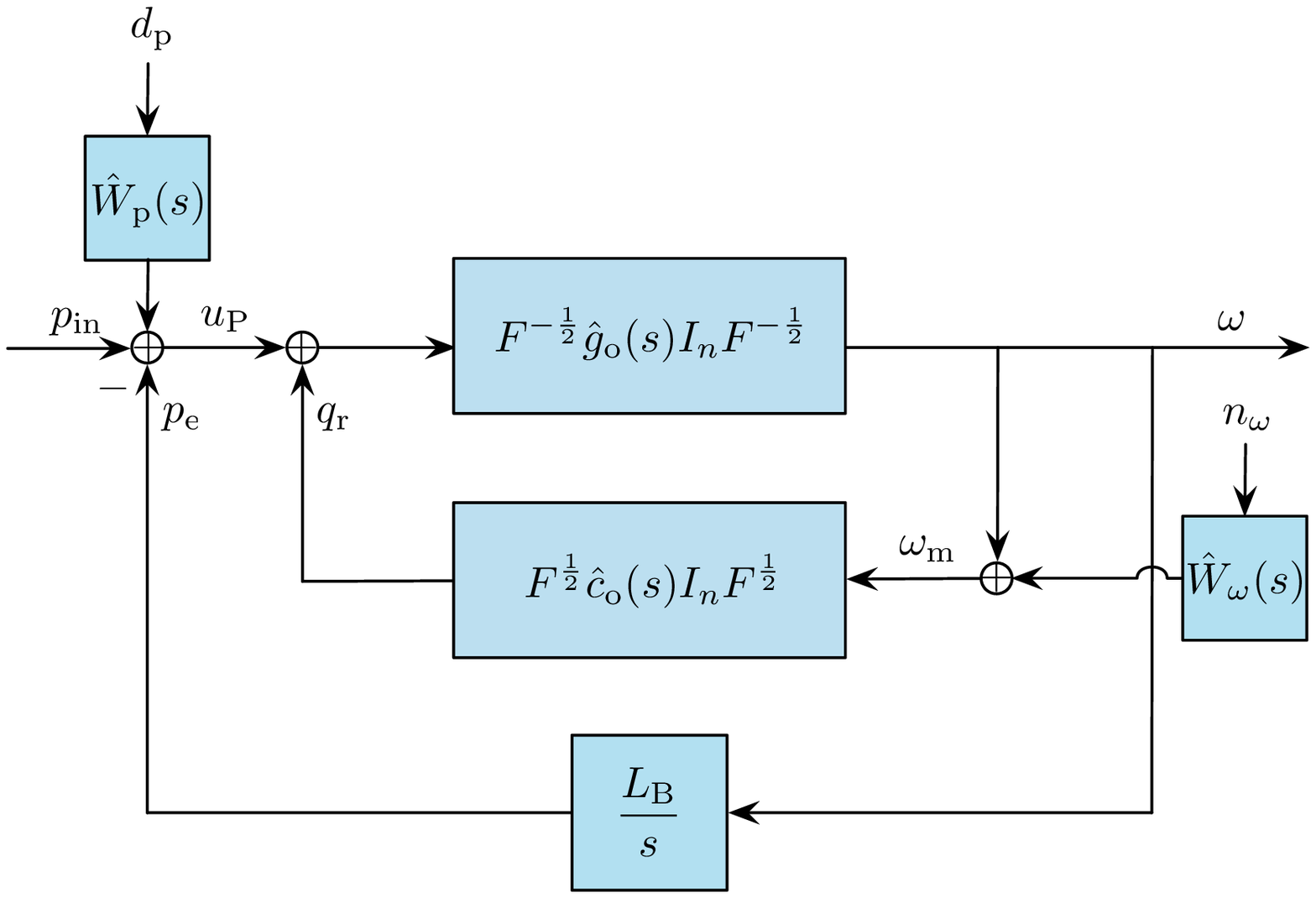}\label{fig:diag1}}
\hfil
\subfigure[]
{\includegraphics[width=0.32\textwidth]{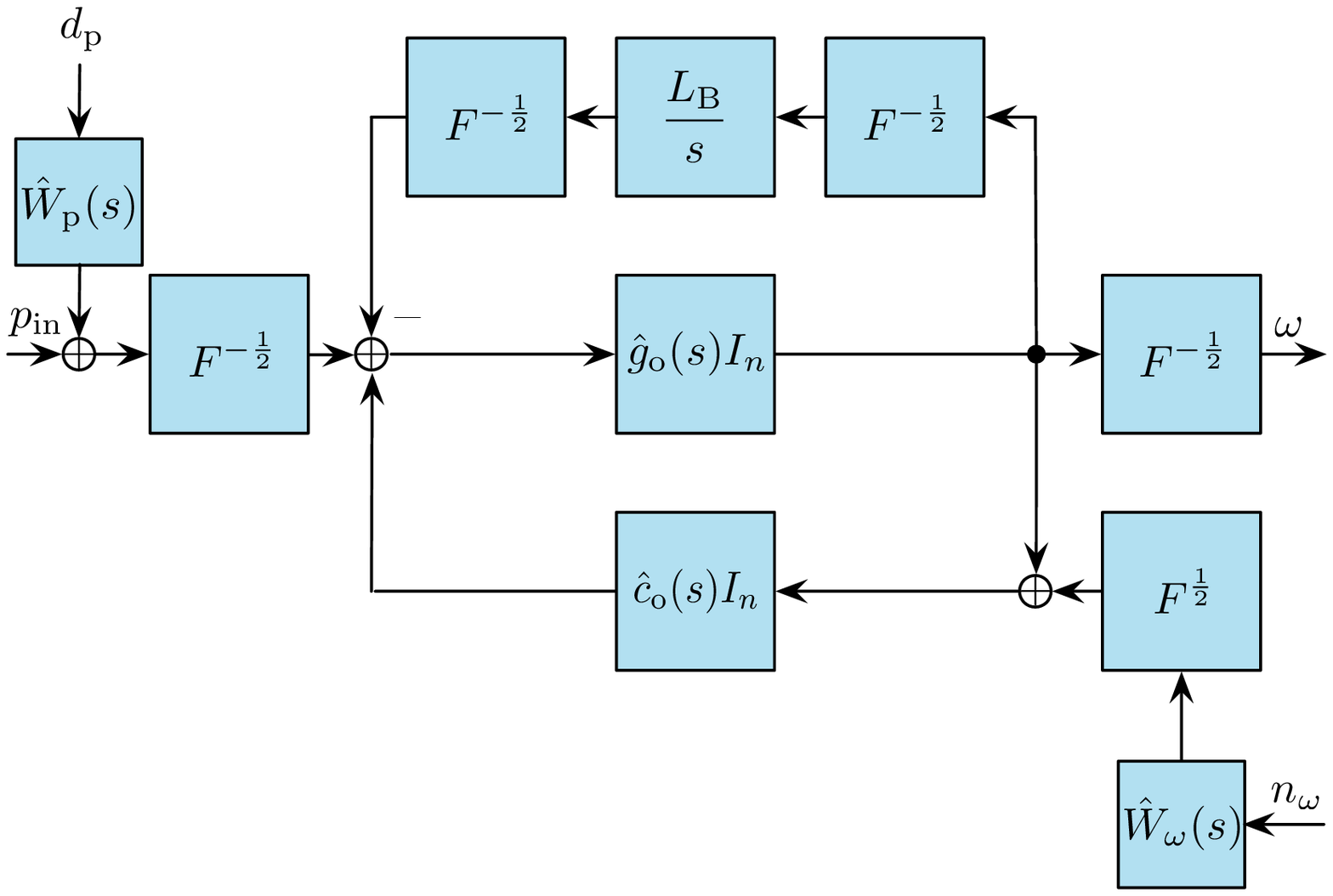}\label{fig:diag2}}
\hfil
\subfigure[]
{\includegraphics[width=0.32\textwidth]{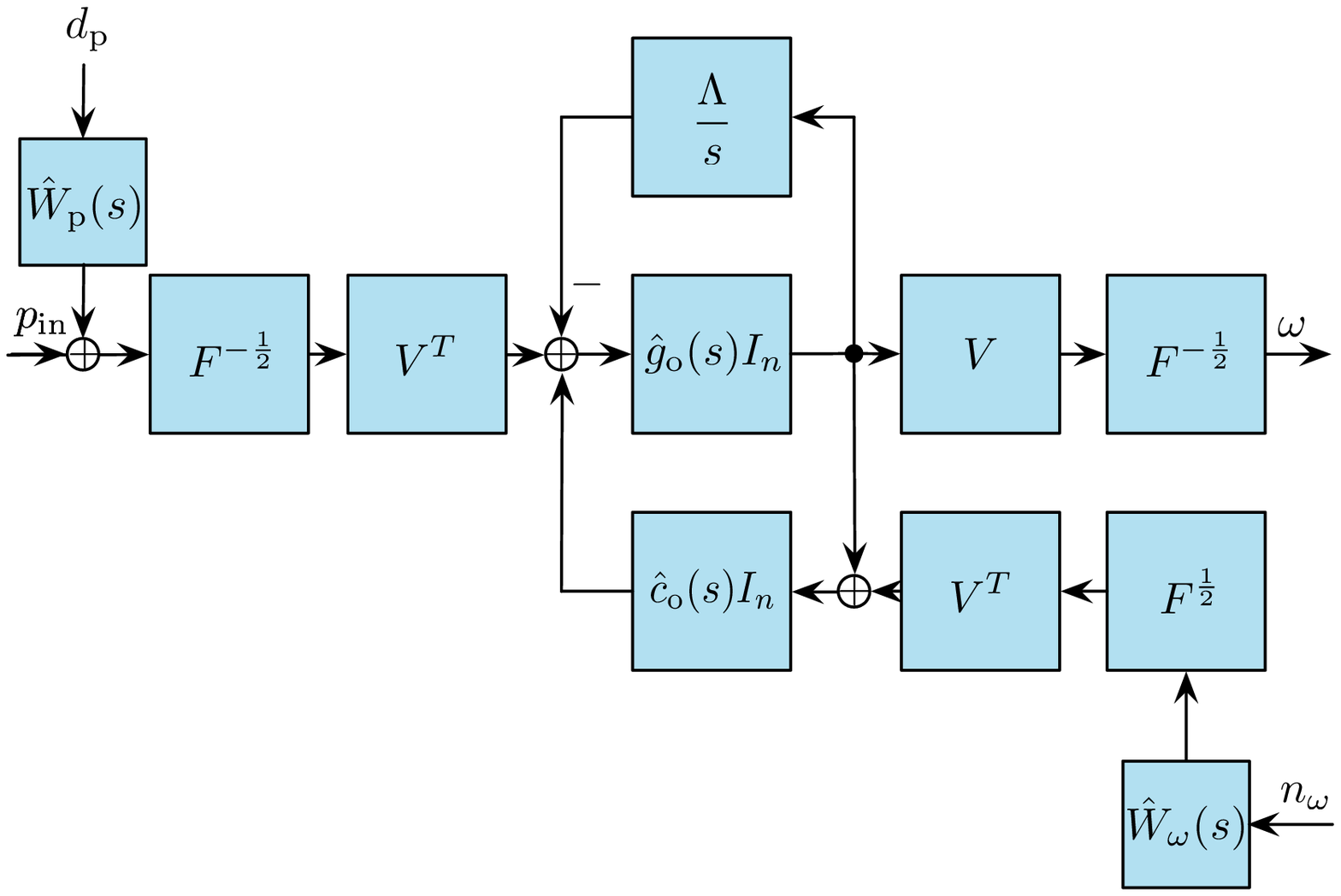}\label{fig:diag3}}
\caption{Equivalent block diagrams of power network under proportionality assumption.}
\label{fig:block-diag-process}
\end{figure*}

\begin{figure}
\centering
\includegraphics[width=\columnwidth]{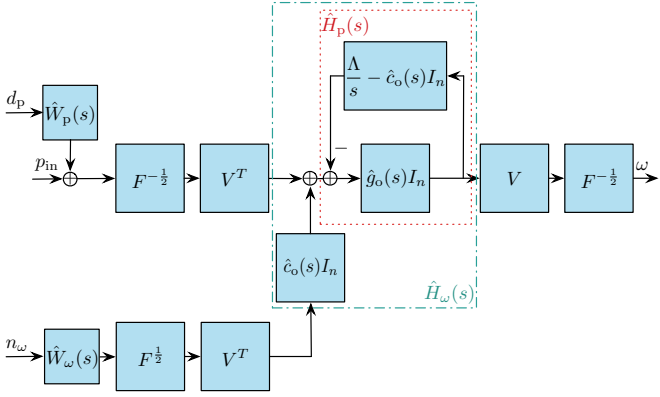}
\caption{Diagonalized block diagram of power network.}\label{fig:block-diag}
\end{figure}

%Basically just move the existing section on this here. Also add a set of equations showing how the diagonalization simplifies the transfer functions that will be used to quantify performance. In particular it should be clear how the transfer functions $h_{\omega,k},h_{\textrm{p},k}$ are defined as a function of $c(s)$, as well as $\tilde{h}_k(t)$.

%See my comments on how to define the performance measures. If this is done, each one of these subsection will have a single equation with a norm in it. The purpose of the next subsections is to give theorems that show what these norms equal in terms of the relevant transfer functions and parameters.

\subsection{Generic Results for Performance Metrics}
We now derive some important building blocks required for the performance analysis of the system $\hat{T}$ described in \eqref{eq:T-diag}. As described in Section \ref{ssec:metrics}, the sensitivity to power fluctuations and measurement noise can be evaluated through the $\mathcal{H}_2$ norm of the system $\hat{T}_{\omega\mathrm{dn}}$, while the steady-state effort share, synchronization cost, and Nadir can all be characterized by a step response of the system $\hat{T}_{\omega\mathrm{p}}$. There are two scenarios that are of our interest.%slightly adjust this sentence to reflect the fact that the assumptions are no longer both about noise inputs

\begin{ass}[Proportional weighting scenario]\label{ass:noise}\
\begin{itemize}
\item
The noise weighting functions are given by
\begin{equation*}
    \hat{W}_\mathrm{p}(s) = \kappa_\mathrm{p} F^{\frac{1}{2}} \qquad \text{and} \qquad \hat{W}_\omega(s) = \kappa_\omega F^{-\frac{1}{2}},
\end{equation*}
where $\kappa_\mathrm{p}>0$ and $\kappa_\omega>0$ are weighting constants.
\item
$|\omega_i(t)| < \omega_{\epsilon}$, $\forall i \in \mathcal{V}$ and $t\geq0$ such that turbines will not be triggered.
\end{itemize}
\end{ass}
% \begin{ass}[Noise input scenario]\label{ass:noise}\
% \begin{itemize}
% \item
% There is a stochastic noise $K_\mathrm{p} w_\mathrm{p}$ and $K_\omega w_\omega$ as defined in Section \ref{ssec:metrics} incorporated into the constant power injection and the measured frequency of inverters, respectively, i.e., $p_\mathrm{in} = K_\mathrm{p} w_\mathrm{p}$ and $n_\omega = K_\omega w_\omega$.
% \item
% $|\omega_i| < \omega_{\epsilon}$, $\forall i \in \mathcal{V}$ such that turbines will not be triggered.
% \end{itemize}
% \end{ass}
\begin{ass}[Step input scenario]\label{ass:step}\
\begin{itemize}
\item
There is a step change as defined in Section \ref{ssec:metrics} on the power injection set point, i.e., $p_\mathrm{in} = u_0 \mathds{1}_{ t \geq 0 }$, $d_\mathrm{p}= \mathbbold{0}_n$, and $n_\omega = \mathbbold{0}_n$ with $\mathbbold{0}_n \in \real^n $ being the vector of all zeros.
\item
$\omega_{\epsilon} = 0$ such that turbines are constantly triggered.
\end{itemize}
\end{ass}
\begin{rem}[Weighting assumption]
As a natural counterpart of Assumption~\ref{ass:proportion}, we look at the case when the power fluctuations and measurement noise are weighted directly and inversely proportional to the square root of the bus ratings, respectively. In the case of $\hat W_\mathrm{p}(s)$, this is equivalent to assuming that demand fluctuation variances are proportional to the bus ratings, which is in agreement with the central limit theorem. For $\hat W_\mathrm{\omega}(s)$, this is equivalent to assuming the frequency measurement noise variances are inversely proportional to the bus ratings, which is in line with the inverse relationship between jitter variance and power consumption for an oscillator in phase-locked-loop \cite{Weigandt1994}.
\end{rem}
%\enrique{I'm not sure how to justify the assumption in the second case, thoughts?}

\subsubsection{Steady-state Effort Share}
% In what follows, we refer as a synchronous solution to the steady-state solution $(\theta_{\mathrm{ss}_0} + \omega_{\mathrm{ss}} t, \omega_{\mathrm{ss}}, q_{\mathrm{r},\mathrm{ss}}, q_{\mathrm{t},\mathrm{ss}})$ of \eqref{eq:ss} if $\omega_{\mathrm{ss}} = \omega_{\mathrm{syn}} \mathbbold{1}_n$,
As indicated by \eqref{eq:ES}, the key of computing the steady-state effort share lies in computing the steady-state frequency deviation $\omega_{\mathrm{ss}}$ of the system $\hat{T}_{\omega\mathrm{p}}$. When the system synchronizes, the steady-state frequency deviation is given by $\omega_{\mathrm{ss}} = \omega_{\mathrm{syn}} \mathbbold{1}_n$
% If in the steady-state deviation $\omega_{\mathrm{ss}} = \omega_{\mathrm{syn}} \mathbbold{1}_n$,
%
and $\omega_{\mathrm{syn}}$ is called the synchronous frequency. In the absence of a secondary control layer, e.g., automatic generation control \cite{d1973tran}, the system can synchronize with a nontrivial frequency deviation, i.e., $\omega_{\mathrm{syn}} \neq 0$.

The following lemma provides a general expression for $\omega_{\mathrm{syn}}$ in our setting.

\begin{lem}[Synchronous frequency]\label{lem:syn-fre}
Let Assumption~\ref{ass:step} hold.
%Then the steady-state solution to \eqref{eq:ss} where $q_{\mathrm{r},i}$ is determined by any control law $c_i(s)$ is a synchronous solution with the synchronous frequency given by
If $q_{\mathrm{r},i}$ is determined by a control law $\hat{c}_i(s)$, then the output $\omega$ of the system $\hat{T}_{\omega\mathrm{p}}$ synchronizes to the steady-state frequency deviation $\omega_{\mathrm{ss}} = \omega_{\mathrm{syn}} \mathbbold{1}_n$ with
\begin{equation}
 \omega_{\mathrm{syn}} = \dfrac{\sum_{i=1}^n  u_{0,i}}{\sum_{i=1}^n \left( d_i + {r_{\mathrm{t},i}^{-1} - \hat{c}_i(0)} \right)}\;. \label{eq:ome-syn}
\end{equation}
\end{lem}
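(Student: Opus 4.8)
The plan is to prove the result by a steady-state power-balance argument obtained by summing the swing equations \eqref{eq:sw} over all buses, rather than by invoking the diagonalization of Section~\ref{sec:result}; the latter rests on the proportionality Assumption~\ref{ass:proportion}, which this lemma deliberately does \emph{not} require (note that the claimed expression is stated in terms of the individual parameters $d_i$, $r_{\mathrm{t},i}$, and $\hat c_i(0)$, not their representative counterparts). Because the interconnection is internally stable and, under Assumption~\ref{ass:step}, driven by the constant post-transient input $p_\mathrm{in}=u_0\mathds{1}_{t\geq 0}$ with $d_\mathrm{p}=n_\omega=\mathbbold{0}_n$, the response settles to an equilibrium at which all time derivatives vanish. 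The argument then splits into two parts: first showing the equilibrium is synchronous, i.e. $\omega_\mathrm{ss}=\omega_\mathrm{syn}\mathbbold{1}_n$, and then evaluating $\omega_\mathrm{syn}$.

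For the synchronization step I would use the network relation \eqref{eq:N}, which reads $\dot p_\mathrm{e}=L_\mathrm{B}\omega$ in the time domain. At equilibrium $p_\mathrm{e}$ is constant, so $L_\mathrm{B}\omega_\mathrm{ss}=\mathbbold{0}_n$; since the network is connected, $\ker L_\mathrm{B}=\mathrm{span}\{\mathbbold{1}_n\}$ and hence $\omega_\mathrm{ss}=\omega_\mathrm{syn}\mathbbold{1}_n$ for some scalar $\omega_\mathrm{syn}$. To pin down $\omega_\mathrm{syn}$ I would then evaluate each block at the equilibrium: with $\omega_\epsilon=0$ the turbine map in \eqref{eq:turb} reduces to $\varphi_0(\omega_i)=-r_{\mathrm{t},i}^{-1}\omega_i$, so $\dot q_{\mathrm{t},i}=0$ gives $q_{\mathrm{t},i,\mathrm{ss}}=-r_{\mathrm{t},i}^{-1}\omega_\mathrm{syn}$; the inverter contributes its dc gain, $q_{\mathrm{r},i,\mathrm{ss}}=\hat c_i(0)\omega_\mathrm{syn}$; and the step supplies $u_{\mathrm{P},i,\mathrm{ss}}=u_{0,i}-p_{\mathrm{e},i,\mathrm{ss}}$. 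Substituting these into the steady-state form of \eqref{eq:sw} and summing over $i\in\mathcal V$, the internal flows cancel because $\mathbbold{1}_n^T L_\mathrm{B}=\mathbbold{0}_n^T$ forces $\sum_i p_{\mathrm{e},i,\mathrm{ss}}=\mathbbold{1}_n^T L_\mathrm{B}\theta=0$; this leaves $\omega_\mathrm{syn}\sum_i\left(d_i+r_{\mathrm{t},i}^{-1}-\hat c_i(0)\right)=\sum_i u_{0,i}$, which rearranges to \eqref{eq:ome-syn}.

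The hard part will be the synchronization step, which is more delicate than the routine summation. The absolute phase angles $\theta_i$ need not converge---if $\omega_\mathrm{syn}\neq 0$ they drift linearly---so one cannot simply assert that the full state settles. What does settle is the vector of line flows $p_\mathrm{e}=L_\mathrm{B}\theta$ (equivalently the across-line angle differences), and it is precisely the constancy of $p_\mathrm{e}$, combined with $\dot p_\mathrm{e}=L_\mathrm{B}\omega$ and connectedness, that locks the bus frequencies to a common value. Making this rigorous requires handling the integrator $L_\mathrm{B}/s$ in \eqref{eq:N} with care and leaning on the standing internal-stability assumption to guarantee the equilibrium exists at all; once that is in place, the only quantitative ingredient is the Laplacian identity $\mathbbold{1}_n^T L_\mathrm{B}=\mathbbold{0}_n^T$ that makes the aggregate network contribution vanish.
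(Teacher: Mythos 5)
Your proposal is correct and follows essentially the same route as the paper's proof: establish synchronization by observing that the network term must be constant in steady state (the paper parametrizes the steady-state angles as $\theta_{\mathrm{ss}_0}+\omega_{\mathrm{ss}}t$ and deduces $L_\mathrm{B}\omega_{\mathrm{ss}}=\mathbbold{0}_n$, which is the same argument as your constancy of $p_\mathrm{e}$ together with $\dot p_\mathrm{e}=L_\mathrm{B}\omega$), then evaluate the turbine and inverter dc gains at equilibrium, sum the swing equations, and cancel the network flows via $\mathbbold{1}_n^T L_\mathrm{B}=\mathbbold{0}_n^T$. Your remarks on the angle drift and on not needing the proportionality assumption accurately reflect how the paper's proof actually proceeds.
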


\begin{proof}
Combining \eqref{eq:sw} and \eqref{eq:N} through the relationship $u_\mathrm{P} = p_\mathrm{in} - p_\mathrm{e}$,
we get the (partial) state-space representation of the system $\hat{T}_{\omega\mathrm{p}}$ as
\begin{subequations}\label{eq:ss}
	\begin{align}
		\dot{\theta} =&\ \omega \,,\\
     	M \dot{\omega} =& -D \omega -L_\mathrm{B} \theta + q_\mathrm{r} + q_\mathrm{t} + p_\mathrm{in} \,, \label{eq:ss-fre}
	\end{align}
\end{subequations}
where $M := \diag{m_i, i \in \mathcal{V}} \in \real_{\geq0}^{n \times n}$, $D := \diag{d_i, i \in \mathcal{V}} \in \real_{\geq0}^{n \times n}$, $q_\mathrm{r} := \left(q_{\mathrm{r},i}, i \in \mathcal{V} \right) \in \real^n$, and $q_\mathrm{t} := \left(q_{\mathrm{t},i}, i \in \mathcal{V} \right) \in \real^n$.
In steady-state, \eqref{eq:ss} yields
\begin{equation}\label{eq:ss-pf1}
    L_\mathrm{B} \omega_{\mathrm{ss}} t = -D \omega_{\mathrm{ss}} -L_\mathrm{B} \theta_{\mathrm{ss}_0} + q_{\mathrm{r},\mathrm{ss}} + q_{\mathrm{t},\mathrm{ss}} + u_0 \,,
\end{equation}
where $(\theta_{\mathrm{ss}_0} + \omega_{\mathrm{ss}} t, \omega_{\mathrm{ss}}, q_{\mathrm{r},\mathrm{ss}}, q_{\mathrm{t},\mathrm{ss}})$ denotes the steady-state solution of \eqref{eq:ss}. Equation \eqref{eq:ss-pf1} indicates that $L_\mathrm{B} \omega_{\mathrm{ss}} t$ is constant and thus $L_\mathrm{B} \omega_{\mathrm{ss}} = \mathbbold{0}_n$. It follows that $\omega_{\mathrm{ss}} = \omega_{\mathrm{syn}} \mathbbold{1}_n$.
%, which means that the steady-state solution is a synchronous solution.
Therefore, \eqref{eq:ss-pf1} becomes
\begin{align}
\mathbbold{0}_n =& -D \omega_{\mathrm{syn}} \mathbbold{1}_n -L_\mathrm{B} \theta_{\mathrm{ss}_0} + q_{\mathrm{r},\mathrm{ss}} + q_{\mathrm{t},\mathrm{ss}} + u_0\;, \label{eq:synsw1}
\end{align}
where $q_{\mathrm{r},\mathrm{ss}} = \left( \hat{c}_i(0) \omega_{\mathrm{syn}}, i \in \mathcal{V} \right)\in \real^n$ and  $q_{\mathrm{t},\mathrm{ss}} = \left(-r_{\mathrm{t},i}^{-1} \omega_{\mathrm{syn}}, i \in \mathcal{V} \right)\in \real^n$ when $\omega_{\epsilon}=0$ by \eqref{eq:turb}. Pre-multiplying \eqref{eq:synsw1} by $\mathbbold{1}_n^T$ and using the property that $\mathbbold{1}_n^T L_\mathrm{B} = \mathbbold{0}_n^T$, we get the desired result in \eqref{eq:ome-syn}.
\end{proof}

Now, the theorem below provides an explicit expression for the steady-state effort share.

\begin{thm}[Steady-state effort share]\label{thm:ss-es}
Let Assumption~\ref{ass:step} hold. If $q_{\mathrm{r},i}$ is determined by a control law $\hat{c}_i(s)$, then the steady-state effort share of the system {$\hat{T}_{\omega\mathrm{p}}$} is given by
% \enrique{Why do you refer above only to $\hat{T}_{\omega \mathrm{p}}$, ideally when you refer to the system, you want to refer to the power system. Am I missing something here? Yan's response: I consider steady-state effort share as a performance metric in step input scenario, where only the transfer function Twp matters, so I write the system Twp here.}
\begin{equation}
    \mathrm{ES} = \left|\frac{\sum_{i=1}^n \hat{c}_i(0)}{\sum_{i=1}^n  \left( d_i + {r_{\mathrm{t},i}^{-1} - \hat{c}_i(0)} \right) }\right|\;.
\end{equation}
\end{thm}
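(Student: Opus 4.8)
The plan is to reduce the claim to a direct substitution, since Lemma~\ref{lem:syn-fre} already does the heavy lifting. First I would recall the definition of the metric in \eqref{eq:ES}, namely
\[
\mathrm{ES} = \left|\frac{\sum_{i=1}^n \hat{c}_i(0)\,\omega_{\mathrm{ss},i}}{\sum_{i=1}^n p_{\mathrm{in},i}(0^+)}\right|,
\]
and observe that under Assumption~\ref{ass:step} both quantities appearing here are already known. The step input $p_\mathrm{in} = u_0 \mathds{1}_{t\geq 0}$ gives $p_{\mathrm{in},i}(0^+) = u_{0,i}$ immediately, so the denominator is $\sum_{i=1}^n u_{0,i}$.

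For the numerator, the key observation is that Assumption~\ref{ass:step} places us exactly in the regime of Lemma~\ref{lem:syn-fre} ($\omega_\epsilon = 0$, so the turbines are triggered), whence the output synchronizes with $\omega_{\mathrm{ss}} = \omega_{\mathrm{syn}} \mathbbold{1}_n$. Thus $\omega_{\mathrm{ss},i} = \omega_{\mathrm{syn}}$ for every $i$, and since $\omega_{\mathrm{syn}}$ is a scalar independent of $i$, it factors out of the sum, giving numerator $\omega_{\mathrm{syn}} \sum_{i=1}^n \hat{c}_i(0)$.

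The final step is to insert the closed-form expression for $\omega_{\mathrm{syn}}$ from \eqref{eq:ome-syn}. Dividing the factored numerator by $\sum_{i=1}^n u_{0,i}$ and substituting $\omega_{\mathrm{syn}} = \left(\sum_i u_{0,i}\right)\big/\left(\sum_i(d_i + r_{\mathrm{t},i}^{-1} - \hat{c}_i(0))\right)$, the common factor $\sum_{i=1}^n u_{0,i}$ cancels between $\omega_{\mathrm{syn}}$ and the denominator, leaving precisely the claimed expression. I expect no genuine obstacle here, as the result is essentially a corollary of the lemma; the only point requiring care is the cancellation of $\sum_i u_{0,i}$, which presumes this common factor is nonzero (guaranteed by considering a nontrivial step in the step-input setup). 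One should also note that the legitimacy of writing $q_{\mathrm{r},\mathrm{ss},i} = \hat{c}_i(0)\,\omega_{\mathrm{syn}}$ rests on $\hat{c}_i(0)$ being a well-defined dc gain, a fact already established within the proof of Lemma~\ref{lem:syn-fre}.
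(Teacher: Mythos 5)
Your proposal is correct and follows essentially the same route as the paper's proof: invoke Lemma~\ref{lem:syn-fre} to get $\omega_{\mathrm{ss},i}=\omega_{\mathrm{syn}}$ together with the identity $\sum_{i=1}^n u_{0,i}=\omega_{\mathrm{syn}}\sum_{i=1}^n\bigl(d_i+r_{\mathrm{t},i}^{-1}-\hat{c}_i(0)\bigr)$, then substitute both into the definition \eqref{eq:ES} and cancel the common factor. Your added remarks on the nontriviality of $\sum_{i=1}^n u_{0,i}$ and the well-definedness of the dc gain are sensible but do not change the argument.
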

\begin{proof}
It follows directly from Lemma~\ref{lem:syn-fre} that $\omega_{\mathrm{ss},i}=\omega_{\mathrm{syn}}$ and $\sum_{i=1}^n  u_{0,i}=\omega_{\mathrm{syn}} \sum_{i=1}^n \left( d_i + {r_{\mathrm{t},i}^{-1} - \hat{c}_i(0)} \right)$. Plugging these two equations to the definition of ES in \eqref{eq:ES} yields the desired result.
\end{proof}

\subsubsection{Power Fluctuations and Measurement Noise}
%State that here will be two statements here. First showing that under assumptions performance can be quantified by summing the H2 norm of the scalar transfer functions $h_{\omega,k},h_{\textrm{p},k}$. The second will give a parametric expression for the H2 norm of an arbitrary scalar tf with order 4 or less, which will be used to assess various controller scenarios
We seek to characterize the effect of power fluctuations and frequency measurement noise on the frequency variance, i.e., the $\mathcal{H}_2$ norm of the system $\hat{T}_{\omega\mathrm{dn}}$.

We first show that the squared $\mathcal{H}_2$ norm of $\hat{T}_{\omega\mathrm{dn}}$ is a weighted sum of the squared $\mathcal{H}_2$ norm of each $\hat{h}_{\mathrm{p},k}$ and $\hat{h}_{\omega,k}$ in the diagonalized system \eqref{eq:T-diag}.

\begin{thm}[Frequency variance]\label{thm:h2-sum}
Define $\Gamma := V^T F^{-1} V$. If Assumptions~\ref{ass:proportion} and \ref{ass:noise} hold, then
 \begin{align*}
    \|\hat{T}_{\omega\mathrm{dn} }\|_{\mathcal{H}_2}^2 = \sum_{k=1}^n\Gamma_{kk}\left(\kappa_\mathrm{p}^2 \|\hat{h}_{\mathrm{p},k}\|_{\mathcal{H}_2}^2 + \kappa_\omega^2 \|\hat{h}_{\omega,k}\|_{\mathcal{H}_2}^2\right)\;.
\end{align*}
\end{thm}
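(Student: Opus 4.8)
The plan is to exploit the horizontal-block structure of $\hat T_{\omega\mathrm{dn}}=[\hat T_{\omega\mathrm{d}}\ \hat T_{\omega\mathrm{n}}]$ together with the diagonalization \eqref{eq:T-diag} to collapse a matrix-valued $\mathcal H_2$ norm into a diagonally-weighted sum of scalar $\mathcal H_2$ norms. First I would observe that, since $\hat T_{\omega\mathrm{dn}}$ is a horizontal stack of two $n\times n$ blocks, the product $\hat T_{\omega\mathrm{dn}}(\boldsymbol{j\omega})^\ast\hat T_{\omega\mathrm{dn}}(\boldsymbol{j\omega})$ has the two diagonal blocks $\hat T_{\omega\mathrm{d}}^\ast\hat T_{\omega\mathrm{d}}$ and $\hat T_{\omega\mathrm{n}}^\ast\hat T_{\omega\mathrm{n}}$, so its trace is $\tr{\hat T_{\omega\mathrm{d}}^\ast\hat T_{\omega\mathrm{d}}}+\tr{\hat T_{\omega\mathrm{n}}^\ast\hat T_{\omega\mathrm{n}}}$. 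By definition \eqref{eq:h2_def_E} this gives the additive split $\|\hat T_{\omega\mathrm{dn}}\|_{\mathcal H_2}^2=\|\hat T_{\omega\mathrm{d}}\|_{\mathcal H_2}^2+\|\hat T_{\omega\mathrm{n}}\|_{\mathcal H_2}^2$, so it suffices to handle the two channels separately and add.

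Next I would substitute the proportional weightings of Assumption~\ref{ass:noise} into \eqref{eq:Td} and into the expression for $\hat T_{\omega\mathrm{n}}$ in \eqref{eq:T-diag}. The key simplification is that, because $F$ is diagonal (hence all its powers commute), each weighting cancels its adjacent square-root factor: $F^{-1/2}\hat W_\mathrm{p}(s)=\kappa_\mathrm{p}I_n$ and $F^{1/2}\hat W_\omega(s)=\kappa_\omega I_n$. This yields the matched outer forms $\hat T_{\omega\mathrm{d}}(s)=\kappa_\mathrm{p}F^{-1/2}V\hat H_\mathrm{p}(s)V^T$ and $\hat T_{\omega\mathrm{n}}(s)=\kappa_\omega F^{-1/2}V\hat H_\omega(s)V^T$, so a single computation will serve both.

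The heart of the argument is a trace manipulation on the representative channel, say $\hat T_{\omega\mathrm{d}}$. Since $V$ and $F$ are real, $\hat T_{\omega\mathrm{d}}^\ast=\kappa_\mathrm{p}V\hat H_\mathrm{p}^\ast V^T F^{-1/2}$, and combining the $F^{-1/2}$ factors, using the cyclic property of the trace, and invoking $V^TV=I_n$ gives $\tr{\hat T_{\omega\mathrm{d}}^\ast\hat T_{\omega\mathrm{d}}}=\kappa_\mathrm{p}^2\tr{\hat H_\mathrm{p}^\ast(V^TF^{-1}V)\hat H_\mathrm{p}}=\kappa_\mathrm{p}^2\tr{\hat H_\mathrm{p}^\ast\Gamma\hat H_\mathrm{p}}$. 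Because $\hat H_\mathrm{p}=\diag{\hat h_{\mathrm{p},k}}$ is diagonal, only the diagonal entries of $\Gamma$ enter this trace, so $\tr{\hat H_\mathrm{p}^\ast\Gamma\hat H_\mathrm{p}}=\sum_{k=1}^n\Gamma_{kk}|\hat h_{\mathrm{p},k}(\boldsymbol{j\omega})|^2$. This pointwise-in-$\boldsymbol\omega$ identity is the crucial step: it is precisely the diagonality of $\hat H_\mathrm{p}$ (and, for the other channel, $\hat H_\omega$) that annihilates the off-diagonal coupling introduced by $F^{-1}$ through $\Gamma$.

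Finally I would integrate over $\boldsymbol\omega$ and apply \eqref{eq:h2_def_E} channel by channel, identifying $\frac1{2\pi}\int_{-\infty}^\infty|\hat h_{\mathrm{p},k}(\boldsymbol{j\omega})|^2\,\mathrm d\boldsymbol\omega=\|\hat h_{\mathrm{p},k}\|_{\mathcal H_2}^2$, to obtain $\|\hat T_{\omega\mathrm{d}}\|_{\mathcal H_2}^2=\kappa_\mathrm{p}^2\sum_k\Gamma_{kk}\|\hat h_{\mathrm{p},k}\|_{\mathcal H_2}^2$. Repeating the identical computation for the noise channel gives $\|\hat T_{\omega\mathrm{n}}\|_{\mathcal H_2}^2=\kappa_\omega^2\sum_k\Gamma_{kk}\|\hat h_{\omega,k}\|_{\mathcal H_2}^2$, and adding the two through the split from the first step yields the claim. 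I do not anticipate a genuine obstacle here; the only care needed is bookkeeping — correctly tracking conjugate transposes (trivial since $V,F$ are real and $\Gamma$ is symmetric), and being explicit that the diagonality of $\hat H_\mathrm{p}$ and $\hat H_\omega$ is exactly what reduces $\Gamma$ to its diagonal. Stability of each scalar $\hat h_{\mathrm{p},k},\hat h_{\omega,k}$, required to make \eqref{eq:h2_def_E} finite, is inherited from the standing internal-stability assumption.
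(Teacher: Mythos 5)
Your proposal is correct and follows essentially the same route as the paper's own proof: the additive split of $\|\hat{T}_{\omega\mathrm{dn}}\|_{\mathcal{H}_2}^2$ into the two channels, cancellation of the weightings $\hat{W}_\mathrm{p}$ and $\hat{W}_\omega$ against the adjacent $F^{\mp 1/2}$ factors, the cyclic-trace reduction to $\tr{\hat{H}^\ast \Gamma \hat{H}}$, and the use of diagonality of $\hat{H}_\mathrm{p}$, $\hat{H}_\omega$ to keep only the $\Gamma_{kk}$ terms before integrating. The only difference is presentational: you make explicit the block structure of $\hat{T}_{\omega\mathrm{dn}}^\ast\hat{T}_{\omega\mathrm{dn}}$ and the stability bookkeeping, which the paper leaves implicit.
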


\begin{proof}
% The $\mathcal{H}_2$ norm of the system $T$ when the inputs are uncorrelated stochastic white noise $d_\mathrm{p} = K_\mathrm{p} w_\mathrm{p}$ and $n_\omega = K_\omega w_\omega$ is equal to the sum of the $\mathcal{L}_2$ norms of frequency deviations outputs under $2n$ experiments where in each of the first $n$ experiments the input is $p_\mathrm{in} = K_\mathrm{p} \boldsymbol{\delta}(t) e_i$ and in each of the second $n$ experiments the input is $n_\omega = K_\omega \boldsymbol{\delta}(t) e_i$ with $\boldsymbol{\delta}(t)$ being the unit impulse function and $e_i \in \real^n $ being the vector with a one in the $i$th row and a zero in every other row~\cite{g2015tran}. It is known that the inverse Laplace transform of each entry in the transfer function matrix is the unit impulse response from one input to one output channel.

%The first step below is not by definition.
It follows from \eqref{eq:closed-loop} and \eqref{eq:h2_def_E} that
\begin{align*}
    \|\hat{T}_{\omega\mathrm{dn} }\|_{\mathcal{H}_2}^2 \!=&\ \frac{1}{2\pi}\int_{-\infty{}}^\infty \tr{\hat{T}_{\omega\mathrm{d}} (\boldsymbol{j\omega})^\ast \hat{T}_{\omega\mathrm{d}} (\boldsymbol{j\omega}) }\,\mathrm{d}\boldsymbol{\omega} \nonumber\\&+ \frac{1}{2\pi}\int_{-\infty{}}^\infty \tr{ \hat{T}_{ \omega\mathrm{n}} (\boldsymbol{j\omega})^\ast \hat{T}_{ \omega\mathrm{n}} (\boldsymbol{j\omega})}\,\mathrm{d}\boldsymbol{\omega}\\=:&\ \|\hat{T}_{ \omega\mathrm{d}}\|^2_{\mathcal{H}_2}+\|\hat{T}_{ \omega\mathrm{n}}\|^2_{\mathcal{H}_2}.
\end{align*}
We now compute $\|\hat{T}_{ \omega\mathrm{d}}\|^2_{\mathcal{H}_2}$. Using \eqref{eq:Td} and the fact that $\hat{W}_\mathrm{p} (s) = \kappa_\mathrm{p} F^{\frac{1}{2}}$ by Assumption \ref{ass:noise}, we get $\hat{T}_{ \omega\mathrm{d}}(s)=\kappa{}_\mathrm{p}F^{-\frac{1}{2}} V \hat{H}_\mathrm{p}(s) V^T$.
% \begin{align*}
% \hat{T}_{\omega \mathrm{d}}(s)
% %= F^{-\frac{1}{2}} V H_\mathrm{p}(t) V^T F^{-\frac{1}{2}} K_\mathrm{p}
% = \kappa{}_\mathrm{p}F^{-\frac{1}{2}} V \hat{H}_\mathrm{p}(s) V^T.
% \end{align*}
Therefore,
\begin{equation*}
\hat{T}_{\omega\mathrm{d}} (\boldsymbol{j\omega})^\ast \hat{T}_{\omega\mathrm{d}} (\boldsymbol{j\omega}) =\kappa{}_\mathrm{p}^2 V \hat{H}_\mathrm{p}(\boldsymbol{j\omega})^\ast V^T F^{-1} V \hat{H}_\mathrm{p}(\boldsymbol{j\omega}) V^T.
\end{equation*}
Using the cyclic property of the trace, this implies that
\begin{equation*}
\tr{\hat{T}_{\omega\mathrm{d}} (\boldsymbol{j\omega})^\ast \hat{T}_{\omega\mathrm{d}} (\boldsymbol{j\omega}) }=\kappa{}_\mathrm{p}^2 \tr{\hat{H}_\mathrm{p}(\boldsymbol{j\omega})^\ast \Gamma{} \hat{H}_\mathrm{p}(\boldsymbol{j\omega}) },
\end{equation*}
where $\Gamma:=V^TF^{-1}V$. Therefore, it follows that
\begin{align*}
    \|\hat{T}_{\omega\mathrm{d}}\|^2_{\mathcal{H}_2}&=\frac{1}{2\pi}\int_{-\infty{}}^\infty{}\kappa{}_\mathrm{p}^2 \tr{\hat{H}_\mathrm{p}(\boldsymbol{j\omega})^* \Gamma{} \hat{H}_\mathrm{p}(\boldsymbol{j\omega}) }\,\mathrm{d}\boldsymbol{\omega}\\
    = \sum_{k=1}^n \frac{\kappa_\mathrm{p}^2\Gamma_{kk}}{2\pi}&\int_{-\infty}^\infty \left|\hat{h}_{\mathrm{p},k}(\boldsymbol{j\omega})\right|^2\, \mathrm{d}\boldsymbol{\omega} = \kappa_\mathrm{p}^2 \sum_{k=1}^n \Gamma_{kk}\|\hat{h}_{\mathrm{p},k}\|_{\mathcal{H}_2}^2\;.
\end{align*}
The result follows from a similar argument on $\|\hat{T}_{\omega\mathrm{n}}\|^2_{\mathcal{H}_2}$.
\end{proof}
%Theorem~\ref{thm:h2-sum} shows that the squared $\mathcal{H}_2$ norm of the system $T$ with power disturbances and measurement noise introduced is a weighted sum of the squared $\mathcal{H}_2$ norm of each $h_{\mathrm{p},k}$ and $h_{\omega,k}$.
Theorem \ref{thm:h2-sum} allows us to compute the $\mathcal{H}_2$ norm of $\hat{T}_{\omega\mathrm{dn}}$ by means of computing the norms of a set of simple scalar transfer functions.
However, for different controllers, the transfer functions $\hat{h}_{\mathrm{p},k}$ and $\hat{h}_{\omega,k}$ will change. Since in all the cases these transfer functions are of fourth-order or lower, the following lemma will suffice for the purpose of our comparison.
% Given any scalar transfer function, we actually can compute the $\mathcal{H}_2$ norm of it by solving a Lyapunov equation built upon its realization. That is how the next lemma provides the squared $\mathcal{H}_2$ norm of an arbitrary scalar transfer function of no more than fourth order with a generic expression which will be retrieved a lot for assessing various control laws.

\begin{lem}[$\mathcal{H}_2$ norm of a fourth-order transfer function]\label{lm:h2-4th}
Let
\[
\hat{h}(s)=\frac{b_3s^3+b_2s^2+b_1s+b_0}{s^4+a_3s^3+a_2s^2+a_1s+a_0}+b_4
\]
be a stable transfer function.
% \[
% \frac{(d_1 - d_2d_3)n_0^2- d_0\left(d_3n_1^2 + d_1n_2^2 + (d_1d_2 - d_0d_3 )n_3^2    - 2d_1n_1n_3 + d_3n_0n_2\right) }{2d_0(d_1^2 - d_2d_1d_3 + d_0d_3^2)},
% \]
% \[
% \frac{(a_1 - a_2a_3)b_0^2- a_0\left(a_3b_1^2 + a_1b_2^2 + (a_1a_2 - a_0a_3 )b_3^2    - 2a_1b_1b_3 + a_3b_0b_2\right) }{2a_0(a_1^2 - a_2a_1a_3 + a_0a_3^2)},
% \]
If $b_4=0$, then
\begin{equation}\label{eq:4throderh2}
\|\hat{h}\|_{\mathcal{H}_2}^2 =
\displaystyle{\frac{\zeta_0 b_0^2+\zeta_1 b_1^2+\zeta_2 b_2^2+\zeta_3 b_3^2+\zeta_4}{2 a_0 \left(a_1 a_2 a_3 -a_1^2 -a_0 a_3^2\right)}}\;,
\end{equation}
where
\begin{align}\label{eq:zeta}
    \zeta_0:=&\ a_2 a_3-a_1 \,,\qquad \zeta_1:=\ a_0 a_3\,,\qquad \zeta_2:=\ a_0a_1\,,\\
    \zeta_3:=&\ a_0a_1 a_2 - a_0^2 a_3\,,\qquad
    \zeta_4:=-2a_0(a_1 b_1 b_3  +a_3 b_0 b_2)\,.\nonumber
\end{align}
Otherwise, $\|\hat{h}\|_{\mathcal{H}_2}^2 = \infty$.
\end{lem}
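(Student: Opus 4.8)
The plan is to reduce the computation to a finite linear-algebra problem via a state-space realization and the controllability Gramian, exploiting a parity (checkerboard) structure to keep the bookkeeping tractable. First I would dispose of the improper case: when $b_4\neq0$, the strictly proper part vanishes at infinity, so $\lim_{\boldsymbol\omega\to\infty}|\hat h(\boldsymbol{j\omega})|=|b_4|>0$, the integrand in \eqref{eq:h2_def_E} fails to decay, and the integral diverges, giving $\|\hat h\|_{\mathcal H_2}^2=\infty$. Thus only $b_4=0$ requires work. In that case Parseval's theorem gives
\[
\|\hat h\|_{\mathcal H_2}^2 = \frac{1}{2\pi}\int_{-\infty}^\infty |\hat h(\boldsymbol{j\omega})|^2\,\mathrm d\boldsymbol\omega,
\]
and I would realize $\hat h$ in controllable canonical form $(A,B,C,0)$, with $A$ the companion matrix of $s^4+a_3s^3+a_2s^2+a_1s+a_0$, $B=e_4$, and $C=\begin{bmatrix} b_0 & b_1 & b_2 & b_3\end{bmatrix}$.

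The standard $\mathcal H_2$ identity then yields $\|\hat h\|_{\mathcal H_2}^2 = CPC^\top$, where $P\succeq0$ is the controllability Gramian solving the Lyapunov equation $AP+PA^\top+BB^\top=0$. The key structural observation is that $P$ has a checkerboard sparsity pattern, $P_{ij}=0$ whenever $i+j$ is odd. This follows because in this realization the transfer function from the input to the $k$-th state equals $s^{k-1}/A(s)$, so up to a power of $\boldsymbol j$ the entry $P_{ij}$ is proportional to $\int_{-\infty}^\infty \boldsymbol\omega^{\,i+j-2}/|A(\boldsymbol{j\omega})|^2\,\mathrm d\boldsymbol\omega$, the integral of an \emph{odd} function when $i+j$ is odd. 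Consequently $CPC^\top$ retains only the squares $b_0^2,\dots,b_3^2$ and the two cross terms $b_0b_2$ and $b_1b_3$, which already explains the form of \eqref{eq:4throderh2}: the only surviving unknowns are $P_{11},P_{22},P_{33},P_{44},P_{13},P_{24}$.

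Writing the Lyapunov equation out entrywise and discarding the vanishing entries leaves a small linear system in these six quantities. Solving it produces a common denominator $2a_0(a_1a_2a_3-a_1^2-a_0a_3^2)$ — exactly the Routh--Hurwitz determinant of the quartic, which is positive precisely when $\hat h$ is stable — with numerators matching $\zeta_0,\dots,\zeta_3$ on the diagonal and $P_{13},P_{24}$ supplying the $\zeta_4$ cross term. Substituting into $CPC^\top$ and collecting powers of the $b_i$ then gives \eqref{eq:4throderh2} with the coefficients \eqref{eq:zeta}.

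The conceptual steps are routine; the real effort is the symbolic solution of the $4\times4$ Lyapunov system and the careful collection of terms needed to land on exactly the $\zeta_i$ in \eqref{eq:zeta}, so I expect that to be the main obstacle. An alternative that avoids the Gramian altogether is to evaluate the Parseval integral directly against the classical tabulated integral $I_4$ for the ratio of a cubic numerator to a quartic denominator, and match coefficients; either route reduces the proposition to verified bookkeeping.
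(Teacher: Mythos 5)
Your proposal is correct, and it rests on the same core machinery as the paper's proof --- a canonical state-space realization together with the Lyapunov-equation characterization of the $\mathcal{H}_2$ norm --- but you work with the dual realization, and that is a genuine difference in how the algebra is organized. The paper uses the \emph{observable} canonical form: the numerator coefficients sit in $B$, the output is $C=\begin{bmatrix}0&0&0&1\end{bmatrix}$, the Gramian is full and $b$-dependent, and the proof is an elimination that isolates the single entry $x_{44}$; the case $b_4\neq 0$ is dispatched by the state-space fact that $D\neq 0$ forces an infinite $\mathcal{H}_2$ norm. You instead put the $b_i$'s in $C$ with $B=e_4$, so your Gramian depends only on the denominator, and your parity argument (entries $P_{ij}$ with $i+j$ odd are integrals of odd functions, hence vanish) is sound and reduces the Lyapunov system to six numerator-independent unknowns. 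This buys an a priori explanation of why \eqref{eq:4throderh2} contains only the squares $b_i^2$ and the two cross terms $b_0b_2$, $b_1b_3$ --- a structure the paper's proof only reveals after the fact --- and a Gramian reusable for every numerator, which is exactly how the classical $I_4$ tables you invoke are built; the cost is solving for six entries instead of one. One small correction: the Hurwitz determinant $a_1a_2a_3-a_1^2-a_0a_3^2$ is positive \emph{whenever} $\hat{h}$ is stable, not ``precisely when'' (positivity of that single determinant is necessary, not sufficient, for stability); necessity is all you need, and it is also all the paper claims.
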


\begin{proof}
First recall that given any state-space realization of $\hat{h}(s)$, the $\mathcal{H}_2$ norm can be calculated by solving a particular Lyapunov equation. More specifically, suppose
\[
\Sigma_{\hat{h}(s)}=\left[\begin{array}{c|c}
    A & B \\\hline{}
    C & D
\end{array}\right],
\]
and let $X$ denote the solution to the Lyapunov equation
\begin{align}
AX+XA^T=-BB^T.\label{eq:lyp-x}
\end{align}
If $\hat{h}(s)$ is stable, then
\begin{align}
\|\hat{h}\|_{\mathcal{H}_2}^2=\begin{cases}
\infty{}&\text{if $D\neq{}0$,}\\
CXC^T&\text{otherwise}.
\end{cases}\label{eq:h2-2cases}
\end{align}

Consider the observable canonical form of $\hat{h}(s)$ given by
\begin{align}
\Sigma_{\hat{h}(s)}=
\left[\begin{array}{cccc|c}
    0&0&0&-a_0&b_0\\
    1&0&0&-a_1&b_1\\
    0&1&0&-a_2&b_2\\
    0&0&1&-a_3&b_3\\\hline
    0&0&0 & 1&b_4
\end{array}\right].\label{eq:real-h}
\end{align}
% \[
% \Sigma_{g(s)}=
% \left[\begin{array}{cccc|c}
%     0&1&0&0&0\\
%     0&0&1&0&0\\
%     0&0&0&1&0\\
%     -d_0&-d_1&-d_2&-d_3&1\\\hline
%     n_0&n_1&n_2 & n_3&c
% \end{array}\right].
% \]
Since $D=b_4$, it is trivial to see from \eqref{eq:h2-2cases} that if $b_4\neq0$ then $\|\hat{h}\|_{\mathcal{H}_2}^2 = \infty$. Hence, in the rest of the proof, we assume $b_4=0$. We will now solve the Lyapunov equation analytically for the realization \eqref{eq:real-h}. $X$ must be symmetric and thus can be parameterized as
\begin{equation} \label{eq:grammian-4th}
X=\big[x_{ij}\big]\in\real^{4\times4}\;, \quad\text{with}\quad x_{ij}=x_{ji}.
% 	X = \begin{bmatrix}
% 	x_{11} & x_{12} & x_{13} & x_{14}\\
%     x_{12} & x_{22} & x_{23} & x_{24}\\
%     x_{13} & x_{23} & x_{33} & x_{34}\\
%     x_{14} & x_{24} & x_{34} & x_{44}
% 	\end{bmatrix}\;.
\end{equation}
Since it is easy to see that $C X C^T = x_{44}$, the problem becomes solving for $x_{44}$.
Substituting \eqref{eq:real-h} and \eqref{eq:grammian-4th} into \eqref{eq:lyp-x} yields the following equations
\begin{subequations}\label{eq:lyap-group}
    \begin{align}
    2 a_0 x_{14}=&\ b_0^2\;,\label{eq:lyap-1}\\
    %x_{11} - a_1 x_{14} - a_0 x_{24} =& -b_0 b_1\;,\label{eq:lyap-2}\\
    x_{12} - a_2 x_{14} - a_0 x_{34} =& -b_0 b_2\;,\label{eq:lyap-3}\\
    %x_{13} - a_3 x_{14} - a_0 x_{44} =& -b_0 b_3\;,\label{eq:lyap-4}\\
     2(x_{12} - a_1 x_{24})=&-b_1^2\;,\label{eq:lyap-5}\\
     %x_{22} - a_2 x_{24} + x_{13} - a_1 x_{34} =& -b_1 b_2\;,\label{eq:lyap-6}\\
     x_{23} - a_3 x_{24} + x_{14} - a_1 x_{44} =& -b_1 b_3\;,\label{eq:lyap-7}\\
     2(x_{23} - a_2 x_{34})=&-b_2^2\;,\label{eq:lyap-8}\\
     %x_{33} - a_3 x_{34} + x_{24} - a_2 x_{44} =& -b_2 b_3\;,\label{eq:lyap-9}\\
     2(x_{34} - a_3 x_{44})=&-b_3^2\;.\label{eq:lyap-10}
    \end{align}
\end{subequations}
\ifthenelse{\boolean{archive}}{
Since $\hat{h}(s)$ is stable, by the Routh-Hurwitz criterion $a_0\neq0$, and therefore \eqref{eq:lyap-1} yields
\begin{align}
    x_{14}=\frac{b_0^2}{2 a_0}\;.\label{eq:x14}
\end{align}
Applying \eqref{eq:x14} to \eqref{eq:lyap-3} and \eqref{eq:lyap-7} gives
\begin{subequations}
\begin{align}
    x_{12}  =& a_0 x_{34} +  \frac{a_2b_0^2}{2 a_0}-b_0 b_2\;,\label{eq:x12 in x34}\\
    x_{23} - a_3 x_{24}   =& a_1 x_{44}-\frac{b_0^2}{2 a_0}-b_1 b_3\;.\label{eq:x23-24 in x44}
\end{align}
\end{subequations}
We now parameterize unknowns in $x_{44}$.
Equation \eqref{eq:lyap-10} yields
\begin{align}
    x_{34} =& a_3 x_{44}-\frac{b_3^2}{2}\;.\label{eq:x34 in x44}
\end{align}
Substituting \eqref{eq:x34 in x44} into \eqref{eq:lyap-8} and \eqref{eq:x12 in x34} gives
\begin{subequations}
\begin{align}
    x_{23} =&a_2 a_3 x_{44}-\frac{a_2 b_3^2+b_2^2}{2}\;,\label{eq:x23 in x44}\\
    x_{12}  =& a_0 a_3 x_{44}-\frac{a_0b_3^2}{2} +  \frac{a_2b_0^2}{2 a_0}-b_0 b_2\;,\label{eq:x12 in x44}
\end{align}
\end{subequations}
respectively.
Plugging \eqref{eq:x12 in x44} into \eqref{eq:lyap-5} leads to
\begin{align}
      a_1 x_{24}=&a_0 a_3 x_{44}-\frac{a_0b_3^2}{2} +  \frac{a_2b_0^2}{2 a_0}-b_0 b_2 +\frac{b_1^2}{2}\;,\label{eq:x24 in x44}
\end{align}
Combining \eqref{eq:x23-24 in x44}, \eqref{eq:x23 in x44}, and \eqref{eq:x24 in x44}, we can solve for $x_{44}$ as}
{Through standard algebra, we can solve for $x_{44}$ as}
% \begin{align}
% &  x_{44} \\=&  \frac{(a_1 - a_2 a_3) b_0^2-a_0 a_3b_1^2-a_0a_1 b_2^2+(a_0^2 a_3 -a_0a_1 a_2) b_3^2+2a_0(a_1 b_1 b_3  +a_3 b_0 b_2)}{2 a_0 \left(a_1^2 +a_0 a_3^2-a_1 a_2 a_3\right)}\;,
% \end{align}
\[
x_{44} =
\displaystyle{\frac{\zeta_0 b_0^2+\zeta_1 b_1^2+\zeta_2 b_2^2+\zeta_3 b_3^2+\zeta_4}{2 a_0 \left(a_1 a_2 a_3-a_1^2 -a_0 a_3^2\right)}}
\]
with $\zeta_0, \zeta_1, \zeta_2, \zeta_3$, and $\zeta_4$ defined by \eqref{eq:zeta},
which concludes the proof; the denominator is guaranteed to be nonzero by the Routh-Hurwitz criterion.
\end{proof}

\begin{rem}[$\mathcal{H}_2$ norm of a transfer function lower than fourth-order]\label{rem:h2-3rd}
Although Lemma~\ref{lm:h2-4th} is stated for a fourth-order transfer function, it can also be used to find the $\mathcal{H}_2$ norm of third-, second-, and first-order transfer functions by considering appropriate limits. For example, setting $a_0=b_0=\epsilon{}$ and considering the limit $\epsilon\to 0$, \eqref{eq:4throderh2} gives the $\mathcal{H}_2$ norm of a generic third-order transfer function. This process shows that given a stable transfer function $\hat{h}(s)$, if $b_4=0$ and:
\begin{itemize}
    \item (third-order transfer function) $a_0=b_0=0$, then
    \[
    \|\hat{h}\|_{\mathcal{H}_2}^2 = \frac{a_3 b_1^2+a_1 b_2^2+a_1 a_2b_3^2-2 a_1 b_1 b_3}{2 a_1 (a_2 a_3- a_1)};
    \]
    \item (second-order transfer function) $a_0=b_0=a_1=b_1=0$, then
    \[
    \|\hat{h}\|_{\mathcal{H}_2}^2 = \frac{b_2^2+a_2b_3^2}{2 a_2a_3};
    \]
    \item (first-order transfer function) $a_0=b_0=a_1=b_1=a_2=b_2=0$, then
    \[
    \|\hat{h}\|_{\mathcal{H}_2}^2 =\frac{b_3^2}{2 a_3};
    \]
\end{itemize}
otherwise $\|\hat{h}\|_{\mathcal{H}_2}^2=\infty{}$.
\end{rem}

\begin{rem}[Well-definedness by the stability]
Note that the stability of $\hat{h}(s)$ guarantees that the denominators in all the above $\mathcal{H}_2$ norm expressions are nonzero by the Routh-Hurwitz stability criterion.
\end{rem}

\subsubsection{Synchronization Cost}
%Briefly explain what the first theorem means. Observe that particular entries are equivalent to H2 norms, and can therefore also be found paramterically from the above lemma
The computation of the synchronization cost defined in \eqref{eq:sync_cost} for the system $\hat{T}_{ \omega\mathrm{p}}$ in the absence of inverter control can be found in \cite{pm2019preprint}. Taking this into account, we can get corresponding results for the system with any control law readily.
\begin{lem}[Synchronization cost]\label{lem:syncost-generic}
Let Assumptions~\ref{ass:proportion} and \ref{ass:step} hold. Define $\tilde{u}_0 := V_{\bot}^T F^{-\frac{1}{2}} u_0$ and $\tilde{\Gamma} := V_{\bot}^T F^{-1} V_{\bot}$. Then the synchronization cost of the system $\hat{T}_{\omega\mathrm{p}}$ is given by
\begin{align*}
\|\tilde{\omega}\|_2^2 = \tilde{u}_0^T \left(\tilde{\Gamma}\circ \tilde{H}\right) \tilde{u}_0,
\end{align*}
where $\circ$ denotes the Hadamard product and $\tilde{H}\in\real^{(n-1) \times (n-1)}$ is the matrix with entries
\begin{equation*}
    \tilde{H}_{kl} := \int_0^\infty h_{\mathrm{u},k}(t) h_{\mathrm{u},l}(t)\ \mathrm{d}t\,,\quad\forall k,l \in \{1,\dots, n-1\}
\end{equation*}
with $\hat{h}_{\mathrm{u},k}(s) := \hat{h}_{\mathrm{p},{k+1},\mathrm{T}}(s)/s$ and $\hat{h}_{\mathrm{p},k,\mathrm{T}}(s)$ being a specified case of the transfer function $\hat{h}_{\mathrm{p},k}(s)$ defined in \eqref{eq:hp-s}, i.e., when the turbine is triggered.
% \enrique{Is the k+1 above corrkect? I thought we fixed the notation to match? Yan's response: Yes, the $k+1$ is correct. In \cite{pm2019preprint}, the eigenvalues are counted from $\lambda_0$ to $\lambda_{n-1}$, while in this paper they are counted from $\lambda_1$ to $\lambda_n$. That is why we have a $+1$ here. For example, the $1,1$-th element in $\tilde{H}_{11}$ depends on $\hat{h}_{\mathrm{p},\textcolor{red}{2},\mathrm{T}}(s)/s$ (i.e. 2nd mode).}
% \begin{align*}
% \hat{h}_{\mathrm{u},k}(s) := \frac{1}{s} \hat{h}_{\mathrm{p},k+1,\mathrm{T}}(s)\;.
% \end{align*}
\end{lem}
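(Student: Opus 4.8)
The plan is to diagonalize the step response exactly as in the Diagonalization subsection, to identify the inertia-weighted average $\bar\omega\mathbbold{1}_n$ with the projection onto the first modal direction, and then to recognize that $\tilde\omega$ is carried entirely by the remaining $n-1$ modes. Once this is established the cost decouples mode-by-mode and the cross terms assemble into the claimed Hadamard product. Since the uncontrolled case is already treated in \cite{pm2019preprint}, the only structural change is that the scalar channels $\hat h_{\mathrm p,k}$ now embed the control $\hat c_\mathrm o$ through \eqref{eq:hp-s}, so the same modal bookkeeping applies verbatim.

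First I would take Laplace transforms. Under Assumption~\ref{ass:step} the input is $\hat p_\mathrm{in}(s)=u_0/s$, and $\omega_\epsilon=0$ forces the turbine-triggered generator \eqref{eq:go-sw-tb}, so \eqref{eq:Tp} gives $\hat\omega(s)=F^{-\frac12}V\hat H_\mathrm{p}(s)V^TF^{-\frac12}u_0/s$ with $\hat H_\mathrm p=\diag{\hat h_{\mathrm p,k,\mathrm T}(s), k\in\{1,\dots,n\}}$. Introducing the modal coordinates $\hat\eta(s):=V^TF^{\frac12}\hat\omega(s)$ and using $V^TV=I_n$ diagonalizes the dynamics, giving the scalar relations $\hat\eta_k(s)=\hat h_{\mathrm p,k,\mathrm T}(s)\,(V^TF^{-\frac12}u_0)_k/s$ for each $k$ because $\hat H_\mathrm p$ is diagonal.

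The crux is the next step: showing that subtracting the inertia-weighted mean removes exactly the first mode. Under Assumption~\ref{ass:proportion} the inertia matrix satisfies $M=mF$, so $\bar\omega=(\mathbbold{1}_n^TF\omega)/(\mathbbold{1}_n^TF\mathbbold{1}_n)$, while the first column of $V$ is $v_1=(\sum_i f_i)^{-\frac12}F^{\frac12}\mathbbold{1}_n$. A short computation then shows $\bar\omega\mathbbold{1}_n=F^{-\frac12}v_1v_1^TF^{\frac12}\omega$, i.e. it is the projection of $\omega$ onto the $v_1$-direction in the $F^{\frac12}$-weighted coordinates. Hence $F^{\frac12}\tilde\omega=V_\bot\eta_\bot$ with $\eta_\bot:=(\eta_2,\dots,\eta_n)^T$, so that $\tilde\omega=F^{-\frac12}V_\bot\eta_\bot$ is carried entirely by the modes $k\ge2$. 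Because these modes have $\lambda_k>0$ they are stable and the integrals converge, consistent with synchronization (Lemma~\ref{lem:syn-fre}); the nondecaying $v_1$-component is precisely what the subtraction removes.

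It then remains to assemble the cost. Substituting $\tilde\omega=F^{-\frac12}V_\bot\eta_\bot$ into \eqref{eq:sync_cost} gives $\tilde\omega^T\tilde\omega=\eta_\bot^TV_\bot^TF^{-1}V_\bot\eta_\bot=\eta_\bot^T\tilde\Gamma\eta_\bot$. Reindexing $k\mapsto k+1$, the scalar relation above reads $\hat\eta_{\bot,k}(s)=\hat h_{\mathrm u,k}(s)\,\tilde u_{0,k}$ with $\hat h_{\mathrm u,k}(s)=\hat h_{\mathrm p,k+1,\mathrm T}(s)/s$ and $\tilde u_0=V_\bot^TF^{-\frac12}u_0$, so in the time domain $\eta_{\bot,k}(t)=\tilde u_{0,k}\,h_{\mathrm u,k}(t)$. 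Therefore $\int_0^\infty\eta_{\bot,k}(t)\eta_{\bot,l}(t)\,\mathrm dt=\tilde u_{0,k}\tilde u_{0,l}\tilde H_{kl}$, and summing over $k,l$ with the weights $\tilde\Gamma_{kl}$ yields $\|\tilde\omega\|_2^2=\sum_{k,l}\tilde\Gamma_{kl}\tilde H_{kl}\tilde u_{0,k}\tilde u_{0,l}=\tilde u_0^T(\tilde\Gamma\circ\tilde H)\tilde u_0$. The main obstacle is the identification in the crux step—aligning the inertia-weighted average with $v_1$ via $M=mF$—after which the remainder is routine index bookkeeping and recognizing the Hadamard structure.
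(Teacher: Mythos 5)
Your proof is correct and takes the route the paper itself intends: the paper's own ``proof'' is a one-line citation to Proposition 2 of \cite{pm2019preprint}, and your argument---diagonalizing the step response via Assumption~\ref{ass:proportion}, identifying the inertia-weighted mean with the $v_1$-projection (so that $F^{\frac{1}{2}}\tilde{\omega} = \left(I_n - v_1 v_1^T\right) F^{\frac{1}{2}}\omega = V_{\bot}\eta_{\bot}$), and assembling the modal cross terms into the Hadamard form---is exactly the extension of that proposition to the case where the controller enters through $\hat{h}_{\mathrm{p},k}$ in \eqref{eq:hp-s}. The details all check out, including the convergence point: for $\lambda_{k+1}>0$ one has $\hat{h}_{\mathrm{p},k+1,\mathrm{T}}(0)=0$, so the zero at the origin cancels the $1/s$ in $\hat{h}_{\mathrm{u},k}(s)$ and the integrals defining $\tilde{H}_{kl}$ are finite.
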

\begin{proof}
This is a direct extension of \cite[Proposition 2]{pm2019preprint}.
%which claims that when there is no additional control from inverters the synchronization cost of the system in Fig.~\ref{fig:block-diag} under Assumptions \ref{ass:proportion} and \ref{ass:step} is given by,
% \begin{align*} \label{eq:syn-cost-o}
% \|\tilde{\omega}_\mathrm{SW}\|_2^2
% = z_0^T \tilde{Y}_\mathrm{SW} z_0\;,
% \end{align*}
% where
% \begin{equation*}
%     \tilde{Y}_\mathrm{SW} := \left[\tilde{\Gamma}_{kl}\langle \tilde{h}_{k,\mathrm{SW}}, \tilde{h}_{l,\mathrm{SW}} \rangle\right] \in \real^{(n-1) \times (n-1)}\;,
% \end{equation*}
% with
% \begin{align*}
%     \langle \tilde{h}_{k,\mathrm{SW}}, \tilde{h}_{l,\mathrm{SW}} \rangle &:= \int_0^\infty \tilde{h}_{k,\mathrm{SW}}(t) \tilde{h}_{l,\mathrm{SW}}(t) \mathrm{d}t\;,
% \end{align*}
% \begin{align*}
% \tilde{h}_{k,\mathrm{SW}}(t) =& \mathcal{L}^{-1} \left\{ \frac{1}{s} \hat{h}_{\mathrm{p},k+1,\mathrm{T,SW}}(s)\right\}\;.
% \end{align*}
% Simply replacing the transfer function $\hat{h}_{\mathrm{p},k+1,\mathrm{T,SW}}(s)$ by the one under any control law, i.e. $\hat{h}_{\mathrm{p},k+1,\mathrm{T}}(s)$ or $\hat{h}_{\mathrm{p},k+1,\mathrm{T}}(s)$ yields the desired result.
\end{proof}
Lemma~\ref{lem:syncost-generic} shows that the computation of the synchronization cost requires knowing the inner products $\tilde{H}_{kl}$. However, the general expressions of these inner products for an arbitrary combination of $k$ and $l$ are already too tedious to be useful in our analysis. Therefore, we will investigate instead  bounds on the synchronization cost in terms of the inner products $\tilde{H}_{kl}$ when $k=l$; which are exactly the $\mathcal{H}_2$ norms of transfer functions $\hat{h}_{\mathrm{u},k}(s)$.
%This lower bound actually addresses the question of to what extent the synchronization cost can be reduced by means of a control law.
\begin{lem}
[Bounds for Hadamard product]\label{lem:bounds-Had}
%If $A\in\real^{n\times{}n}$ is a symmetric matrix with eigenvalues arranged in non-decreasing order, i.e., $\lambda_1 \leq \lambda_2 \leq \ldots \leq \lambda_n$, then $\forall x, y\in\real^n$,
Let $P\in\real^{n\times{}n}$ be a symmetric matrix with minimum and maximum eigenvalues given by $\lambda_{\mathrm{min}}(P)$ and $\lambda_{\mathrm{max}}(P)$, respectively. Then $\forall x, y\in\real^n$,
\[
\lambda_{\mathrm{min}}(P)\sum_{k=1}^nx_k^2y_k^2\leq{}x^T\left(P\circ\left(yy^T\right)\right)x\leq{}\lambda_{\mathrm{max}}(P) \sum_{k=1}^nx_k^2y_k^2.
\]
\end{lem}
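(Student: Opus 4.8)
The plan is to reduce the Hadamard-product quadratic form to an ordinary quadratic form via a single substitution, after which the statement is nothing more than the Rayleigh quotient bound for a symmetric matrix. First I would introduce the componentwise (Hadamard) product of the two vectors, $z := x \circ y \in \real^n$ with entries $z_k = x_k y_k$. The key identity to establish is
\[
x^T\bigl(P\circ(yy^T)\bigr)x = z^T P z,
\]
which follows from a direct entrywise expansion: since $\bigl(P\circ(yy^T)\bigr)_{kl} = P_{kl}\, y_k y_l$, the left-hand side equals $\sum_{k,l} P_{kl}(x_k y_k)(x_l y_l) = \sum_{k,l} P_{kl} z_k z_l$, which is exactly $z^T P z$.

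Next I would observe that the weight appearing on both sides of the claimed inequality is precisely the squared Euclidean norm of $z$, namely $\sum_{k=1}^n x_k^2 y_k^2 = \sum_{k=1}^n z_k^2 = z^T z$. With these two identifications the statement to prove becomes $\lambda_{\mathrm{min}}(P)\, z^T z \le z^T P z \le \lambda_{\mathrm{max}}(P)\, z^T z$. This is the standard Rayleigh quotient bound, valid for every symmetric $P$ and every $z \in \real^n$; I would justify it by orthogonally diagonalizing $P = Q\, \mathrm{diag}(\mu_k)\, Q^T$, writing $w := Q^T z$, and bounding $z^T P z = \sum_k \mu_k w_k^2$ between $\lambda_{\mathrm{min}}(P)\sum_k w_k^2$ and $\lambda_{\mathrm{max}}(P)\sum_k w_k^2$, using $\|w\|^2 = \|z\|^2$ from orthogonality of $Q$. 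Since $z = x\circ y$ is a genuine vector in $\real^n$ for each choice of $x,y$, the bound applies to the specific $z$ arising here, which completes the argument.

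As for the main obstacle: there is none of real substance. The entire content lies in spotting the substitution $z = x\circ y$ that collapses the Hadamard structure $P\circ(yy^T)$ into the plain quadratic form $z^TPz$; the only step requiring a moment's care is verifying the entrywise identity cleanly, after which the two-sided eigenvalue estimate is textbook. I would therefore keep the write-up short, leading with the identity and then invoking the Rayleigh bound.
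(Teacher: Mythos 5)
Your proposal is correct and follows essentially the same route as the paper: both reduce the Hadamard quadratic form to an ordinary one via the substitution $w := x\circ y$ and then invoke the Rayleigh quotient bound for the symmetric matrix $P$. The only cosmetic difference is that you verify the key identity $x^T\left(P\circ\left(yy^T\right)\right)x = w^T P w$ by direct entrywise expansion, while the paper routes it through a trace identity; both are equivalent one-line computations.
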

\begin{proof}
First note that
\[
\begin{aligned}
x^T\left(P\circ\left(yy^T\right)\right)x&=\tr{P^T\left(x\circ y\right)\left(x\circ y\right)^T}\\
&=\left(x\circ y\right)^T P^T\left(x\circ y\right).
\end{aligned}
\]
Let $w:=x\circ y$. Since $P$ is symmetric, by Rayleigh \cite{Horn2012MA}
\[
\lambda_{\mathrm{min}}(P) w^Tw\leq{}x^T\left(P\circ\left(yy^T\right)\right)x\leq{}\lambda_{\mathrm{max}}(P) w^Tw.
\]
Observing that $w^Tw=\sum_{k=1}^nx_k^2y_k^2$ completes the proof.
\end{proof}

Lemma~\ref{lem:bounds-Had} implies the following bounds on the synchronization cost.

\begin{thm}[Bounds on synchronization cost]\label{thm:bound-cost}
Let Assumptions~\ref{ass:proportion} and \ref{ass:step} hold. Then the synchronization cost of the system $\hat{T}_{ \omega\mathrm{p}}$ is bounded by $\underline{\|\tilde{\omega}\|_2^2} \leq\|\tilde{\omega}\|_2^2 \leq \overline{\|\tilde{\omega}\|_2^2}$, where
\[
\underline{\|\tilde{\omega}\|_2^2}\!\!:=\!\!\frac{\sum_{k=1}^{n-1}\!\tilde{u}_{0,k}^2\| \hat{h}_{\mathrm{u},k} \|_{\mathcal{H}_2}^2}{\max_{i \in \mathcal{V}} \left(f_i \right)}\  \text{and}\  \overline{\|\tilde{\omega}\|_2^2} \!\!:=\!\!\frac{\sum_{k=1}^{n-1}\!\tilde{u}_{0,k}^2\!\| \hat{h}_{\mathrm{u},k} \|_{\mathcal{H}_2}^2}{\min_{i \in \mathcal{V}} \left(f_i \right)} .
\]
\end{thm}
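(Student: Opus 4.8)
The plan is to feed the exact synchronization-cost expression from Lemma~\ref{lem:syncost-generic}, namely $\|\tilde{\omega}\|_2^2 = \tilde{u}_0^T(\tilde{\Gamma}\circ\tilde{H})\tilde{u}_0$, into the Hadamard-product sandwich of Lemma~\ref{lem:bounds-Had}, and then to control the extreme eigenvalues of $\tilde{\Gamma}$ by exploiting that $V_{\bot}$ has orthonormal columns. The one subtlety to clear up is that Lemma~\ref{lem:bounds-Had} is stated for a \emph{rank-one} weight $yy^T$, whereas $\tilde{H}$ is a full Gram matrix; the resolution is to write $\tilde{H}$ as an integral of rank-one matrices and push the bound through the integral.

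Concretely, I would first collect the time-domain responses into the vector $h(t):=(h_{\mathrm{u},1}(t),\dots,h_{\mathrm{u},n-1}(t))^T$, so that the defining identity $\tilde{H}_{kl}=\int_0^\infty h_{\mathrm{u},k}(t)h_{\mathrm{u},l}(t)\,\mathrm{d}t$ becomes $\tilde{H}=\int_0^\infty h(t)h(t)^T\,\mathrm{d}t$. Since the Hadamard product is linear and commutes with integration, Lemma~\ref{lem:syncost-generic} rewrites as
\[
\|\tilde{\omega}\|_2^2 = \tilde{u}_0^T(\tilde{\Gamma}\circ\tilde{H})\tilde{u}_0 = \int_0^\infty \tilde{u}_0^T\big(\tilde{\Gamma}\circ(h(t)h(t)^T)\big)\tilde{u}_0\,\mathrm{d}t.
\]
Now the integrand has exactly the rank-one form required by Lemma~\ref{lem:bounds-Had}, so applying it for each fixed $t$ with $P=\tilde{\Gamma}$, $x=\tilde{u}_0$, and $y=h(t)$ sandwiches the integrand between $\lambda_{\mathrm{min}}(\tilde{\Gamma})\sum_k\tilde{u}_{0,k}^2 h_{\mathrm{u},k}(t)^2$ and $\lambda_{\mathrm{max}}(\tilde{\Gamma})\sum_k\tilde{u}_{0,k}^2 h_{\mathrm{u},k}(t)^2$. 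Integrating in $t$ and using Parseval, $\int_0^\infty h_{\mathrm{u},k}(t)^2\,\mathrm{d}t=\|\hat{h}_{\mathrm{u},k}\|_{\mathcal{H}_2}^2$, I obtain
\[
\lambda_{\mathrm{min}}(\tilde{\Gamma})\,S \le \|\tilde{\omega}\|_2^2 \le \lambda_{\mathrm{max}}(\tilde{\Gamma})\,S, \qquad S:=\sum_{k=1}^{n-1}\tilde{u}_{0,k}^2\|\hat{h}_{\mathrm{u},k}\|_{\mathcal{H}_2}^2.
\]

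The last step is to bound the spectrum of $\tilde{\Gamma}=V_{\bot}^T F^{-1}V_{\bot}$. Because $V$ is orthogonal, the columns of $V_{\bot}$ are orthonormal, so $V_{\bot}^T V_{\bot}=I_{n-1}$ and for any unit $w\in\real^{n-1}$ the vector $V_{\bot}w$ is a unit vector in $\real^n$; hence $w^T\tilde{\Gamma}w=(V_{\bot}w)^T F^{-1}(V_{\bot}w)$ is a Rayleigh quotient of $F^{-1}$ over a subspace. This gives $\lambda_{\mathrm{min}}(F^{-1})\le\lambda_{\mathrm{min}}(\tilde{\Gamma})$ and $\lambda_{\mathrm{max}}(\tilde{\Gamma})\le\lambda_{\mathrm{max}}(F^{-1})$, and since $F=\diag{f_i}$ we have $\lambda_{\mathrm{min}}(F^{-1})=1/\max_i f_i$ and $\lambda_{\mathrm{max}}(F^{-1})=1/\min_i f_i$. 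Substituting $\lambda_{\mathrm{min}}(\tilde{\Gamma})\ge 1/\max_i f_i$ and $\lambda_{\mathrm{max}}(\tilde{\Gamma})\le 1/\min_i f_i$ into the sandwich produces precisely $\underline{\|\tilde{\omega}\|_2^2}$ and $\overline{\|\tilde{\omega}\|_2^2}$.

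I expect the main obstacle to be the first maneuver, i.e.\ recognizing that Lemma~\ref{lem:bounds-Had} does \emph{not} apply to $\tilde{\Gamma}\circ\tilde{H}$ directly (as $\tilde{H}$ is not rank one) and that the fix is to represent $\tilde{H}$ as an integral of rank-one matrices and interchange the Hadamard bound with the integral. Once that interchange is justified, the remaining pieces—the Parseval identity converting the time integral into an $\mathcal{H}_2$ norm, and the compression bound $\lambda_{\mathrm{min/max}}(\tilde{\Gamma})$ versus $\lambda_{\mathrm{min/max}}(F^{-1})$ via orthonormality of $V_{\bot}$—are routine.
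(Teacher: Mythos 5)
Your proposal is correct and follows essentially the same route as the paper's proof: represent $\tilde{H}$ as $\int_0^\infty h_{\mathrm{u}}(t)h_{\mathrm{u}}(t)^T\,\mathrm{d}t$, apply Lemma~\ref{lem:bounds-Had} pointwise in $t$ with $P=\tilde{\Gamma}$, $x=\tilde{u}_0$, $y=h_{\mathrm{u}}(t)$, integrate to obtain the $\mathcal{H}_2$ norms, and then bound the spectrum of $\tilde{\Gamma}=V_{\bot}^T F^{-1}V_{\bot}$ by that of $F^{-1}$. The only (immaterial) difference is that you prove the eigenvalue compression $\lambda_{\min}(F^{-1})\le\lambda_{\min}(\tilde{\Gamma})\le\lambda_{\max}(\tilde{\Gamma})\le\lambda_{\max}(F^{-1})$ directly via Rayleigh quotients over the subspace spanned by the orthonormal columns of $V_{\bot}$, whereas the paper cites the Cauchy interlacing theorem for the same fact.
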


% \[
% \|\tilde{\omega}\|_2^2 \geq{}\frac{1}{\lambda_{\max}\left(F\right)}\sum_{i=1}^{n-1}z_{0,i}^2\left\Vert \tilde{h}_{i}\left(t\right) \right\Vert_{\mathcal{L}_2}^2.
% \]
\begin{proof}
By Lemma~\ref{lem:syncost-generic},
\[
\begin{aligned}
\|\tilde{\omega}\|_2^2&\!=\!\!\int_0^\infty{}\tilde{u}_0^T\left(\tilde{\Gamma}\circ{}\left(h_{\mathrm{u}}(t)h_{\mathrm{u}}(t)^T\right)\right)\tilde{u}_0\,\mathrm{d}t\\
\!&\!\!\!\geq{}\!\!\int_0^\infty{}\lambda_{\min}(\tilde{\Gamma})\sum_{k=1}^{n-1}\tilde{u}_{0,k}^2h_{\mathrm{u},k}(t)^2\,\mathrm{d}t\\
\!&\!\!\!=\!\lambda_{\min}(\tilde{\Gamma})\sum_{k=1}^{n-1}\tilde{u}_{0,k}^2\| \hat{h}_{\mathrm{u},k} \|_{\mathcal{H}_2}^2\\
\!&\!\!\!\geq{}\!\lambda_{\min}(F^{-1})\sum_{k=1}^{n-1}\tilde{u}_{0,k}^2\| \hat{h}_{\mathrm{u},k} \|_{\mathcal{H}_2}^2 =\!\frac{\sum_{k=1}^{n-1}\tilde{u}_{0,k}^2\| \hat{h}_{\mathrm{u},k} \|_{\mathcal{H}_2}^2}{\max_{i \in \mathcal{V}} \left(f_i \right)},
\end{aligned}
\]
which concludes the proof of the lower bound.
The first inequality follows from Lemma~\ref{lem:bounds-Had} by setting $P = \tilde{\Gamma}$, $x = \tilde{u}_0$, and $y = h_{\mathrm{u}}(t):=\left(h_{\mathrm{u},k}(t), k \in \{1,\dots, n-1\}\right) \in \real^{n-1}$. The second inequality follows from the interlacing theorem \cite[Theorem 4.3.17]{Horn2012MA}. The upper bound can be proved similarly.
\end{proof}
\begin{rem}[Synchronization cost in homogeneous case]\label{rem:syncost-homo}
In the system with homogeneous parameters, i.e., $F=fI_n$ for some $f>0$, the identical lower and upper bounds on the synchronization cost imply that
\[\|\tilde{\omega}\|_2^2=f^{-1}\sum_{k=1}^{n-1}\!\tilde{u}_{0,k}^2\| \hat{h}_{\mathrm{u},k} \|_{\mathcal{H}_2}^2.
\]
\end{rem}

\subsubsection{Nadir}
% {\tre It isn't clear to me what is the best theorem to state here. My instinct is to give a statement that shows that under the diagonlization assumption, the trajectory of $\tilde{\omega}$ can be given as a function of time, and then give a lemma along the lines of the paramettiised h2 norm one for finding the nadir for low order transfer functions.}
A deep Nadir poses a threat to the reliable operation of a power system. Hence one of the goals of inverter control laws is the reduction of Nadir. We seek to evaluate the ability of different control laws to eliminate Nadir. To this end, we provide a necessary and sufficient condition for Nadir elimination in a second-order system with a zero.

\begin{thm}[Nadir elimination for a second-order system] \label{th:nonadir-con}
Assume $K>0$, $z>0$, $\xi \geq 0$, $\omega_\mathrm{n}> 0$. The step response of a second-order system with transfer function given by
\begin{equation*}
    \hat{h}(s) = \dfrac{K\left(s + z\right)}{ s^2 + 2\xi\omega_\mathrm{n} s + \omega_\mathrm{n}^2 }
\end{equation*}
has no Nadir if and only if
\begin{align}\label{eq:nonadir-con}
    1 \leq \xi\leq z/\omega_\mathrm{n} \quad\text{or}\quad
    \begin{cases}
    % \xi \neq 1\\
    \xi>z/\omega_\mathrm{n}\\
    \xi  \geq \left(z/\omega_\mathrm{n}+\omega_\mathrm{n}/z\right)/2
    \end{cases},
\end{align}
where the conditions in braces jointly imply $\xi>1$.
% \begin{align}\label{eq:nonadir-con}
%     1 \leq \xi\leq\dfrac{z}{\omega_\mathrm{n}} \quad\text{or}\quad
%     \begin{cases}
%     \xi \neq 1\\
%     \xi>\dfrac{z}{\omega_\mathrm{n}}\\
%     \xi  \geq \dfrac{1}{2}\left(\dfrac{z}{\omega_\mathrm{n}}+\dfrac{\omega_\mathrm{n}}{z}\right)
%     \end{cases}.
% \end{align}
\end{thm}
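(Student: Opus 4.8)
The plan is to reduce the existence of a Nadir to a sign condition on the impulse response and then treat the three damping regimes separately. Write $y(t)$ for the step response and note that, since $\hat{h}$ has relative degree one with $\lim_{s\to\infty}s\hat{h}(s)=K>0$, the initial slope is $\dot{y}(0^+)=h(0^+)=K>0$, while the final value theorem gives $y(\infty)=Kz/\omega_\mathrm{n}^2>0$. Because $\dot{y}(t)=h(t)$ is exactly the impulse response, the step response is monotone --equivalently it exhibits no Nadir-- if and only if $h(t)\ge 0$ for all $t\ge 0$; a sign change of $h$ produces an interior extremum and hence a Nadir. The whole problem therefore becomes: for which parameters does $h$ keep a constant sign on $(0,\infty)$?

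First I would dispatch the underdamped case $0\le\xi<1$. Here the poles are complex, $-\xi\omega_\mathrm{n}\pm\boldsymbol{j}\omega_\mathrm{d}$ with $\omega_\mathrm{d}=\omega_\mathrm{n}\sqrt{1-\xi^2}$, and a direct inverse transform writes $h(t)$ as $e^{-\xi\omega_\mathrm{n}t}$ times a nonzero sinusoid in $\omega_\mathrm{d}t$; such a signal necessarily changes sign, so a Nadir always exists and the stated condition (which forces $\xi\ge 1$) correctly excludes this regime. Next, the critically damped case $\xi=1$ gives a double pole at $-\omega_\mathrm{n}$ and, after partial fractions, $h(t)=Ke^{-\omega_\mathrm{n}t}\left[1+(z-\omega_\mathrm{n})t\right]$; this stays nonnegative for all $t\ge 0$ precisely when $z\ge\omega_\mathrm{n}$, matching the stated condition at $\xi=1$.

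The substantive case is the overdamped one $\xi>1$, with distinct real poles $-p_1,-p_2$ where $p_{1,2}=\xi\omega_\mathrm{n}\mp\omega_\mathrm{n}\sqrt{\xi^2-1}$ satisfy $0<p_1<p_2$, $p_1+p_2=2\xi\omega_\mathrm{n}$, and $p_1p_2=\omega_\mathrm{n}^2$. Partial fractions give $h(t)=\frac{K}{p_2-p_1}\left[(z-p_1)e^{-p_1t}+(p_2-z)e^{-p_2t}\right]$, whose value at $t=0$ is $K>0$ while its large-$t$ behavior is governed by the slower mode $e^{-p_1t}$. I would argue that $h$ keeps its sign on $(0,\infty)$ if and only if $z\ge p_1$: when $p_1\le z\le p_2$ both coefficients are nonnegative so $h\ge 0$ trivially; when $z>p_2$ the inequality $z-p_1>z-p_2>0$ together with $e^{-p_1t}>e^{-p_2t}$ still forces $h>0$; and when $z<p_1$ the dominant term $(z-p_1)e^{-p_1t}$ is eventually negative, producing a sign change and hence a Nadir.

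It remains to rewrite $z\ge p_1=\xi\omega_\mathrm{n}-\omega_\mathrm{n}\sqrt{\xi^2-1}$ in the stated form. If $\xi\le z/\omega_\mathrm{n}$ then $\xi\omega_\mathrm{n}-z\le 0$ is already dominated by the nonnegative radical, so $z\ge p_1$ holds automatically; combined with $\xi>1$ this is the range $1<\xi\le z/\omega_\mathrm{n}$ of the first branch. If instead $\xi>z/\omega_\mathrm{n}$, both sides of $\omega_\mathrm{n}\sqrt{\xi^2-1}\ge\xi\omega_\mathrm{n}-z$ are positive and squaring yields, after simplification, $2\xi\omega_\mathrm{n}z\ge z^2+\omega_\mathrm{n}^2$, i.e. $\xi\ge\tfrac12\!\left(z/\omega_\mathrm{n}+\omega_\mathrm{n}/z\right)$, exactly the second branch. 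Collecting the three regimes yields the claimed equivalence, and the remark that the braced conditions force $\xi>1$ follows from $\tfrac12\!\left(z/\omega_\mathrm{n}+\omega_\mathrm{n}/z\right)\ge 1$ by AM--GM. The main delicacy I anticipate is the overdamped sufficiency argument in the sub-case $z>p_2$ together with the sign-aware squaring step, since that is where an offhand manipulation could silently drop the hypothesis $\xi>z/\omega_\mathrm{n}$ that legitimizes squaring.
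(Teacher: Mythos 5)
Your proof is correct, and it reaches the paper's conclusion by a cleaner mechanism than the paper's own argument, though within the same overall skeleton. Both proofs split into the underdamped, critically damped, and overdamped regimes, and both finish with the same sign-aware squaring that converts $z \geq \omega_\mathrm{n}\bigl(\xi - \sqrt{\xi^2-1}\bigr)$ into the two branches of \eqref{eq:nonadir-con}. The difference is in how Nadir existence is decided inside each regime: the paper computes the step response explicitly, differentiates, solves $\dot{p}_\mathrm{u}(t)=0$ for an explicit $t_\mathrm{nadir}$, and then verifies maximality by checking the sign of $\dot{p}_\mathrm{u}$ at $t_\mathrm{nadir}\pm\epsilon$; you instead invoke the monotonicity criterion --- no Nadir if and only if the impulse response $h=\dot{y}$ never changes sign (legitimate here since $h(0^+)=K>0$, so the first sign change is from $+$ to $-$ and yields an interior maximum) --- and then read the sign of $h$ directly off the partial-fraction coefficients. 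In the overdamped case your condition ``$h$ keeps its sign iff $z\geq p_1$'' is exactly the paper's condition $1-\omega_\mathrm{n}z^{-1}\bigl(\xi-\sqrt{\xi^2-1}\bigr)\geq 0$, since $p_1=\omega_\mathrm{n}\bigl(\xi-\sqrt{\xi^2-1}\bigr)$. What your route buys is the elimination of all transcendental bookkeeping: no closed-form Nadir time, no $\epsilon$-perturbation argument, and the three-way coefficient analysis ($p_1\leq z\leq p_2$, $z>p_2$, $z<p_1$) is elementary. What the paper's route buys is the explicit formula for $t_\mathrm{nadir}$, which is of independent diagnostic value. One small imprecision: your closing appeal to AM--GM only yields $\xi\geq 1$ for the braced conditions; in the equality case $z=\omega_\mathrm{n}$ you must additionally use the first braced condition $\xi> z/\omega_\mathrm{n}=1$ to conclude the strict inequality $\xi>1$. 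This is trivially repaired and does not affect the main equivalence.
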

% \enrique{Do we need $\xi\not=0$? It seems to follow from the two conditions below. If $z=\omega_n$, then $\xi>1$. If $z\not=\omega_n$ then $(z/\omega_n + \omega_n/z)/2>1.$ I removed it assuming that this is correct.}
\begin{proof}
 Basically, Nadir must occur at some non-negative finite time instant $t_\mathrm{nadir}$, such that $\dot{p}_\mathrm{u}(t_\mathrm{nadir}) =0$ and $p_\mathrm{u}(t_\mathrm{nadir})$ is a maximum,
%  and $\mathrm{sign}(\dot {p}_\mathrm{u}(t)\lvert_{t=t_\mathrm{nadir}^+})=-\mathrm{sign}(\dot {p}_\mathrm{u}(t)\lvert_{t=t_\mathrm{nadir}^-})\not=0$
where $p_\mathrm{u}(t)$ denotes the unit-step response of $\hat{h}(s)$, i.e., $\hat{p}_\mathrm{u}(s) := \hat{h}(s)/s$. We consider three cases based on the value of damping ratio $\xi$ separately:
\begin{enumerate}
    \item Under damped case ($0\leq\xi<1$): The output is
    \begin{align*}
    \hat{p}_\mathrm{u}(s) = \dfrac{Kz}{\omega_\mathrm{n}^2} \left[\dfrac{1}{s}- \dfrac{s + \xi\omega_\mathrm{n}}{ (s+\xi\omega_\mathrm{n})^2  + \omega_\mathrm{d}^2 }- \dfrac{\xi\omega_\mathrm{n}  - \omega_\mathrm{n}^2 z^{-1}}{ (s+\xi\omega_\mathrm{n})^2  + \omega_\mathrm{d}^2 }\right]
\end{align*}
with $\omega_\mathrm{d} := \omega_\mathrm{n} \sqrt{1 - \xi^2}$, which gives the time domain response
\begin{align*}
    p_\mathrm{u}(t)
    %= \dfrac{z}{\omega_\mathrm{n}^2} \left\{1- e^{-\xi\omega_\mathrm{n} t}\left[\cos{(\omega_\mathrm{d} t)} + \dfrac{\xi\omega_\mathrm{n}  - \omega_\mathrm{n}^2 \tau_\mathrm{T}}{\omega_\mathrm{d}}\sin{(\omega_\mathrm{d} t)}\right]\right\}\;.\\
    %= \dfrac{z}{\omega_\mathrm{n}^2} \left\{1- e^{-\xi\omega_\mathrm{n} t}\left[\sqrt{ \dfrac{\omega^2_\mathrm{n}-2\xi\omega^3_\mathrm{n}z^{-1}  + \omega_\mathrm{n}^4 z^{-2}}{\omega^2_\mathrm{d}}}\sin{(\omega_\mathrm{d} t+\phi)}\right]\right\}
    %\\
    %= \dfrac{z}{\omega_\mathrm{n}^2} \left\{1- e^{-\xi\omega_\mathrm{n} t}\left[ \dfrac{\sqrt{\omega^2_\mathrm{d}+\left(\xi\omega_\mathrm{n} - \omega_\mathrm{n}^2 z^{-1}\right)^2 }}{\omega_\mathrm{d}}\sin{(\omega_\mathrm{d} t+\phi)}\right]\right\}
    = \dfrac{Kz}{\omega_\mathrm{n}^2} \left[1- e^{-\xi\omega_\mathrm{n} t} \eta_0\sin{(\omega_\mathrm{d} t+\phi)}\right]\;,
\end{align*}
where
\[
\eta_0 = \!\sqrt{1+\dfrac{\left(\xi\omega_\mathrm{n}  - \omega_\mathrm{n}^2 z^{-1}\right)^2}{\omega_\mathrm{d}^2}}\ \text{and}\
\tan\phi = \dfrac{\omega_\mathrm{d}}{\xi\omega_\mathrm{n}  - \omega_\mathrm{n}^2 z^{-1}}.
\]
Clearly, the above response must have oscillations. Therefore, for the case $0\leq\xi<1$, Nadir always exists.
\item Critically damped case ($\xi=1$): The output is
\begin{align*}
    \hat{p}_\mathrm{u}(s)
    =\dfrac{Kz}{\omega_\mathrm{n}^2} \left[\dfrac{1}{s}- \dfrac{1}{ s + \omega_\mathrm{n} }- \dfrac{ \omega_\mathrm{n} - \omega_\mathrm{n}^2 z^{-1}}{ \left(s + \omega_\mathrm{n}\right)^2 }\right]\;,
\end{align*}
which gives the time domain response
\begin{align*}
    p_\mathrm{u}(t) = \dfrac{Kz}{\omega_\mathrm{n}^2} \left\{1- e^{-\omega_\mathrm{n} t}\left[1 + \left(\omega_\mathrm{n}  - \omega_\mathrm{n}^2 z^{-1}\right) t\right]\right\}\;.
\end{align*}
Thus,
\begin{align*}
    \dot{p}_\mathrm{u}(t) = Kz e^{-\omega_\mathrm{n} t}\left[ \left(  1- \omega_\mathrm{n} z^{-1}\right) t +  z^{-1}\right]\;.
\end{align*}
 Letting $\dot{p}_\mathrm{u}(t) =0$ yields
\begin{align*}
    \omega_\mathrm{n} e^{-\omega_\mathrm{n} t} \left[1 + \left(\omega_\mathrm{n}  - \omega_\mathrm{n}^2 z^{-1}\right) t\right] = e^{-\omega_\mathrm{n} t} \left(\omega_\mathrm{n}  - \omega_\mathrm{n}^2 z^{-1}\right)\;,
\end{align*}
which has a non-negative finite solution
\begin{align*}
    t_\mathrm{nadir}
    =\dfrac{z^{-1}}{\omega_\mathrm{n} z^{-1} -1}
\end{align*}
whenever $\omega_\mathrm{n} z^{-1} > 1$. For any $\epsilon>0$, it holds that
\begin{align*}
\dot {p}_\mathrm{u}(t_\mathrm{nadir}-\epsilon) = \epsilon Kz e^{-\omega_\mathrm{n} \left(t_\mathrm{nadir}-\epsilon\right)} \left(\omega_\mathrm{n} z^{-1}-1\right)>0\;,\\
    \dot {p}_\mathrm{u}(t_\mathrm{nadir}+\epsilon) = \epsilon Kz e^{-\omega_\mathrm{n} \left(t_\mathrm{nadir}+\epsilon\right)} \left(  1- \omega_\mathrm{n} z^{-1}\right)<0  \;.
\end{align*}
Clearly, Nadir occurs at $t_\mathrm{nadir}$.
% \begin{align*}
% \ddot {p}_\mathrm{u}(t)\lvert_{t=t_\mathrm{nadir}^+}=Kze^{-\omega_\mathrm{n} t_\mathrm{nadir}^+}\left(1-\omega_\mathrm{n}z^{-1}\right)<0\;, \\
% \ddot {p}_\mathrm{u}(t)\lvert_{t=t_\mathrm{nadir}^-}=Kze^{-\omega_\mathrm{n} t_\mathrm{nadir}^-}\left(1-\omega_\mathrm{n}z^{-1}\right)<0\;.
% \end{align*}
Therefore, for the case $\xi = 1$, Nadir is eliminated if and only if $\omega_\mathrm{n} z^{-1} \leq 1$. To put it more succinctly, we combine the two conditions into
\begin{equation}\label{eq:cri-nonadir}
    1=\xi\leq z/\omega_\mathrm{n}\;.
\end{equation}
\item Over damped case ($\xi>1$): The output is
\begin{align*}
    \hat{p}_\mathrm{u}(s)
    =&\dfrac{Kz}{\omega_\mathrm{n}^2} \left(\dfrac{1}{s}- \dfrac{\eta_1}{ s +\sigma_1}- \dfrac{\eta_2}{ s+\sigma_2}\right)
\end{align*}
with
\begin{align*}
    \sigma_{1,2} = \omega_\mathrm{n}\left(\xi\pm \sqrt{\xi^2-1}\right)\ \ \text{and}\ \ \eta_{1,2} =\dfrac{1}{2}\mp \dfrac{\xi- \omega_\mathrm{n}z^{-1}}{ 2\sqrt{\xi^2-1}}\;,
\end{align*}
which gives the time domain response
\begin{align*}
    p_\mathrm{u}(t) = \dfrac{Kz}{\omega_\mathrm{n}^2} \left(1- \eta_1 e^{-\sigma_1 t}-\eta_2 e^{-\sigma_2 t}\right)\;.
\end{align*}
Thus,
\begin{align*}
    \dot{p}_\mathrm{u}(t) = \dfrac{Kz}{\omega_\mathrm{n}^2} \left(\sigma_1 \eta_1 e^{-\sigma_1 t}+\sigma_2\eta_2 e^{-\sigma_2 t}\right)\;.
\end{align*}
Letting $\dot{p}_\mathrm{u}(t) =0$ yields $\sigma_1 \eta_1 e^{-\sigma_1 t} = - \sigma_2 \eta_2 e^{-\sigma_2 t}$,
% \begin{align*}
%     \sigma_1 \eta_1 e^{-\sigma_1 t} = - \sigma_2 \eta_2 e^{-\sigma_2 t}\;,
% \end{align*}
% \begin{align*}
%  \left[1  - \omega_\mathrm{n}\tau_\mathrm{T}\left(\xi-  \sqrt{\xi^2-1}\right)\right]e^{\omega_\mathrm{n}  \sqrt{\xi^2-1}t}= \left[1  - \omega_\mathrm{n}\tau_\mathrm{T}\left(\xi+ \sqrt{\xi^2-1}\right)\right]e^{-\omega_\mathrm{n} \sqrt{\xi^2-1}t}   \;,
% \end{align*}
which has a non-negative finite solution
\begin{align*}
    t_\mathrm{nadir}
    =\dfrac{1}{2\omega_\mathrm{n}  \sqrt{\xi^2-1}}\ln{\dfrac{1  - \omega_\mathrm{n}z^{-1}\left(\xi+ \sqrt{\xi^2-1}\right)}{1  - \omega_\mathrm{n}z^{-1}\left(\xi- \sqrt{\xi^2-1}\right)}}
\end{align*}
whenever $1  - \omega_\mathrm{n}z^{-1}\left(\xi- \sqrt{\xi^2-1}\right)<0$. For any $\epsilon>0$, it holds that
\begin{align*}
    \dot{p}_\mathrm{u}(t_\mathrm{nadir}-\epsilon)
    %=& \dfrac{Kz}{\omega_\mathrm{n}^2}  \left(\sigma_1 \eta_1 e^{-\sigma_1 \left(t_\mathrm{nadir}+\epsilon\right)}+ \sigma_2\eta_2 e^{-\sigma_2 \left(t_\mathrm{nadir}+\epsilon\right)}\right)\\
    >& \dfrac{Kz}{\omega_\mathrm{n}^2} e^{\sigma_1 \epsilon} \left(\sigma_1 \eta_1 e^{-\sigma_1 t_\mathrm{nadir}}+ \sigma_2\eta_2e^{-\sigma_2 t_\mathrm{nadir}}\right)\\
    =& e^{\sigma_1 \epsilon} \dot{p}_\mathrm{u}(t_\mathrm{nadir})=0\;,\\
    \dot{p}_\mathrm{u}(t_\mathrm{nadir}+\epsilon)
    %=& \dfrac{Kz}{\omega_\mathrm{n}^2}  \left(\sigma_1 \eta_1 e^{-\sigma_1 \left(t_\mathrm{nadir}+\epsilon\right)}+ \sigma_2\eta_2 e^{-\sigma_2 \left(t_\mathrm{nadir}+\epsilon\right)}\right)\\
    <& \dfrac{Kz}{\omega_\mathrm{n}^2} e^{-\sigma_1 \epsilon} \left(\sigma_1 \eta_1 e^{-\sigma_1 t_\mathrm{nadir}}+ \sigma_2\eta_2e^{-\sigma_2 t_\mathrm{nadir}}\right)\\
    =& e^{-\sigma_1 \epsilon} \dot{p}_\mathrm{u}(t_\mathrm{nadir})=0\;,
\end{align*}
since $\sigma_1>\sigma_2>0$ and one can show that $\sigma_2\eta_2<0$. Clearly, Nadir occurs at $t_\mathrm{nadir}$.
%A standard algebraic process shows
%\begin{align*}
% \ddot {p}_\mathrm{u}(t)\lvert_{t=t_\mathrm{nadir}^+}\!=\!-\dfrac{Kz}{\omega_\mathrm{n}^2} \left(\sigma_1^2 \eta_1 e^{-\sigma_1 t_\mathrm{nadir}^+} +\sigma_2^2 \eta_2 e^{-\sigma_2 t_\mathrm{nadir}^+}\right)\!\,,\\
% \ddot {p}_\mathrm{u}(t)\lvert_{t=t_\mathrm{nadir}^-}\!=\!-\dfrac{Kz}{\omega_\mathrm{n}^2} \left(\sigma_1^2 \eta_1 e^{-\sigma_1 t_\mathrm{nadir}^-} +\sigma_2^2 \eta_2 e^{-\sigma_2 t_\mathrm{nadir}^-}\right)\!\,.
% \end{align*}
Therefore, for the case $\xi > 1$, Nadir is eliminated if and only if $1  - \omega_\mathrm{n}z^{-1}\left(\xi- \sqrt{\xi^2-1}\right)\geq0$, i.e., $\sqrt{\xi^2-1} \geq \xi-z/\omega_\mathrm{n}$,
% \begin{align*}
% \sqrt{\xi^2-1} \geq \xi-z/\omega_\mathrm{n}\;,
% \end{align*}
which holds if and only if
% \begin{align*}
%     \xi-\dfrac{z}{\omega_\mathrm{n}}\leq 0\quad\text{or}\quad
%     \begin{cases}
%     \xi-\dfrac{z}{\omega_\mathrm{n}}> 0\\
%     \xi^2-1 \geq \left(\xi-\dfrac{z}{\omega_\mathrm{n}}\right)^2
%     \end{cases},
% \end{align*}
% i.e.,
\begin{align*}
    \xi\leq z/\omega_\mathrm{n} \quad\text{or}\quad
    \begin{cases}
    \xi> z/\omega_\mathrm{n}\\
    \xi  \geq \left(z/\omega_\mathrm{n}+\omega_\mathrm{n}/z\right)/2
    \end{cases}.
\end{align*}
% \begin{align*}
%     \xi\leq\dfrac{z}{\omega_\mathrm{n}} \quad\text{or}\quad
%     \begin{cases}
%     \xi>\dfrac{z}{\omega_\mathrm{n}}\\
%     \xi  \geq \dfrac{1}{2}\left(\dfrac{z}{\omega_\mathrm{n}}+\dfrac{ \omega_\mathrm{n}}{z}\right)
%     \end{cases}.
% \end{align*}
Thus we get the conditions
\begin{align}\label{eq:ov-nonadir}
    1 < \xi\leq z/\omega_\mathrm{n} \quad\text{or}\quad
    \begin{cases}
    \xi>1\\
    \xi>z/\omega_\mathrm{n}\\
    \xi  \geq \left(z/\omega_\mathrm{n}+\omega_\mathrm{n}/z\right)/2
    \end{cases}.
\end{align}
\end{enumerate}
Finally, since $\forall a, b \geq 0$, $(a+b)/2\geq\sqrt{ab}$ with equality only when $a=b$, it follows that the second condition in \eqref{eq:ov-nonadir} can only hold when $\xi>1$.
Thus we can combine
 \eqref{eq:cri-nonadir} and \eqref{eq:ov-nonadir} to yield \eqref{eq:nonadir-con}.
\end{proof}
% \enrique{Do we need to check the condition of equality in signs of the second derivative w.r.t. time of $p_\mathrm{u}(t)$? I think that conditions only on the first derivative is not enough if we want if and only if. }

\section{The Need for a Better Solution}\label{sec:need}
We now apply the results in Section~\ref{sec:result} to illustrate the performance limitations of the traditional control laws DC and VI.
% 
% 
% {\tre     Under Assumptions~\ref{ass:proportion} and \ref{ass:noise}, the power network model can be represented by Fig.~\ref{fig:block-diag} with the representative generator being in standard swing dynamics, i.e., 
% \begin{equation}\label{eq:go-sw}
%  \hat{g}_\mathrm{o}(s) = \frac{1}{m s + d}.\footnote{We use variables without subscript $i$ to denote parameters of representative generator and inverter.}   
% \end{equation}
% In what follows, we denote this system by $\hat{T}_{\mathrm{st, DC}}$ or $\hat{T}_{\mathrm{st, VI}}$, depending on whether the representative inverter is DC or VI, i.e.,
% \begin{subequations}
% \begin{equation}\label{eq:co-dc}
% \hat{c}_\mathrm{o}(s) = -r_\mathrm{r}^{-1}\;,
% \end{equation}
% \begin{equation}\label{eq:co-vi}
% \hat{c}_\mathrm{o}(s) = -\left(m_\mathrm{v} s + r_\mathrm{r}^{-1}\right)\;,
% \end{equation}
% \end{subequations}
% respectively.
% 
% Under Assumptions~\ref{ass:proportion} and \ref{ass:step}, the power network model can be represented by Fig.~\ref{fig:block-diag} with the representative generator being in swing dynamics with a turbine triggered, i.e., 
% \begin{equation}\label{eq:go-sw-tb}
%  \hat{g}_\mathrm{o}(s) = \frac{ \tau s + 1 }{m \tau s^2 + \left(m + d \tau \right) s + d + r_\mathrm{t}^{-1}}.   
% \end{equation}
% In what follows, we denote this system by $\hat{T}_{\omega \mathrm{p, DC}}$ and $\hat{T}_{\omega \mathrm{p, VI}}$, respectively, if the representative inverter is DC and VI, respectively, i.e., $\hat{c}_\mathrm{o}(s)$ is given by \eqref{eq:co-dc} and \eqref{eq:co-vi}, respectively.}
With this aim, we seek
 to quantify the frequency variance \eqref{eq:h2_def_E} under DC and VI through the $\mathcal{H}_2$ norm of $\hat{T}_{\omega\mathrm{dn},\mathrm{DC}}$ and $\hat{T}_{\omega\mathrm{dn},\mathrm{VI}}$, as well as the steady-state effort share \eqref{eq:ES}, synchronization cost~\eqref{eq:sync_cost}, and Nadir \eqref{eq:Nadir} through the step response characterizations of $\hat{T}_{\omega\mathrm{p,DC}}$ and $\hat{T}_{\omega \mathrm{p,VI}}$.

\subsection{Steady-state Effort Share}
\begin{cor}[Synchronous frequency under DC and VI]\label{lem:syn-fre-dc}
Let Assumption~\ref{ass:step} hold. When $q_{\mathrm{r},i}$ is defined by the control law  DC \eqref{eq:dy-dc} or VI \eqref{eq:dy-vi}, the steady-state frequency deviation of the system $\hat{T}_{\omega \mathrm{p,DC}}$ or $\hat{T}_{\omega \mathrm{p, VI}}$ synchronizes to the synchronous frequency, i.e., $\omega_{\mathrm{ss}} = \omega_{\mathrm{syn}} \mathbbold{1}_n$ with
\begin{equation}
 \omega_{\mathrm{syn}} = \dfrac{\sum_{i=1}^n u_{0,i}}{\sum_{i=1}^n \left( d_i + r_{\mathrm{t},i}^{-1} + r_{\mathrm{r},i}^{-1} \right)}\;. \label{eq:ome-syn-dc}
\end{equation}
\end{cor}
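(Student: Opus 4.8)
The plan is to recognize that this corollary is an immediate specialization of Lemma~\ref{lem:syn-fre}, so the entire task reduces to evaluating the dc gain $\hat{c}_i(0)$ of each control law and substituting it into the general synchronous-frequency expression \eqref{eq:ome-syn}. Since Assumption~\ref{ass:step} is assumed, the hypotheses of Lemma~\ref{lem:syn-fre} are met directly, and I can invoke its conclusion that the system synchronizes to $\omega_{\mathrm{ss}} = \omega_{\mathrm{syn}}\mathbbold{1}_n$ with the stated denominator involving $\hat{c}_i(0)$.

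First I would compute $\hat{c}_i(0)$ for droop control. From \eqref{eq:dy-dc} we have $\hat{c}_i(s) = -r_{\mathrm{r},i}^{-1}$, a constant, so trivially $\hat{c}_i(0) = -r_{\mathrm{r},i}^{-1}$. Next I would do the same for virtual inertia. From \eqref{eq:dy-vi} we have $\hat{c}_i(s) = -\left(m_{\mathrm{v},i} s + r_{\mathrm{r},i}^{-1}\right)$, and evaluating at $s=0$ kills the inertial term, leaving $\hat{c}_i(0) = -r_{\mathrm{r},i}^{-1}$ as well. The crucial observation, which is really the only substantive content of the proof, is that DC and VI share the \emph{same} dc gain: the virtual inertia $m_{\mathrm{v},i} s$ contributes nothing in steady state because it is a pure derivative action. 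This is precisely why both control laws yield an identical synchronous frequency despite their different dynamic behavior.

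Finally I would substitute $\hat{c}_i(0) = -r_{\mathrm{r},i}^{-1}$ into \eqref{eq:ome-syn}, so that the term $-\hat{c}_i(0)$ in the denominator becomes $+r_{\mathrm{r},i}^{-1}$, giving
\[
\omega_{\mathrm{syn}} = \frac{\sum_{i=1}^n u_{0,i}}{\sum_{i=1}^n \left( d_i + r_{\mathrm{t},i}^{-1} + r_{\mathrm{r},i}^{-1} \right)},
\]
which is exactly \eqref{eq:ome-syn-dc}. There is no genuine obstacle here: the result is a one-line consequence of Lemma~\ref{lem:syn-fre}, and the only point requiring any care is noting that the derivative (virtual inertia) component of VI vanishes at $s=0$, so that VI and DC are indistinguishable at the level of steady-state frequency. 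I would present this as a short two-sentence proof rather than an extended argument.
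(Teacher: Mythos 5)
Your proposal is correct and follows exactly the paper's route: the paper's proof is the one-liner ``The result follows directly from Lemma~\ref{lem:syn-fre},'' and your explicit evaluation of the dc gains $\hat{c}_i(0) = -r_{\mathrm{r},i}^{-1}$ for both \eqref{eq:dy-dc} and \eqref{eq:dy-vi} (noting the derivative term of VI vanishes at $s=0$) is precisely the substitution the paper leaves implicit.
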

\begin{proof}
The result follows directly from Lemma~\ref{lem:syn-fre}.
\end{proof}

Now, the corollary below gives the expression for the steady-state effort share when inverters are under the control law DC or VI.

\begin{cor}[Steady-state effort share of DC and VI]\label{thm:ss-DC}
Let Assumption~\ref{ass:step} hold. If $q_{\mathrm{r},i}$ is under the control law \eqref{eq:dy-dc} or \eqref{eq:dy-vi}, then the steady-state effort share of the system $\hat{T}_{\omega \mathrm{p, DC}}$ or $\hat{T}_{\omega \mathrm{p, VI}}$ is given by
% \begin{equation}
%     \mathrm{ES_{DC}} =\mathrm{ES_{VI}}= \frac{\sum_{i=1}^n r_{\mathrm{r},i}^{-1}}{\sum_{i=1}^n  \left( d_i + {r_{\mathrm{t},i}^{-1} + r_{\mathrm{r},i}^{-1}} \right) }\;.
% \end{equation}
\begin{equation}\label{eq:es-ratio}
    \mathrm{ES} = \frac{\sum_{i=1}^n r_{\mathrm{r},i}^{-1}}{\sum_{i=1}^n  \left( d_i + {r_{\mathrm{t},i}^{-1} + r_{\mathrm{r},i}^{-1}} \right) }\;.
\end{equation}
\end{cor}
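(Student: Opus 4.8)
The plan is to specialize the general steady-state effort share formula of Theorem~\ref{thm:ss-es} to the two control laws. Assumption~\ref{ass:step} is exactly the hypothesis of that theorem, and the theorem already expresses $\mathrm{ES}$ purely in terms of the inverter dc gains $\hat{c}_i(0)$, so no fresh steady-state analysis is needed; the entire task reduces to evaluating those dc gains and simplifying.

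The first step is to read the dc gains off \eqref{eq:dy-dc} and \eqref{eq:dy-vi}. For DC we have $\hat{c}_i(s)=-r_{\mathrm{r},i}^{-1}$, whose value at $s=0$ is $-r_{\mathrm{r},i}^{-1}$; for VI we have $\hat{c}_i(s)=-(m_{\mathrm{v},i}s+r_{\mathrm{r},i}^{-1})$, whose inertial term vanishes at $s=0$, leaving the same $\hat{c}_i(0)=-r_{\mathrm{r},i}^{-1}$. The crucial observation---and the reason the two controllers yield an identical effort share despite differing transients---is that DC and VI are indistinguishable at dc.

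The second step is purely algebraic: I would substitute $\hat{c}_i(0)=-r_{\mathrm{r},i}^{-1}$ into Theorem~\ref{thm:ss-es}, so the numerator becomes $\sum_{i=1}^n(-r_{\mathrm{r},i}^{-1})$ and the term $-\hat{c}_i(0)$ in the denominator becomes $+r_{\mathrm{r},i}^{-1}$. To discharge the absolute value I would invoke the strict positivity of the physical parameters $d_i$, $r_{\mathrm{t},i}$, and $r_{\mathrm{r},i}$: the denominator $\sum_{i=1}^n(d_i+r_{\mathrm{t},i}^{-1}+r_{\mathrm{r},i}^{-1})$ is a sum of positive quantities and hence positive, while the numerator is negative with magnitude $\sum_{i=1}^n r_{\mathrm{r},i}^{-1}$. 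Taking the modulus therefore produces exactly \eqref{eq:es-ratio}.

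There is no genuine obstacle here---the statement is a one-line corollary of Theorem~\ref{thm:ss-es}. As an alternative route one could instead plug the synchronous frequency from Corollary~\ref{lem:syn-fre-dc} together with $\hat{c}_i(0)=-r_{\mathrm{r},i}^{-1}$ and $p_{\mathrm{in},i}(0^+)=u_{0,i}$ directly into the definition \eqref{eq:ES}; the $\sum_{i=1}^n u_{0,i}$ factors cancel and the same ratio emerges. The only point worth \emph{emphasizing} rather than proving is the coincidence of the dc gains, which is what makes virtual inertia buy no change in steady-state effort share over plain droop.
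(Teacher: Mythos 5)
Your proof is correct and follows the same route as the paper, which simply invokes Theorem~\ref{thm:ss-es} applied to \eqref{eq:dy-dc} and \eqref{eq:dy-vi}. You additionally spell out the details the paper leaves implicit---that both controllers share the dc gain $\hat{c}_i(0)=-r_{\mathrm{r},i}^{-1}$ and that positivity of $d_i$, $r_{\mathrm{t},i}^{-1}$, $r_{\mathrm{r},i}^{-1}$ resolves the absolute value---which is a faithful elaboration, not a different argument.
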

\begin{proof}
The result follows directly from Theorem~\ref{thm:ss-es} applied to \eqref{eq:dy-dc} and \eqref{eq:dy-vi}.
\end{proof}

Corollary~\ref{thm:ss-DC} indicates that DC and VI have the same steady-state effort share, which increases as $r_{\mathrm{r},i}^{-1}$ increase.
 However, $r_{\mathrm{r},i}^{-1}$ are parameters that also directly affect the dynamic performance of the power system, which can be seen clearly from the dynamic performance analysis.

\subsection{Power Fluctuations and Measurement Noise}\label{ssec:VI-dy}
% We now focus on the dynamic performance analysis.
%To make the analysis tractable, we assume homogeneous parameters, i.e., $m_i = m$, $d_i = d$, $\tau_i = \tau$, $r_{\mathrm{t},i} = r_\mathrm{t}$, $r_{\mathrm{r},i} = r_\mathrm{r}$, $m_{\mathrm{v},i} = m_\mathrm{v}$, $k_{\mathrm{p},i} = k_\mathrm{p}$, $k_{\omega,i} = k_\omega$, $\forall i \in V$.  
%\paragraph{Power Disturbances and Measurement Noise}
% \begin{lem}
% Let
% \[
% g(s)=\frac{n_2s^2+n_1s}{s^3+d_2s^2+d_1s+d_0}+c.
% \]
% Then
% \[
% \|g(s)\|_{\mathcal{H}_2}^2=\begin{cases}
% \displaystyle{\frac{d_1n_2^2+n_1^2}{2(d_1d_2-d_0)}}&\text{if $g(s)$ is stable and $c=0$,}\\
% \infty{}&\text{otherwise.}
% \end{cases}
% \]

% \end{lem}
% \begin{proof}
% Observe that
% \[
% \Sigma_{g(s)}=
% \left[\begin{array}{ccc|c}
%     0&1&0&0\\
%     0&0&1&0\\
%     -d_0&-d_1&-d_2&1\\\hline
%     0&n_1&n_2 & c
% \end{array}\right]
% \]
% Now prove exactly as Yan has done for the other cases
% \end{proof}

Using Theorem~\ref{thm:h2-sum} and Lemma~\ref{lm:h2-4th}, it is possible to get closed form expressions of $\mathcal{H}_2$ norms for systems $\hat{T}_{\omega\mathrm{dn},\mathrm{DC}}$ and $\hat{T}_{\omega\mathrm{dn},\mathrm{VI}}$.

\begin{cor}[Frequency variance under DC and VI]\label{thm:noise-VI}
Let Assumptions~\ref{ass:proportion} and \ref{ass:noise} hold. The squared $\mathcal{H}_2$ norm of $\hat{T}_{\omega\mathrm{dn},\mathrm{DC}}$ and $\hat{T}_{\omega\mathrm{dn},\mathrm{VI}}$ is given by
\begin{subequations}
\begin{equation}\label{eq:noise-DC}
	\|\hat{T}_{\omega\mathrm{dn}, \mathrm{DC}}\|_{\mathcal{H}_2}^2 = \sum_{k=1}^n \Gamma_{kk} \dfrac{\kappa_\mathrm{p}^2 + r_\mathrm{r}^{-2} \kappa_\omega^2}{2m \check{d}}  ,
\end{equation}
\begin{equation}\label{eq:noise-VI}
\|\hat{T}_{\omega\mathrm{dn}, \mathrm{VI}}\|_{\mathcal{H}_2}^2 = \infty \;,
\end{equation}
\end{subequations}
respectively, where $\check{d} := d + r_\mathrm{r}^{-1}$.
\end{cor}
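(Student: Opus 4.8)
The plan is to invoke Theorem~\ref{thm:h2-sum}, which reduces the computation of $\|\hat{T}_{\omega\mathrm{dn}}\|_{\mathcal{H}_2}^2$ to a $\Gamma_{kk}$-weighted sum of the scalar norms $\|\hat{h}_{\mathrm{p},k}\|_{\mathcal{H}_2}^2$ and $\|\hat{h}_{\omega,k}\|_{\mathcal{H}_2}^2$. Under Assumption~\ref{ass:noise} the turbines are never triggered, so the representative generator is the standard swing dynamics $\hat{g}_\mathrm{o}(s)=1/(ms+d)$ from \eqref{eq:go-sw}. Substituting this together with the relevant $\hat{c}_\mathrm{o}(s)$ into \eqref{eq:hp-s} and \eqref{eq:homega-s} yields explicit low-order rational forms for each $\hat{h}_{\mathrm{p},k}$ and $\hat{h}_{\omega,k}$, whose $\mathcal{H}_2$ norms I would then read off from Lemma~\ref{lm:h2-4th} and its lower-order specializations in Remark~\ref{rem:h2-3rd}.

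For DC, $\hat{c}_\mathrm{o}(s)=-r_\mathrm{r}^{-1}$, so clearing denominators in \eqref{eq:hp-s} gives $\hat{h}_{\mathrm{p},k}(s)=s/(ms^2+\check{d}s+\lambda_k)$ with $\check{d}=d+r_\mathrm{r}^{-1}$, and $\hat{h}_{\omega,k}(s)=-r_\mathrm{r}^{-1}\hat{h}_{\mathrm{p},k}(s)$. For $k\geq2$ these are stable second-order transfer functions, and applying the second-order case of Remark~\ref{rem:h2-3rd} gives the pleasantly $\lambda_k$-independent value $\|\hat{h}_{\mathrm{p},k}\|_{\mathcal{H}_2}^2=1/(2m\check{d})$. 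The mode $k=1$ needs separate care: since $\lambda_1=0$ the factor $s$ in the numerator cancels the pole at the origin, leaving the first-order system $1/(ms+\check{d})$, whose norm from the first-order case of Remark~\ref{rem:h2-3rd} equals the same $1/(2m\check{d})$. Because $\hat{h}_{\omega,k}=-r_\mathrm{r}^{-1}\hat{h}_{\mathrm{p},k}$, I then have $\|\hat{h}_{\omega,k}\|_{\mathcal{H}_2}^2=r_\mathrm{r}^{-2}/(2m\check{d})$ for every $k$, and plugging both into Theorem~\ref{thm:h2-sum} collapses the weighted sum to $\sum_{k=1}^n\Gamma_{kk}(\kappa_\mathrm{p}^2+r_\mathrm{r}^{-2}\kappa_\omega^2)/(2m\check{d})$, which is exactly \eqref{eq:noise-DC}.

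For VI, $\hat{c}_\mathrm{o}(s)=-(m_\mathrm{v}s+r_\mathrm{r}^{-1})$, and the key observation is that the noise channel $\hat{h}_{\omega,k}(s)=-(m_\mathrm{v}s+r_\mathrm{r}^{-1})\hat{h}_{\mathrm{p},k}(s)$ is proper but not \emph{strictly} proper: its numerator and denominator share the same degree, with $\lim_{s\to\infty}\hat{h}_{\omega,k}(s)=-m_\mathrm{v}/(m+m_\mathrm{v})\neq0$. Thus any state-space realization has a nonzero feedthrough $D$, so the $b_4\neq0$ branch of Lemma~\ref{lm:h2-4th} forces $\|\hat{h}_{\omega,k}\|_{\mathcal{H}_2}^2=\infty$ for each $k$. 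Since $\Gamma=V^TF^{-1}V$ is congruent to the positive definite $F^{-1}$ through the orthogonal $V$, its diagonal entries satisfy $\Gamma_{kk}>0$, and with $\kappa_\omega>0$ the corresponding term in the sum of Theorem~\ref{thm:h2-sum} is infinite, establishing \eqref{eq:noise-VI}. The only real subtlety is the $k=1$ bookkeeping in the DC case and, more conceptually, recognizing that the infinite VI norm is a structural consequence of the differentiator in the virtual-inertia law amplifying high-frequency measurement noise, rather than any instability.
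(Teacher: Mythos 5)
Your proposal is correct and follows essentially the same route as the paper: reduce to scalar norms via Theorem~\ref{thm:h2-sum}, compute $\hat{h}_{\mathrm{p},k}$ and $\hat{h}_{\omega,k}$ explicitly from \eqref{eq:hp-s}--\eqref{eq:homega-s} under the swing dynamics \eqref{eq:go-sw}, evaluate them with Lemma~\ref{lm:h2-4th} and its lower-order specializations, and conclude the VI case from the nonzero feedthrough $b_4=-m_\mathrm{v}/(m+m_\mathrm{v})$. Your explicit treatment of the $k=1$ mode (where $\lambda_1=0$ causes a pole-zero cancellation) and of $\Gamma_{kk}>0$ are details the paper glosses over, but they do not change the argument.
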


\begin{proof}
We study the two cases separately.

We begin with $\|\hat{T}_{\omega\mathrm{dn}, \mathrm{DC}}\|_{\mathcal{H}_2}^2$.
Applying \eqref{eq:go-sw} and \eqref{eq:co-dc} to \eqref{eq:hp-s} and \eqref{eq:homega-s} shows
% \begin{subequations}
% \begin{align}
% \hat{h}_{{\mathrm{p},k},\mathrm{DC}}(s) =& \frac{s}{m s^2 + \check{d} s + \lambda_k}\;,\label{eq:ss-hp-dc}\\
% \hat{h}_{{\omega,k},\mathrm{DC}}(s) =& - \frac{r_\mathrm{r}^{-1} s}{m s^2 + \check{d} s + \lambda_k}\;.\label{eq:ss-hw-dc}
% \end{align}
% \end{subequations}
$\hat{h}_{{\mathrm{p},k},\mathrm{DC}}(s)$ is a transfer function with $b_4=a_0=b_0=a_1=b_1=0, a_2 = \lambda_k/m, b_2 = 0, a_3 = \check{d}/m, b_3 =1/m$, while $\hat{h}_{{\omega,k},\mathrm{DC}}(s)$ is a transfer function with $b_4=a_0=b_0=a_1=b_1=0, a_2 = \lambda_k/m, b_2 = 0, a_3 = \check{d}/m, b_3 =-r_\mathrm{r}^{-1}/m$.
Thus, by Lemma~\ref{lm:h2-4th},
\begin{align*}
\|\hat{h}_{{\mathrm{p},k},\mathrm{DC}}\|_{\mathcal{H}_2}^2=\frac{1}{2m\check{d}} \quad \text{and} \quad
\|\hat{h}_{{\omega,k},\mathrm{DC}}\|_{\mathcal{H}_2}^2=\frac{r_\mathrm{r}^{-2}}{2m\check{d}}\;.
\end{align*}
Then \eqref{eq:noise-DC} follows from Theorem~\ref{thm:h2-sum}.

We now turn to show that $\|\hat{T}_{\omega\mathrm{dn}, \mathrm{VI}}\|_{\mathcal{H}_2}^2$ is infinite. Applying \eqref{eq:go-sw} and \eqref{eq:co-vi} to \eqref{eq:homega-s} yields
\begin{align*}
%h_{{p,i},\mathrm{VI}}(s) =& \frac{s}{\check{m}_\mathrm{o} s^2 + \check{d}_\mathrm{o} s + \lambda_i} \\
%=& \left[\begin{array}{c|c}
%    \begin{bmatrix} 0 & 1 \\ -\frac{\lambda_i}{\check{m}_\mathrm{o}} & -\frac{\check{d}_\mathrm{o}}{\check{m}_\mathrm{o}} \end{bmatrix} & \begin{bmatrix} 0 \\ \frac{1}{\check{m}_\mathrm{o}} \end{bmatrix} \\\hline
%    \begin{bmatrix} 0 & 1 \end{bmatrix} & 0
%\end{array}\right]\\
\hat{h}_{{\omega,k},\mathrm{VI}}(s) =& - \frac{m_{\mathrm{v}} s^2 + r_\mathrm{r}^{-1} s}{(m + m_\mathrm{v}) s^2 + \check{d} s + \lambda_k}\;,
%=& \left[\begin{array}{c|c}
%    \begin{bmatrix} 0 & 1 \\ -\frac{\lambda_i}{\check{m}_\mathrm{o}} & -\frac{\check{d}_\mathrm{o}}{m_\mathrm{o}} \end{bmatrix} & \begin{bmatrix} 0 \\ -\frac{r_\mathrm{r,\mathrm{o}}^{-1}}{m_\mathrm{o}} \end{bmatrix} \\\hline
%    \begin{bmatrix} 0 & 1 \end{bmatrix} & 0
%\end{array}\right]
\end{align*}
% By Parseval's theorem,
% \begin{align*}
%     \|h_{{\omega,k},\mathrm{VI}}\|_{\mathcal{L}_2}^2 \!=\!& \int_0^\infty \! h_{{\omega,k},\mathrm{VI}}(t)^2 \mathrm{d}t
%     \!=\! \frac{1}{2\pi}\!\int_{-\infty}^\infty\!  h_{{\omega,k},\mathrm{VI}}(\boldsymbol{j}\omega_0)^2 \mathrm{d}\omega_0,
% \end{align*}
which by Lemma~\ref{lm:h2-4th} has $b_4 = -m_\mathrm{v}/\left(m + m_\mathrm{v}\right)\neq0$ and thus $\|\hat{h}_{{\omega,k},\mathrm{DC}}\|_{\mathcal{H}_2}^2=\infty$. Then \eqref{eq:noise-VI} follows directly from Theorem~\ref{thm:h2-sum}. 
\end{proof}
%\enrique{Is it possible to compute the optimal $r_r$ for DC?}
\begin{cor}[Optimal $r_\mathrm{r}^{-1}$ for $\|\hat{T}_{\omega\mathrm{dn}, \mathrm{DC}}\|_{\mathcal{H}_2}^2$] \label{cor:optimal-h2-dc}Let Assumptions~\ref{ass:proportion} and \ref{ass:noise} hold. Then
\begin{align}\label{eq:rr-star}
    r_\mathrm{r}^{-1\star}\!\!:=\!\argmin_{r_\mathrm{r}^{-1} > 0}\! \|\hat{T}_{\omega\mathrm{dn}, \mathrm{DC}}\|_{\mathcal{H}_2}^2\!\!=\!-d +\! \sqrt{d^2 + (\kappa_{\mathrm{p}}/\kappa_{\omega})^2}\,.
\end{align}
\end{cor}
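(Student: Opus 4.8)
The plan is to treat this as a single-variable calculus optimization. Starting from the closed-form expression supplied by Corollary~\ref{thm:noise-VI}, namely
\[
\|\hat{T}_{\omega\mathrm{dn}, \mathrm{DC}}\|_{\mathcal{H}_2}^2 = \left(\sum_{k=1}^n \Gamma_{kk}\right)\frac{\kappa_\mathrm{p}^2 + r_\mathrm{r}^{-2}\kappa_\omega^2}{2m\,(d+r_\mathrm{r}^{-1})}\;,
\]
I would first observe that the prefactor $\left(\sum_{k=1}^n\Gamma_{kk}\right)/(2m)$ is a strictly positive constant independent of the decision variable $r_\mathrm{r}^{-1}$. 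Hence the minimizer is unaffected if we drop it and, writing $x:=r_\mathrm{r}^{-1}>0$, reduce the problem to minimizing the scalar function $f(x):=(\kappa_\mathrm{p}^2+x^2\kappa_\omega^2)/(d+x)$ over the feasible set $x>0$.

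Next I would compute $f'(x)$ by the quotient rule; after clearing the positive denominator $(d+x)^2$, its numerator is the quadratic $\kappa_\omega^2 x^2 + 2d\kappa_\omega^2 x - \kappa_\mathrm{p}^2$. Setting this to zero and dividing through by $\kappa_\omega^2>0$ yields $x^2 + 2dx - (\kappa_\mathrm{p}/\kappa_\omega)^2 = 0$, whose roots are $x = -d \pm \sqrt{d^2 + (\kappa_\mathrm{p}/\kappa_\omega)^2}$. Since $d>0$ and $(\kappa_\mathrm{p}/\kappa_\omega)^2 > 0$, the square root strictly exceeds $d$, so exactly one root is feasible, namely the one carrying the plus sign, which is precisely the claimed $r_\mathrm{r}^{-1\star}$.

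Finally I would confirm that this stationary point is a global minimizer rather than a maximizer or an inflection. Because the numerator of $f'$ is an upward-opening parabola in $x$ that takes the value $-\kappa_\mathrm{p}^2<0$ at $x=0$, it is negative on $(0,x^\star)$ and positive on $(x^\star,\infty)$; consequently $f$ is strictly decreasing and then strictly increasing, so $x^\star$ is the unique minimizer on $(0,\infty)$. I do not anticipate any genuine obstacle here, as the argument is an elementary univariate optimization; the only steps requiring mild care are isolating the positive root while discarding the spurious negative one, and the sign analysis of $f'$ that certifies the critical point is indeed the minimum.
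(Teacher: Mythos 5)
Your proposal is correct and follows essentially the same route as the paper's proof: differentiate the closed-form expression from Corollary~\ref{thm:noise-VI} with respect to $r_\mathrm{r}^{-1}$, reduce the first-order condition to the quadratic $\kappa_\omega^2 x^2 + 2d\kappa_\omega^2 x - \kappa_\mathrm{p}^2 = 0$, keep the unique positive root, and certify the minimizer via the sign change of the derivative. The only cosmetic difference is that you factor out the constant $\bigl(\sum_{k=1}^n\Gamma_{kk}\bigr)/(2m)$ up front (valid here since the summand is $k$-independent), whereas the paper keeps the sum and separately argues $\Gamma_{kk}>0$ for each $k$.
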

\begin{proof}
The partial derivative of $\|\hat{T}_{\omega\mathrm{dn}, \mathrm{DC}}\|_{\mathcal{H}_2}^2$ with respect to $r_\mathrm{r}^{-1}$ is
\begin{align}\label{eq:h2-dc-partial}
\partial_{r_\mathrm{r}^{-1}}\|\hat{T}_{\omega\mathrm{dn}, \mathrm{DC}}\|_{\mathcal{H}_2}^2 = \sum_{k=1}^n \Gamma_{kk} \frac{\kappa_\omega^2 r_\mathrm{r}^{-2} \!+\! 2d\kappa_\omega^2 r_\mathrm{r}^{-1}\!-\!\kappa_\mathrm{p}^2}{2m\check{d}^2}\,.
\end{align}
%\left(\kappa_\omega^2 r_\mathrm{r}^{-2} \!+\! 2d\kappa_\omega^2 r_\mathrm{r}^{-1}\!-\!\kappa_\mathrm{p}^2\right)/\left(2m\check{d}^2\right)\,.
By equating \eqref{eq:h2-dc-partial} to 0, we can solve the corresponding $r_\mathrm{r}^{-1}$ as ${r_\mathrm{r}^{-1\star}}_\pm=-d \pm \sqrt{d^2 + (\kappa_{\mathrm{p}}/\kappa_{\omega})^2}$. The only positive root is therefore $r_\mathrm{r}^{-1\star}:=-d + \sqrt{d^2 + (\kappa_{\mathrm{p}}/\kappa_{\omega})^2}$. We now show that $\Gamma_{kk} > 0$, $\forall  k \in \{1,\dots, n\}$. Recall that $\Gamma := V^T F^{-1} V$. We know $\Gamma_{kk} = \sum_{j=1}^n ( v_{k,j}^2/f_j)$. Since $v_k$ is an eigenvector, $\forall k \in \{1,\dots, n\}$, there must exist at least one $j\in \mathcal{V}$ such that $v_{k,j}\neq0$. Since $f_i > 0$, $\forall i$, we have that $\Gamma_{kk} > 0$, $\forall  k \in \{1,\dots, n\}$. In addition, since the denominator of \eqref{eq:h2-dc-partial} is always positive and the highest order coefficient of the numerator is positive, whenever $0 < r_\mathrm{r}^{-1} < r_\mathrm{r}^{-1\star}$, then $ \partial_{r_\mathrm{r}^{-1}}\|\hat{T}_{\omega\mathrm{dn}, \mathrm{DC}}\|_{\mathcal{H}_2}^2 < 0$, and if $r_\mathrm{r}^{-1} > r_\mathrm{r}^{-1\star}$, then $ \partial_{r_\mathrm{r}^{-1}}\|\hat{T}_{\omega\mathrm{dn}, \mathrm{DC}}\|_{\mathcal{H}_2}^2 > 0$. Therefore, $r_\mathrm{r}^{-1\star}$ is the minimizer of $\|\hat{T}_{\omega\mathrm{dn}, \mathrm{DC}}\|_{\mathcal{H}_2}^2$.
\end{proof}

Two main observations can be made from Corollary~\ref{thm:noise-VI}. First, the control parameter $r_\mathrm{r}^{-1}$ of DC has an direct effect on the size of the frequency variance in the system, which makes it impossible to require DC to bear an assigned amount of steady-state effort share and reduce the frequency variance at the same time. The other important point is that VI will induce unbounded frequency variance, which poses a threat to the operation of the power system. Therefore, neither DC nor VI is good solution to improve the frequency variance without sacrificing the steady-state effort share.
%Therefore, caution is needed before using VI widely.
\subsection{Synchronization Cost}
Theorem~\ref{thm:bound-cost} implies that the synchronization cost of $\hat{T}_{\omega \mathrm{p, DC}}$ and $\hat{T}_{\omega \mathrm{p, VI}}$ are bounded by a weighted sum of $\| \hat{h}_{\mathrm{u},k,\mathrm{DC}} \|_{\mathcal{H}_2}^2$ and $\| \hat{h}_{\mathrm{u},k,\mathrm{VI}} \|_{\mathcal{H}_2}^2$, respectively. Hence, in order to see the limited ability of DC and VI to reduce the synchronization cost, we need to gain a deeper understanding of $\| \hat{h}_{\mathrm{u},k,\mathrm{DC}} \|_{\mathcal{H}_2}^2$ and $\| \hat{h}_{\mathrm{u},k,\mathrm{VI}} \|_{\mathcal{H}_2}^2$ first.

\begin{thm}[Bounds of $\| \hat{h}_{\mathrm{u},k,\mathrm{DC}} \|_{\mathcal{H}_2}^2$ and $\| \hat{h}_{\mathrm{u},k,\mathrm{VI}} \|_{\mathcal{H}_2}^2$]
\label{lem:bounds-VI}
Let Assumptions~\ref{ass:proportion} and \ref{ass:step} hold. Then, given $r_\mathrm{r}^{-1}>0$, $\forall m_\mathrm{v} > 0$,
\begin{align*}
 \dfrac{1}{2 \lambda_{k+1} \!\left(\check{d} \!+\! r_\mathrm{t}^{-1}\right)} \!\!<\! \| \hat{h}_{\mathrm{u},k,\mathrm{VI}} \|_{\mathcal{H}_2}^2 \!\!<\! \| \hat{h}_{\mathrm{u},k,\mathrm{DC}} \|_{\mathcal{H}_2}^2 \!\!<\! \| \hat{h}_{\mathrm{u},k,\mathrm{SW}} \|_{\mathcal{H}_2}^2,
\end{align*}
where $\| \hat{h}_{\mathrm{u},k,\mathrm{SW}} \|_{\mathcal{H}_2}^2$ represents the inner products of the open-loop system with no additional control from inverters. 
% and is given by %\eqref{eq:syn-cost-o}.
% \begin{align*} \label{eq:syn-cost-o}
% \| \hat{h}_{\mathrm{u},k,\mathrm{O}} \|_{\mathcal{H}_2}^2
% \!=\! \dfrac{m + \tau \left(\lambda_{k+1} \tau + d \right)}{ 2 \lambda_{k+1}\!\left[ \tau d \left(\lambda_{k+1} \tau + d + r_\mathrm{t}^{-1}\right)  \!+ m (d + r_\mathrm{t}^{-1}) \right]}.
% \end{align*}
\end{thm}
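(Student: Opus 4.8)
The plan is to collapse the entire inequality chain onto the monotonicity of a single two-parameter rational function. First I would make the three transfer functions explicit. By Lemma~\ref{lem:syncost-generic}, $\hat{h}_{\mathrm{u},k}(s) = \hat{h}_{\mathrm{p},k+1,\mathrm{T}}(s)/s$, so I substitute the turbine-triggered representative generator \eqref{eq:go-sw-tb} together with $\hat{c}_\mathrm{o}(s)=0$ (SW), $\hat{c}_\mathrm{o}(s)=-r_\mathrm{r}^{-1}$ (DC), and $\hat{c}_\mathrm{o}(s)=-(m_\mathrm{v}s+r_\mathrm{r}^{-1})$ (VI) into \eqref{eq:hp-s}. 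Clearing the $1/s$ factor cancels the $s$ in the numerator of $\hat{h}_{\mathrm{p},k+1}$, leaving, with $\lambda:=\lambda_{k+1}$, three strictly proper third-order transfer functions that all share the numerator $\tau s + 1$. Writing $\check{d}:=d+r_\mathrm{r}^{-1}$, their denominators are $m\tau s^3 + (m+d\tau)s^2 + (d+r_\mathrm{t}^{-1}+\lambda\tau)s + \lambda$ for SW, the same expression with $d$ replaced by $\check{d}$ for DC, and the DC expression with $m$ replaced by $m+m_\mathrm{v}$ for VI.

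I would then apply the third-order $\mathcal{H}_2$ formula of Remark~\ref{rem:h2-3rd}. A short computation shows that for any $(\tau s+1)/(\alpha s^3 + \beta s^2 + \gamma s + \lambda)$ the squared norm equals $(\beta+\lambda\tau^2)/\bigl(2\lambda(\beta\gamma-\lambda\alpha)\bigr)$. Carrying out the three substitutions and using the cancellations in $\beta\gamma-\lambda\alpha$, all three norms fall into one family
\[
\Psi(x,\delta) := \frac{x + \delta\tau + \lambda\tau^2}{2\lambda\bigl[(x+\delta\tau)(\delta + r_\mathrm{t}^{-1}) + \lambda\delta\tau^2\bigr]},
\]
with $\|\hat{h}_{\mathrm{u},k,\mathrm{SW}}\|_{\mathcal{H}_2}^2 = \Psi(m,d)$, $\|\hat{h}_{\mathrm{u},k,\mathrm{DC}}\|_{\mathcal{H}_2}^2 = \Psi(m,\check{d})$, and $\|\hat{h}_{\mathrm{u},k,\mathrm{VI}}\|_{\mathcal{H}_2}^2 = \Psi(m+m_\mathrm{v},\check{d})$. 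Thus DC amounts to the damping substitution $d\mapsto\check{d}$ and VI to the further inertia substitution $m\mapsto m+m_\mathrm{v}$.

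The heart of the argument is the monotonicity of $\Psi$ in each argument. Differentiating, the numerators of $\partial_x\Psi$ and $\partial_\delta\Psi$ simplify sharply: the $(x+\delta\tau)$-terms cancel, leaving $-\lambda\tau^2 r_\mathrm{t}^{-1}$ in the former and $-\lambda\tau^3 r_\mathrm{t}^{-1} - (x+\delta\tau+\lambda\tau^2)^2$ in the latter, both strictly negative under the standing positivity of all parameters. Hence $\Psi$ is strictly decreasing in $x$ and strictly decreasing in $\delta$. Since $m_\mathrm{v}>0$ gives $m+m_\mathrm{v}>m$, decreasingness in $x$ yields $\|\hat{h}_{\mathrm{u},k,\mathrm{VI}}\|_{\mathcal{H}_2}^2 < \|\hat{h}_{\mathrm{u},k,\mathrm{DC}}\|_{\mathcal{H}_2}^2$; since $r_\mathrm{r}^{-1}>0$ gives $\check{d}>d$, decreasingness in $\delta$ yields $\|\hat{h}_{\mathrm{u},k,\mathrm{DC}}\|_{\mathcal{H}_2}^2 < \|\hat{h}_{\mathrm{u},k,\mathrm{SW}}\|_{\mathcal{H}_2}^2$.

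For the leftmost bound I would note that $\lim_{x\to\infty}\Psi(x,\check{d}) = 1/\bigl(2\lambda(\check{d}+r_\mathrm{t}^{-1})\bigr)$; because $\Psi(\cdot,\check{d})$ is strictly decreasing, every finite $x$ — in particular $x=m+m_\mathrm{v}$ for any $m_\mathrm{v}>0$ — satisfies $\Psi(x,\check{d}) > 1/\bigl(2\lambda(\check{d}+r_\mathrm{t}^{-1})\bigr)$, which is exactly the claimed lower bound on the VI norm. Equivalently, cross-multiplying that inequality reduces it to $\check{d} < \check{d}+r_\mathrm{t}^{-1}$, which holds since $r_\mathrm{t}^{-1}>0$; this also exhibits the bound as the unattainable infimum as $m_\mathrm{v}\to\infty$. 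I expect the main obstacle to be purely the bookkeeping: deriving the three denominators correctly and verifying the cancellations in $\beta\gamma-\lambda\alpha$ and in the two partial-derivative numerators. Once the common form $\Psi$ and the signs of its partials are in hand, the full chain of inequalities is immediate.
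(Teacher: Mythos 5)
Your proposal is correct and follows essentially the same route as the paper: the paper also reduces all three cases to a single closed-form $\mathcal{H}_2$ expression (its $\rho(r_\mathrm{r}^{-1},m_\mathrm{v})$, which equals your $\Psi(m+m_\mathrm{v},d+r_\mathrm{r}^{-1})$), establishes strict negativity of both partial derivatives, and obtains the lower bound as the $m_\mathrm{v}\to\infty$ limit. Your reparameterization by effective inertia and damping $(x,\delta)$ instead of the control gains $(m_\mathrm{v},r_\mathrm{r}^{-1})$ is only a cosmetic difference.
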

\begin{proof}
Considering that DC can be viewed as VI with $m_\mathrm{v} = 0$ and the open-loop system can be viewed as VI with $m_\mathrm{v}=r_\mathrm{r}^{-1}=0$, we only compute $\|\hat{h}_{\mathrm{u},k,\mathrm{VI}}\|_{\mathcal{H}_2}^2$, which straightforwardly implies the other two. Applying \eqref{eq:go-sw-tb} and \eqref{eq:co-vi} to \eqref{eq:hp-s} shows $\hat{h}_{\mathrm{u},k,\mathrm{VI}}(s)=\hat{h}_{\mathrm{p},k+1,\mathrm{T,VI}}(s)/s$ is a transfer function with $b_4=a_0=b_0=0, a_1=\lambda_{k+1}/\left(\check{m} \tau\right), b_1=1/\left(\check{m} \tau\right), a_2 = \left(\check{d} + r_{\mathrm{t}}^{-1} + \lambda_{k+1} \tau\right)/\left(\check{m} \tau\right), b_2 = 1/\check{m}, a_3 = \left(\check{m} + \check{d} \tau\right)/\left(\check{m} \tau\right), b_3 =0$.
% \begin{align*}
% &\hat{h}_{\mathrm{u},k,\mathrm{DC}}(s) \\=&  \frac{1}{s} \hat{h}_{\mathrm{p},k+1,\mathrm{T,DC}}(s)\\=& \frac{\tau s + 1}{m \tau s^3 + \left(m + \check{d} \tau \right) s^2 + \left( \check{d} + r_{\mathrm{t}}^{-1} + \lambda_{k+1} \tau \right)s + \lambda_{k+1} }    
% \end{align*}
Then it follows from Lemma~\ref{lm:h2-4th} that
\begin{align} 
\|\hat{h}_{\mathrm{u},k,\mathrm{VI}}\|_{\mathcal{H}_2}^2 
    % = \dfrac{1}{2 \lambda_i}\!\dfrac{(m + m_\mathrm{v}) + \tau \left(\lambda_i \tau + d + r_\mathrm{r}^{-1} \right)}{ \left[ \begin{multlined}\tau (d + r_\mathrm{r}^{-1}) \left(\lambda_i \tau + d + r_\mathrm{r}^{-1} + r_\mathrm{t}^{-1}\right) \\ + (m + m_\mathrm{v})(d + r_\mathrm{r}^{-1} + r_\mathrm{t}^{-1}) \end{multlined} \right]}
    % 
   \!\!=\!\dfrac{\check{m} + \tau\! \left(\lambda_{k+1} \tau + \check{d} \right)}{ 2 \lambda_{k+1}\!\left[\tau \check{d} \left(\lambda_{k+1} \tau + \check{d} +\! r_\mathrm{t}^{-1}\right) \!+ \!\check{m}\!\left(\check{d} + r_\mathrm{t}^{-1}\right)\right] }. \nonumber
\end{align}
Since $\| \hat{h}_{\mathrm{u},k,\mathrm{VI}} \|_{\mathcal{H}_2}^2$ is a function of $r_\mathrm{r}^{-1}$ and $m_\mathrm{v}$, in what follows we denote it by $\rho(r_\mathrm{r}^{-1}, m_\mathrm{v})$. In order to have an insight on how $\| \hat{h}_{\mathrm{u},k,\mathrm{VI}} \|_{\mathcal{H}_2}^2$ changes with $r_\mathrm{r}^{-1}$ and $m_\mathrm{v}$, we take partial derivatives of $\rho(r_\mathrm{r}^{-1}, m_\mathrm{v})$ with respect to $r_\mathrm{r}^{-1}$ and $m_\mathrm{v}$, i.e., 
\begin{align}
	&\partial_{r_\mathrm{r}^{-1}} \rho (r_\mathrm{r}^{-1}, m_\mathrm{v}) \nonumber\\
    =& - \!\dfrac{
    \left[\check{m}+ \tau \left(\lambda_{k+1} \tau + \check{d} \right)\right]^2 + \lambda_{k+1} \tau^3  r_\mathrm{t}^{-1} }{ 2 \lambda_{k+1}\left[ \tau \check{d} \left(\lambda_{k+1} \tau + \check{d} + r_\mathrm{t}^{-1}\right)  + \check{m}(\check{d} + r_\mathrm{t}^{-1}) \right]^2}\;,\nonumber\\
    &\partial_{m_\mathrm{v}} \rho (r_\mathrm{r}^{-1}, m_\mathrm{v}) \nonumber\\
    =& - \!\dfrac{ \tau^2 r_\mathrm{t}^{-1}}{ 2 \left[ \tau \check{d} \left(\lambda_{k+1} \tau + \check{d} + r_\mathrm{t}^{-1}\right) + \check{m}(\check{d} + r_\mathrm{t}^{-1})  \right]^2}\;. \nonumber
\end{align}
Clearly, for all $r_\mathrm{r}^{-1} \geq 0$, $\partial_{r_\mathrm{r}^{-1}} \rho (r_\mathrm{r}^{-1}, m_\mathrm{v}) < 0$, which means that $\rho (r_\mathrm{r}^{-1}, m_\mathrm{v})$ is a monotonically decreasing function of $r_\mathrm{r}^{-1}$. Similarly, for all $m_\mathrm{v} \geq 0$, $\partial_{m_\mathrm{v}} \rho (r_\mathrm{r}^{-1}, m_\mathrm{v}) < 0$, which means that $\rho (r_\mathrm{r}^{-1}, m_\mathrm{v})$ is a monotonically decreasing function of $m_\mathrm{v}$. Therefore, given $r_\mathrm{r}^{-1} > 0$, $\forall m_\mathrm{v} > 0$, it holds that 
\[
\lim_{m_\mathrm{v}\to \infty} \rho(r_\mathrm{r}^{-1}, m_\mathrm{v})<\rho(r_\mathrm{r}^{-1}, m_\mathrm{v}) < \rho(r_\mathrm{r}^{-1}, 0)< \rho(0, 0)\,.
\]
Recall that $\| \hat{h}_{\mathrm{u},k,\mathrm{VI}} \|_{\mathcal{H}_2}^2 = \rho(r_\mathrm{r}^{-1}, m_\mathrm{v})$, $\| \hat{h}_{\mathrm{u},k,\mathrm{DC}} \|_{\mathcal{H}_2}^2 = \rho(r_\mathrm{r}^{-1}, 0)$, and $\| \hat{h}_{\mathrm{u},k,\mathrm{SW}} \|_{\mathcal{H}_2}^2 = \rho(0, 0)$. 
% We get that for any fixed $r_\mathrm{r}^{-1} > 0$, $\forall m_\mathrm{v} > 0$,
% \begin{align*}
%  \dfrac{1}{2 \lambda_{k+1} \!\left(\check{d} + r_\mathrm{t}^{-1}\right)} \!\!<\! \| \hat{h}_{\mathrm{u},k,\mathrm{VI}} \|_{\mathcal{H}_2}^2 \!<\! \| \hat{h}_{\mathrm{u},k,\mathrm{DC}} \|_{\mathcal{H}_2}^2 \!<\! \| \hat{h}_{\mathrm{u},k,\mathrm{O}} \|_{\mathcal{H}_2}^2,
% \end{align*}
The result follows.
\end{proof}
\begin{cor}[Comparison of synchronization cost in homogeneous case] \label{cor:syncost-homoe-bounds}
Denote the synchronization cost of the open-loop system as $\|\tilde{\omega}_\mathrm{SW}\|_2^2$. Then, under Assumptions~\ref{ass:proportion} and \ref{ass:step}, given $r_\mathrm{r}^{-1}>0$, $\forall m_\mathrm{v} > 0$, we can order the synchronization cost when $F=fI_n$ as: 
\[
\frac{\sum_{k=1}^{n-1}\left(\tilde{u}_{0,k}^2/\lambda_{k+1}\right)}{2f\left(\check{d} + r_\mathrm{t}^{-1}\right)}< \|\tilde{\omega}_\mathrm{VI}\|_2^2 < \|\tilde{\omega}_\mathrm{DC}\|_2^2 < \|\tilde{\omega}_\mathrm{SW}\|_2^2\,.
\]
\end{cor}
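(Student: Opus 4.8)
The plan is to reduce the whole statement to the per-mode bounds already established in Theorem~\ref{lem:bounds-VI}, combined with the exact homogeneous formula of Remark~\ref{rem:syncost-homo}. First I would invoke Remark~\ref{rem:syncost-homo}: since $F = fI_n$, the synchronization cost of each of the three systems collapses to the closed form $\|\tilde{\omega}\|_2^2 = f^{-1}\sum_{k=1}^{n-1}\tilde{u}_{0,k}^2\|\hat{h}_{\mathrm{u},k}\|_{\mathcal{H}_2}^2$, where $\hat{h}_{\mathrm{u},k}$ is specialized to $\hat{h}_{\mathrm{u},k,\mathrm{VI}}$, $\hat{h}_{\mathrm{u},k,\mathrm{DC}}$, or $\hat{h}_{\mathrm{u},k,\mathrm{SW}}$ for the three controllers, respectively. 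This turns the comparison of the three aggregate synchronization costs into a mode-by-mode comparison of the three modal $\mathcal{H}_2$ norms, all carrying the common nonnegative weights $\tilde{u}_{0,k}^2/f$.

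Next I would apply Theorem~\ref{lem:bounds-VI}, which supplies, for every mode index $k \in \{1,\dots,n-1\}$ and every $r_\mathrm{r}^{-1}>0$, $m_\mathrm{v}>0$, the strict chain
\[
\frac{1}{2\lambda_{k+1}(\check{d}+r_\mathrm{t}^{-1})} < \|\hat{h}_{\mathrm{u},k,\mathrm{VI}}\|_{\mathcal{H}_2}^2 < \|\hat{h}_{\mathrm{u},k,\mathrm{DC}}\|_{\mathcal{H}_2}^2 < \|\hat{h}_{\mathrm{u},k,\mathrm{SW}}\|_{\mathcal{H}_2}^2.
\]
Multiplying each inequality by the weight $\tilde{u}_{0,k}^2/f \geq 0$ and summing over $k$ preserves the ordering. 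In particular, the leftmost sum becomes exactly $\frac{\sum_{k=1}^{n-1}(\tilde{u}_{0,k}^2/\lambda_{k+1})}{2f(\check{d}+r_\mathrm{t}^{-1})}$, while the remaining three sums are precisely $\|\tilde{\omega}_\mathrm{VI}\|_2^2$, $\|\tilde{\omega}_\mathrm{DC}\|_2^2$, and $\|\tilde{\omega}_\mathrm{SW}\|_2^2$ by the first step, yielding the claimed ordering.

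The one delicate point—and the step I would treat with care—is the strictness of the inequalities. A weighted sum of strict inequalities is strict only if at least one weight is positive; if every $\tilde{u}_{0,k}$ vanished (i.e. $\tilde{u}_0 = \mathbbold{0}_{n-1}$, meaning the step disturbance excites no non-synchronous mode), all four quantities would collapse to zero and the chain would degenerate to equalities. Thus the proof implicitly relies on $\tilde{u}_0 \neq \mathbbold{0}_{n-1}$, which holds whenever the disturbance direction $u_0$ is not aligned with the uniform synchronization direction; under this mild nondegeneracy the strictness carries through verbatim from the termwise bounds. No genuinely new estimate is required—the analytic substance was already carried out in Theorem~\ref{lem:bounds-VI}—so this corollary is essentially a bookkeeping step that lifts the per-mode ordering to the aggregate synchronization cost.
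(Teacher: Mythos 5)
Your proposal is correct and follows exactly the paper's route: the paper's proof is literally ``combine Remark~\ref{rem:syncost-homo} with Theorem~\ref{lem:bounds-VI},'' which is precisely your weighted mode-by-mode summation. Your added caveat about strictness requiring $\tilde{u}_0 \neq \mathbbold{0}_{n-1}$ is a legitimate refinement that the paper leaves implicit, but it does not constitute a different approach.
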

\begin{proof}
The result follows by combining Remark~\ref{rem:syncost-homo} and Theorem~\ref{lem:bounds-VI}.
\end{proof}
\begin{cor}[Lower bound of synchronization cost under DC and VI]\label{cor:low-bound-pro}
Under Assumptions~\ref{ass:proportion} and \ref{ass:step}, the ordering of the size of the bounds on the synchronization cost of open-loop, DC, and VI depends on the parameter values. Thus we cannot order $\|\tilde{\omega}_\mathrm{VI}\|_2^2$, $\|\tilde{\omega}_\mathrm{DC}\|_2^2$, and $\|\tilde{\omega}_\mathrm{SW}\|_2^2$ strictly. Instead, we highlight that, given $r_\mathrm{r}^{-1}>0$, the synchronization cost under DC and VI are bounded below by 
\[
\frac{\sum_{k=1}^{n-1}\left(\tilde{u}_{0,k}^2/\lambda_{k+1}\right)}{2\max_{i \in \mathcal{V}} \left(f_i \right)\left(\check{d} + r_\mathrm{t}^{-1}\right)}\,.
\]
\end{cor}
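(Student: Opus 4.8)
The plan is to obtain the stated bound by chaining two results already established in the excerpt: the generic lower bound on synchronization cost in Theorem~\ref{thm:bound-cost} and the uniform per-mode lower bound in Theorem~\ref{lem:bounds-VI}. First I would recall that Theorem~\ref{thm:bound-cost} gives, for \emph{any} control law satisfying Assumptions~\ref{ass:proportion} and \ref{ass:step},
\[
\|\tilde{\omega}\|_2^2 \;\geq\; \underline{\|\tilde{\omega}\|_2^2} = \frac{\sum_{k=1}^{n-1}\tilde{u}_{0,k}^2\,\| \hat{h}_{\mathrm{u},k} \|_{\mathcal{H}_2}^2}{\max_{i \in \mathcal{V}} \left(f_i \right)},
\]
and this applies in particular to the DC and VI cases, with $\hat{h}_{\mathrm{u},k}$ specialized to $\hat{h}_{\mathrm{u},k,\mathrm{DC}}$ or $\hat{h}_{\mathrm{u},k,\mathrm{VI}}$.

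Next I would invoke the leftmost strict inequality in Theorem~\ref{lem:bounds-VI}. That theorem established $\| \hat{h}_{\mathrm{u},k,\mathrm{DC}} \|_{\mathcal{H}_2}^2 > \| \hat{h}_{\mathrm{u},k,\mathrm{VI}} \|_{\mathcal{H}_2}^2 > \tfrac{1}{2 \lambda_{k+1}(\check{d} + r_\mathrm{t}^{-1})}$, so both $\mathcal{H}_2$ norms are bounded below by the \emph{same} control-independent quantity $\tfrac{1}{2 \lambda_{k+1}(\check{d} + r_\mathrm{t}^{-1})}$ for every mode $k$. Since each weight $\tilde{u}_{0,k}^2 \geq 0$, substituting this modal bound term by term into $\underline{\|\tilde{\omega}\|_2^2}$ and factoring the constant $\tfrac{1}{2(\check{d} + r_\mathrm{t}^{-1})}$ out of the sum yields
\[
\|\tilde{\omega}\|_2^2 \geq \frac{\sum_{k=1}^{n-1}\left(\tilde{u}_{0,k}^2/\lambda_{k+1}\right)}{2\max_{i \in \mathcal{V}} \left(f_i \right)\left(\check{d} + r_\mathrm{t}^{-1}\right)},
\]
which is exactly the claimed lower bound for both DC and VI.

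For the preliminary assertion that the costs cannot be strictly ordered, I would point out that Theorem~\ref{thm:bound-cost} only sandwiches each cost between a lower and an upper bound that differ by the factor $\max_i f_i/\min_i f_i$; when $F$ is non-homogeneous these sandwiching intervals for open-loop, DC, and VI can overlap, so the strict \emph{modal} ordering of the $\mathcal{H}_2$ norms in Theorem~\ref{lem:bounds-VI} no longer transfers to a strict ordering of the \emph{aggregate} costs. In the homogeneous case $F=fI_n$ the intervals collapse and Corollary~\ref{cor:syncost-homoe-bounds} recovers the strict order, which explains why the distinction is parameter-dependent. I do not expect any step to present a genuine obstacle, since the result is a direct composition of the two cited theorems; the only point requiring care is strictness, namely that the modal inequalities are strict but, after weighting by $\tilde{u}_{0,k}^2$ (which may vanish) and summing, the aggregate statement is correctly phrased as the non-strict bound ``$\geq$'' appearing in the corollary.
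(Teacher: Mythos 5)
Your proposal is correct and follows exactly the paper's route: the paper's proof is simply the composition of Theorem~\ref{thm:bound-cost}'s lower bound with the control-independent modal bound $\| \hat{h}_{\mathrm{u},k,\mathrm{VI}} \|_{\mathcal{H}_2}^2 > 1/\bigl(2\lambda_{k+1}(\check{d}+r_\mathrm{t}^{-1})\bigr)$ from Theorem~\ref{lem:bounds-VI}, which you have spelled out term by term. Your added remarks on why the aggregate bound is non-strict (the weights $\tilde{u}_{0,k}^2$ may vanish) and why the sandwiching intervals can overlap in the non-homogeneous case are sound refinements of what the paper leaves implicit.
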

\begin{proof}
The result follows from Theorems~\ref{thm:bound-cost} and~\ref{lem:bounds-VI}.
\end{proof}
Corollary~\ref{cor:syncost-homoe-bounds} provides both upper and lower bounds for the synchronization cost under DC and VI in homogeneous case. The upper bound verifies that DC and VI do reduce the synchronization cost by adding damping and inertia while the lower bound indicates that the reduction of the synchronization cost through DC and VI is limited by certain value that is dependent on $r_\mathrm{r}^{-1}$. Corollary~\ref{cor:low-bound-pro} implies that in the proportional case the synchronization cost under DC and VI is also bounded below by a value that is dependent on $r_\mathrm{r}^{-1}$. The fact that the lower bound of the synchronization cost under DC and VI is reduced as $r_\mathrm{r}^{-1}$ increases is not satisfactory, since, from the stead-state effort share point of view, a smaller $r_\mathrm{r}^{-1}$ is preferred. However, given a small $r_\mathrm{r}^{-1}$, even if the inertia is very high, i.e., $m_\mathrm{v}\to\infty$, the synchronization cost $\|\tilde{\omega}_\mathrm{VI}\|_2^2$ can never reach zero, not to mention $\|\tilde{\omega}_\mathrm{DC}\|_2^2$.
\subsection{Nadir}
Finally, with the help of Theorem~\ref{th:nonadir-con}, we can determine the conditions that the parameters of DC and VI must satisfy to eliminate Nadir of the system frequency.
\begin{thm}[Nadir elimination under DC and VI]\label{thm:no-nadir-cond-VI}
Under Assumptions~\ref{ass:proportion} and \ref{ass:step}: 
\begin{itemize}
\item
for $\hat{T}_{\omega \mathrm{p, DC}}$, 
the tuning region that eliminates Nadir through DC is $r_{\mathrm{r}}^{-1}$ such that
\begin{align}\label{eq:nadir-DC}
    r_{\mathrm{r}}^{-1} \leq  m\left(\tau^{-1} - 2\sqrt{\tau^{-1}r_{\mathrm{t}}^{-1}/m}\right)-d\;;
\end{align}
% \begin{align}\label{eq:nadir-DC}
%     r_{\mathrm{r}}^{-1} \leq  m\left(\tau^{-1} - 2\sqrt{ \cfrac{\tau^{-1}r_{\mathrm{t}}^{-1}}{m}}\right)-d\;;
% \end{align}
\item
for $\hat{T}_{\omega \mathrm{p, VI}}$,
the tuning region that eliminates Nadir through VI is $(r_{\mathrm{r}}^{-1}, m_{\mathrm{v}})$ such that 
\begin{align}\label{eq:nadir-VI}
     r_{\mathrm{r}}^{-1} \!\leq \! \left(m\!+\! m_{\mathrm{v}}\right)\!\left(\tau^{-1}\! -\! 2\sqrt{\tau^{-1}r_{\mathrm{t}}^{-1}\!/\!\left(m\!+\! m_{\mathrm{v}}\right)}\right)\!-d\;.
\end{align}
% \begin{align}\label{eq:nadir-VI}
%      r_{\mathrm{r}}^{-1} \leq  \left(m+ m_{\mathrm{v}}\right)\left(\tau^{-1} - 2\sqrt{ \cfrac{\tau^{-1}r_{\mathrm{t}}^{-1}}{m+ m_{\mathrm{v}}}}\right)-d\;.
% \end{align}
\end{itemize}
\end{thm}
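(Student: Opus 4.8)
The plan is to reduce the Nadir question to the second-order criterion of Theorem~\ref{th:nonadir-con}, applied to the single mode that governs the system frequency $\bar{\omega}$. First I would show that $\bar{\omega}$ is carried entirely by the $\lambda_1=0$ mode. Since Assumption~\ref{ass:step} sets $d_\mathrm{p}=n_\omega=\mathbbold{0}_n$, $p_\mathrm{in}=u_0\mathds{1}_{t\geq0}$, the response is $\hat{\omega}(s)=\hat{T}_{\omega\mathrm{p}}(s)u_0/s$ with $\hat{T}_{\omega\mathrm{p}}$ as in \eqref{eq:Tp}. Using $\bar{\omega}=\mathbbold{1}_n^T F\omega/(\mathbbold{1}_n^T F\mathbbold{1}_n)$ (recall $M=mF$), the fact that the first column of $V$ is $v_1=(\sum_i f_i)^{-\frac12}F^{\frac12}\mathbbold{1}_n$, and $V^TV=I_n$ (so $v_1^TV=e_1^T$), the cross terms collapse and I obtain $\hat{\bar{\omega}}(s)=(\sum_i u_{0,i}/\sum_i f_i)\,\hat{h}_{\mathrm{p},1}(s)/s$. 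Hence the step response of $\bar{\omega}$ equals, up to a nonzero scalar, that of the scalar transfer function $\hat{h}_{\mathrm{p},1}$; since scaling by a nonzero constant neither creates nor destroys overshoot, $\bar{\omega}$ has a Nadir if and only if the step response of $\hat{h}_{\mathrm{p},1}$ does.

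Next I would specialize $\hat{h}_{\mathrm{p},1}$ to the triggered-turbine DC closed loop. Because $\omega_\epsilon=0$ under Assumption~\ref{ass:step} the representative generator is \eqref{eq:go-sw-tb}; substituting it together with $\hat{c}_\mathrm{o}=-r_\mathrm{r}^{-1}$ from \eqref{eq:co-dc} and $\lambda_1=0$ into \eqref{eq:hp-s}, and abbreviating $\check{d}:=d+r_\mathrm{r}^{-1}$, gives
\[
\hat{h}_{\mathrm{p},1,\mathrm{T,DC}}(s)=\frac{\tau s+1}{m\tau s^2+(m+\check{d}\tau)s+\check{d}+r_\mathrm{t}^{-1}},
\]
which is exactly the form in Theorem~\ref{th:nonadir-con} with $K=1/m$, $z=1/\tau$, $\omega_\mathrm{n}=\sqrt{(\check{d}+r_\mathrm{t}^{-1})/(m\tau)}$, and $\xi=(m+\check{d}\tau)/(2\sqrt{m\tau(\check{d}+r_\mathrm{t}^{-1})})$; note $K,z,\omega_\mathrm{n},\xi>0$ as the theorem requires.

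The crux is then simplifying condition \eqref{eq:nonadir-con}. I would first dispose of the braced (second) branch: writing $z/\omega_\mathrm{n}+\omega_\mathrm{n}/z=(m+\tau(\check{d}+r_\mathrm{t}^{-1}))/\sqrt{m\tau(\check{d}+r_\mathrm{t}^{-1})}$, the inequality $\xi\geq(z/\omega_\mathrm{n}+\omega_\mathrm{n}/z)/2$ collapses, after clearing the common positive denominator, to $0\geq\tau r_\mathrm{t}^{-1}$, which is impossible. Thus the braced branch is vacuous and only $1\leq\xi\leq z/\omega_\mathrm{n}$ can hold. Since $\xi=(z/\omega_\mathrm{n})\cdot(m+\check{d}\tau)/(2m)$, the constraint $\xi\leq z/\omega_\mathrm{n}$ is equivalent to $\check{d}\leq m/\tau$, i.e. $m-\check{d}\tau\geq0$. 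The constraint $\xi\geq1$, after squaring the positive inequality $m+\check{d}\tau\geq2\sqrt{m\tau(\check{d}+r_\mathrm{t}^{-1})}$, reduces to $(m-\check{d}\tau)^2\geq4m\tau r_\mathrm{t}^{-1}$; together with $m-\check{d}\tau\geq0$ this yields $m-\check{d}\tau\geq2\sqrt{m\tau r_\mathrm{t}^{-1}}$, i.e. $\check{d}\leq m/\tau-2\sqrt{mr_\mathrm{t}^{-1}/\tau}$. Substituting $\check{d}=d+r_\mathrm{r}^{-1}$ and rearranging produces \eqref{eq:nadir-DC}; note the derived bound already forces $\check{d}<m/\tau$, so the two branch-1 constraints are consistent.

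Finally, for VI I would recompute $\hat{h}_{\mathrm{p},1}$ with $\hat{c}_\mathrm{o}=-(m_\mathrm{v}s+r_\mathrm{r}^{-1})$ from \eqref{eq:co-vi}; the extra $m_\mathrm{v}s$ term combines with the swing dynamics so that $\hat{h}_{\mathrm{p},1,\mathrm{T,VI}}$ takes exactly the same form as $\hat{h}_{\mathrm{p},1,\mathrm{T,DC}}$ with $m$ replaced by $m+m_\mathrm{v}$. Consequently the entire simplification carries over verbatim under $m\mapsto m+m_\mathrm{v}$, giving \eqref{eq:nadir-VI}. I expect the main obstacle to be the bookkeeping in this crux step—specifically recognizing that the braced branch of \eqref{eq:nonadir-con} is never feasible, and correctly pinning the sign $m-\check{d}\tau\geq0$ both before and after squaring—rather than anything conceptually deep.
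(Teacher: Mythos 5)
Your proposal is correct and follows essentially the same route as the paper's proof: reduce the system frequency $\bar{\omega}$ to the step response of the single $\lambda_1=0$ mode $\hat{h}_{\mathrm{p},1,\mathrm{T}}(s)$, apply the second-order criterion of Theorem~\ref{th:nonadir-con}, show the braced branch of \eqref{eq:nonadir-con} is infeasible (your $0\geq\tau r_\mathrm{t}^{-1}$ computation is the paper's $\check{d}/\check{m}<(\check{d}+r_\mathrm{t}^{-1})/\check{m}$ in disguise), and reduce $1\leq\xi\leq z/\omega_\mathrm{n}$ to the quadratic bound on $\check{d}$, with DC and VI related by $m\mapsto m+m_\mathrm{v}$ (the paper merely works in the opposite direction, deriving VI and setting $m_\mathrm{v}=0$). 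The only substantive addition is that you derive the modal-reduction identity $\hat{\bar{\omega}}(s)=\bigl(\sum_i u_{0,i}/\sum_i f_i\bigr)\hat{h}_{\mathrm{p},1}(s)/s$ explicitly, whereas the paper cites it from \cite{p2017ccc}.
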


\begin{proof}
We start by deriving the Nadir elimination condition for VI.
The system frequency of $\hat{T}_{\omega \mathrm{p, VI}}$ is given by \cite{p2017ccc}
\begin{equation*}
	\bar{\omega}_\mathrm{VI}(t) =  \dfrac{\sum_{i=1}^n u_{0,i} }{ \sum_{i=1}^n f_i } p_\mathrm{u,VI}(t)\;,
\end{equation*}
where $p_\mathrm{u,VI}(t)$ is the unit-step response of $\hat{h}_{{\mathrm{p},1},\mathrm{T, VI}}(s)$. Clearly, as long as $p_\mathrm{u,VI}(t)$ has no Nadir, neither does $\bar{\omega}_\mathrm{VI}(t)$. Thus, as shown later, the core is to apply Theorem~\ref{th:nonadir-con} to $\hat{h}_{{\mathrm{p},1},\mathrm{T, VI}}(s)$.
Substituting \eqref{eq:go-sw-tb} and \eqref{eq:co-vi} to \eqref{eq:hp-s} yields
\begin{align*}
\hat{h}_{\mathrm{p},1,\mathrm{T,VI}}(s)
= \frac{1}{\check{m}}\dfrac{s + \tau^{-1}}{ s^2 + 2\xi\omega_\mathrm{n} s + \omega_\mathrm{n}^2 }\;,\nonumber
\end{align*}
where $\omega_\mathrm{n} := \sqrt{\cfrac{\check{d} +r_{\mathrm{t}}^{-1}}{\check{m}\tau}}\;,\quad
    \xi := \dfrac{\tau^{-1}+\check{d}/\check{m}}{2\sqrt{\left(\check{d} +r_{\mathrm{t}}^{-1}\right)/\left(\check{m}\tau\right)}}\;.$
Now we are ready to search the Nadir elimination tuning region by means of Theorem~\ref{th:nonadir-con}. An easy computation shows the following inequality: $2\xi\omega_\mathrm{n} - \tau^{-1} = \check{d}/\check{m} < \left(\check{d} +r_{\mathrm{t}}^{-1}\right)/\check{m}=\omega^2_\mathrm{n} \tau$.
% \[
% 2\xi\omega_\mathrm{n} - \tau^{-1} = \frac{\check{d}}{\check{m}} < \frac{\check{d} +r_{\mathrm{t}}^{-1}}{\check{m}}=\omega^2_\mathrm{n} \tau\;.
% \]
Equivalently, it holds that $\xi < \left[1/\left(\omega_\mathrm{n}\tau\right)+ \omega_\mathrm{n} \tau\right]/2$, 
% \[
% \xi < \frac{1}{2}\left(\frac{1}{\omega_\mathrm{n}\tau}+ \omega_\mathrm{n} \tau\right)\;,
% \]
which indicates that the second set of conditions in \eqref{eq:nonadir-con} cannot be satisfied. Hence, we turn to the first set of conditions in \eqref{eq:nonadir-con}, which holds if and only $\xi \geq 1$ and $\xi\omega_\mathrm{n} \leq \tau^{-1}$. Via simple algebraic computations, this is equivalent to 
\begin{align}\label{eq:cond-nonadir}
\tau \check{d}^2 /\check{m} - 2  \check{d} + \tau^{-1}\check{m} - 4r_{\mathrm{t}}^{-1} \!\geq 0 \quad\text{and}\quad
\check{d}/\check{m} \!\leq \tau^{-1}.
\end{align}
% \begin{align}\label{eq:cond-nonadir}
% \cfrac{\tau}{\check{m}}\check{d}^2 - 2  \check{d} + \tau^{-1}\check{m} - 4r_{\mathrm{t}}^{-1} \geq 0 \quad\text{and}\quad
% \cfrac{\check{d}}{\check{m}} \leq\tau^{-1}\,.
%     % \begin{cases}
%     % \cfrac{\tau\check{d}^2}{\check{m}} - 2  \check{d} + \tau^{-1}\check{m} - 4r_{\mathrm{t}}^{-1} \geq 0\\
%     % \cfrac{\check{d}}{\check{m}} \leq\tau^{-1}
%     % \end{cases}\!\!\!\!\!\!\!.
% \end{align}
The first condition in \eqref{eq:cond-nonadir} can be viewed as a quadratic inequality with respect to $\check{d}$, which holds if and only if 
\begin{align*}
    \check{d} \leq  \check{m}\left(\tau^{-1} - 2\sqrt{ \cfrac{r_{\mathrm{t}}^{-1}}{\check{m}\tau}}\right)
    \quad\text{or}\quad
    \check{d} \geq  \check{m}\left(\tau^{-1} + 2\sqrt{ \cfrac{r_{\mathrm{t}}^{-1}}{\check{m}\tau}}\right)\,.
\end{align*}
However, only the former region satisfies the second condition in \eqref{eq:cond-nonadir}. This concludes the proof of the second statement. The first statement follows trivially by setting $m_{\mathrm{v}}=0$.
\end{proof}

% As proved in \cite[Proposition 4]{p2017ccc}, under the under-damped condition, the nadir is a monotonically decreasing function of inertia. Hence, for any fixed $r_\mathrm{r}^{-1} > 0$, $\|\bar{\omega}_\mathrm{VI}\|_\infty < \|\bar{\omega}_\mathrm{DC}\|_\infty$, $\forall m_\mathrm{v} > 0$ since VI can provide more inertia while DC cannot.

% {\tre Preceding Theorems quantify the synchronization cost and nadir of the system under DC and VI. We are curious about whether we can find a novel control law to also lower the synchronization cost and shallow the nadir.}
Important inferences can be made from Theorem~\ref{thm:no-nadir-cond-VI}. The fact that a small $m$ tends to make the term on the right hand side of \eqref{eq:nadir-DC} negative implies that in a low-inertia power system it is impossible to eliminate Nadir using only DC. Undoubtedly, the addition of $m_\mathrm{v}$ makes the tuning region in \eqref{eq:nadir-VI} more accessible, which indicates that VI can help a low-inertia power system improve Nadir.

We end this section by summarizing the pros and cons of each controller.
\begin{itemize}
    \item \textbf{Droop control:}~With only one parameter $r_{\mathrm{r}}^{-1}$, DC can neither reduce frequency variance or synchronization cost without affecting steady-state effort share. Moreover, for low-inertia systems, DC cannot eliminate Nadir.
    \item \textbf{Virtual inertia:}~ VI can use its additional dynamic parameter $m_\mathrm{v}$ to eliminate system Nadir and relatively improve synchronization cost. However this comes at the price of introducing large frequency variance in response to noise, and cannot be decoupled from increases in the steady-state effort share.
\end{itemize}

\section{Dynam-i-c Droop Control (iDroop)}\label{sec:idroop}
We now show how, by moving away from the broadly proposed approach of mimicking generators response, one can overcome the weaknesses presented in the previous section.
With this aim, we introduce an alternative dynam-i-c Droop (iDroop) controller that uses dynamic feedback to make a trade-off among the several different objectives described in Section \ref{ssec:metrics}. The proposed solution is described below.
% If instead of DC and VI, iDroop is adopted to control the output power of the inverter based on the measurement of local grid frequency, then the inverter dynamics is given by:
% \begin{subequations}\label{eq:idroop}
% \begin{align} 
% 	q_{\mathrm{r},i} =& - x_i- q_{\mathrm{r}_0,i} \;,\\
%     \dot{x}_i =& - \delta_i \left( r_{\mathrm{r},i}^{-1} \omega_i+ x_i\right) - \nu_i \dot{\omega}_i\;,
% \end{align}	
% \end{subequations}
\begin{dyn-i}[Dynamic Droop Control]
The dynamics of an inverter with iDroop is given by the transfer function
\begin{equation} \label{eq:dy-idroop}
\hat{c}_i(s) = -\dfrac{\nu_i s + \delta_i r_{\mathrm{r},i}^{-1}}{s + \delta_i} \;,
\end{equation}
where $\delta_i>0$ and $\nu_i>0$ are tunable parameters.
\end{dyn-i}
% As before, under Assumptions~\ref{ass:proportion} and \ref{ass:noise}, the power network model can be represented by Fig.~\ref{fig:block-diag} with the representative generator being in standard swing dynamics, i.e., $\hat{g}_\mathrm{o}(s)$ is given by \eqref{eq:go-sw}. If in addition the representative inverter is iDroop, i.e.,
% \begin{equation}\label{eq:co-idroop}
% \hat{c}_\mathrm{o}(s) = -\dfrac{\nu s + \delta r_\mathrm{r}^{-1}}{s + \delta},
% \end{equation}
% we denote this system by $T_\mathrm{iDroop}$.

% Under Assumptions~\ref{ass:proportion} and \ref{ass:step}, the power network model can be represented by Fig.~\ref{fig:block-diag} with the representative generator being in swing dynamics with a turbine triggered, i.e. $g_\mathrm{o}(s)$ is given by \eqref{eq:go-sw-tb}. If in addition the representative inverter is iDroop, i.e. $c_\mathrm{o}(s)$ is given by \eqref{eq:co-idroop}, we denote this system by $T_\mathrm{T, iDroop}$.

Similarly to \eqref{eq:go-sw} and \eqref{eq:go-sw-tb}, one can define a representative iDroop inverter controller as 
% suppose that the diagonalized power network model in Fig.~\ref{fig:block-diag} obtained under Assumption~\ref{ass:proportion} is with the representative generator $\hat{g}_\mathrm{o}(s)$ given by \eqref{eq:go-sw} or \eqref{eq:go-sw-tb} depending on whether additionally Assumption~\ref{ass:noise} or Assumption~\ref{ass:step} holds, but now is with the representative inverter given by
\begin{equation}\label{eq:co-idroop}
\hat{c}_\mathrm{o}(s) = -\dfrac{\nu s + \delta r_\mathrm{r}^{-1}}{s + \delta}
\end{equation}
with $\nu_i=f_i\nu$, $r_{\mathrm{r},i}=r_{\mathrm{r}}/f_i$, and $\delta_i=\delta$.
% since inverters are under iDroop.

In the rest of this section, we expose iDroop to the same performance analysis done for DC and VI in Section \ref{sec:need}. 
\subsection{Steady-state Effort Share}
We can show that iDroop is able to preserve the steady-state behavior given by DC and VI.
\begin{cor}[Synchronous frequency under iDroop]\label{lem:syn-fre-idroop}
Let Assumption~\ref{ass:step} hold. If $q_{\mathrm{r},i}$ is under the control law \eqref{eq:dy-idroop}, then the steady-state frequency deviation of the system $\hat{T}_{\omega \mathrm{p, iDroop}}$  synchronizes to the synchronous frequency given by \eqref{eq:ome-syn-dc}.
\end{cor}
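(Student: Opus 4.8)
The plan is to reduce the claim to Lemma~\ref{lem:syn-fre}, which already gives the synchronous frequency \eqref{eq:ome-syn} for an \emph{arbitrary} control law $\hat{c}_i(s)$ under Assumption~\ref{ass:step}. The crucial observation is that \eqref{eq:ome-syn} depends on the inverter dynamics only through the dc gain $\hat{c}_i(0)$; the full lead/lag transfer function is irrelevant at steady state. Hence the whole argument collapses to evaluating this single number for iDroop.

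First I would compute the dc gain of \eqref{eq:dy-idroop} by setting $s=0$. Using $\delta_i>0$ to cancel the common factor $\delta_i$ gives $\hat{c}_i(0) = -\delta_i r_{\mathrm{r},i}^{-1}/\delta_i = -r_{\mathrm{r},i}^{-1}$, so iDroop has exactly the same dc gain as DC \eqref{eq:dy-dc} and VI \eqref{eq:dy-vi}. This is the entire substance of the corollary: the extra pole at $-\delta_i$ and the numerator zero shape only the transient, leaving the steady-state gain equal to the droop coefficient.

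Next I would substitute $\hat{c}_i(0) = -r_{\mathrm{r},i}^{-1}$ into \eqref{eq:ome-syn}, turning the denominator term $-\hat{c}_i(0)$ into $+r_{\mathrm{r},i}^{-1}$ and recovering \eqref{eq:ome-syn-dc} verbatim. This mirrors the one-line derivation of Corollary~\ref{lem:syn-fre-dc}; in fact Corollary~\ref{lem:syn-fre-idroop} can simply invoke Lemma~\ref{lem:syn-fre} after noting the shared dc gain, so that no separate steady-state computation is needed.

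I do not expect any genuine obstacle here, since the only conceptual content is that iDroop is engineered so that its dc gain coincides with $-r_{\mathrm{r},i}^{-1}$. The one hypothesis worth confirming is that Lemma~\ref{lem:syn-fre} is applicable, i.e., that $\hat{T}_{\omega\mathrm{p,iDroop}}$ actually synchronizes to a common frequency; this is guaranteed by the standing internal-stability assumption, which validates the steady-state (final-value) reasoning underlying Lemma~\ref{lem:syn-fre}.
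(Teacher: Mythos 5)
Your proposal is correct and matches the paper's proof, which simply states that the result follows directly from Lemma~\ref{lem:syn-fre}; your explicit evaluation of the dc gain $\hat{c}_i(0) = -r_{\mathrm{r},i}^{-1}$ from \eqref{eq:dy-idroop} and substitution into \eqref{eq:ome-syn} is exactly the calculation the paper leaves implicit.
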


\begin{proof}
The result follows directly from Lemma~\ref{lem:syn-fre}.
\end{proof}
%Corollary~\ref{lem:syn-fre-idroop} indicates that the steady-state behavior of iDroop is identical to that of DC and VI. Now, it is easy to show that iDroop preserves the steady-state effort share of DC and VI.
\begin{cor}[Steady-state effort share of iDroop]\label{thm:ss-idroop}
Let Assumption~\ref{ass:step} hold. If $q_{\mathrm{r},i}$ is under the control law \eqref{eq:dy-idroop}, then the steady-state effort share of the system $\hat{T}_{\omega \mathrm{p, iDroop}}$ is given by \eqref{eq:es-ratio}.
\end{cor}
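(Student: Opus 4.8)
The plan is to reduce the claim to a single evaluation of the iDroop transfer function at $s=0$ and then invoke the general steady-state characterization already established. The crucial observation is that Theorem~\ref{thm:ss-es} expresses the steady-state effort share purely in terms of the dc gains $\hat{c}_i(0)$ of the inverter controllers; the detailed behavior of $\hat{c}_i(s)$ away from $s=0$ is irrelevant to this particular metric. Hence it suffices to show that the iDroop controller \eqref{eq:dy-idroop} shares the same dc gain as DC and VI.

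First I would compute $\hat{c}_i(0)$ for iDroop. Substituting $s=0$ into \eqref{eq:dy-idroop} gives
\[
\hat{c}_i(0) = -\frac{\nu_i \cdot 0 + \delta_i r_{\mathrm{r},i}^{-1}}{0 + \delta_i} = -r_{\mathrm{r},i}^{-1}\,,
\]
where the evaluation is well-defined because $\delta_i>0$ guarantees a nonzero denominator. This is exactly the dc gain of the DC controller \eqref{eq:dy-dc} and of the VI controller \eqref{eq:dy-vi}, so the lead/lag pole at $-\delta_i$ and the numerator coefficient $\nu_i$ cancel out at zero frequency.

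Finally I would apply Theorem~\ref{thm:ss-es} (equivalently, Corollary~\ref{thm:ss-DC}) with $\hat{c}_i(0)=-r_{\mathrm{r},i}^{-1}$. Plugging this value into the expression of Theorem~\ref{thm:ss-es} and taking the absolute value---noting that $d_i,\,r_{\mathrm{t},i}^{-1},\,r_{\mathrm{r},i}^{-1}>0$ render the denominator strictly positive so the sign is unambiguous---immediately yields the desired formula \eqref{eq:es-ratio}.

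There is essentially no obstacle here: the result is a direct corollary of Theorem~\ref{thm:ss-es}, and the only content is the one-line dc-gain computation. The conceptual point worth emphasizing is that the lead/lag structure of iDroop, encoded by $\nu_i$ and the pole at $-\delta_i$, shapes only the transient response and leaves the dc gain---and therefore the steady-state effort share---unchanged from DC and VI. This decoupling of steady-state effort from dynamic shaping is precisely the design feature advertised for iDroop, and it is what the corollary formalizes.
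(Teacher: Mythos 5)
Your proposal is correct and follows exactly the paper's argument: the paper likewise proves this by applying Theorem~\ref{thm:ss-es} to \eqref{eq:dy-idroop}, which amounts to the same dc-gain computation $\hat{c}_i(0)=-r_{\mathrm{r},i}^{-1}$ that you spell out. Your version simply makes explicit the one-line evaluation that the paper leaves implicit.
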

\begin{proof}
The result follows directly from Theorem~\ref{thm:ss-es} applied to \eqref{eq:dy-idroop}.
\end{proof}

Corollaries~\ref{lem:syn-fre-idroop} and~\ref{thm:ss-idroop} suggest that iDroop achieves the same synchronous frequency and steady-state effort share as DC and VI do, which depend on $r_{\mathrm{r},i}^{-1}$. Note that besides $r_{\mathrm{r},i}^{-1}$ iDroop provides us with two more degrees of freedom by $\delta_i$ and $\nu_i$. 
% Thus, iDroop makes it possible to improve the dynamic performance without affecting the steady-state performance.

\subsection{Power Fluctuations and Measurement Noise}

%\paragraph{Power Disturbances and Measurement Noise}

The next theorem quantifies the frequency variance under iDroop through the squared $\mathcal{H}_2$ norm of the system $\hat{T}_{\omega\mathrm{dn}, \mathrm{iDroop}}$.

\begin{cor}[Frequency variance under iDroop]\label{thm:noise-idroop}
Let Assumptions~\ref{ass:proportion} and \ref{ass:noise} hold. The squared $\mathcal{H}_2$ norm of $\hat{T}_{\omega\mathrm{dn}, \mathrm{iDroop}}$ is given by
\begin{align}
    & \|\hat{T}_{\omega\mathrm{dn}, \mathrm{iDroop}}\|^2_{\mathcal{H}_2}\label{eq:h2-idroop}\\&= \sum_{k=1}^n \Gamma_{kk} {\dfrac{(\kappa_\mathrm{p}^2 + r_\mathrm{r}^{-2} \kappa_\omega^2) m \delta^2 + (\kappa_\mathrm{p}^2+\nu^2 \kappa_\omega^2)\left(\check{d} \delta + \lambda_k \right) }{2m\left[\check{d}m\delta^2 + (d+\nu)\left(\check{d}\delta + \lambda_k \right)\right]}}.\nonumber
\end{align}
\end{cor}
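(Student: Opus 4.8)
The plan is to follow exactly the same template used in the proofs of Corollary~\ref{thm:noise-VI} for DC and VI: reduce the computation of $\|\hat{T}_{\omega\mathrm{dn}, \mathrm{iDroop}}\|^2_{\mathcal{H}_2}$ to a weighted sum of scalar $\mathcal{H}_2$ norms via Theorem~\ref{thm:h2-sum}, and then evaluate each scalar norm using Lemma~\ref{lm:h2-4th} (or its lower-order specializations in Remark~\ref{rem:h2-3rd}). The first step is to apply Theorem~\ref{thm:h2-sum}, which under Assumptions~\ref{ass:proportion} and~\ref{ass:noise} gives
\[
\|\hat{T}_{\omega\mathrm{dn}, \mathrm{iDroop}}\|^2_{\mathcal{H}_2} = \sum_{k=1}^n \Gamma_{kk}\left(\kappa_\mathrm{p}^2 \|\hat{h}_{\mathrm{p},k,\mathrm{iDroop}}\|_{\mathcal{H}_2}^2 + \kappa_\omega^2 \|\hat{h}_{\omega,k,\mathrm{iDroop}}\|_{\mathcal{H}_2}^2\right),
\]
so it suffices to compute the two scalar norms $\|\hat{h}_{\mathrm{p},k,\mathrm{iDroop}}\|_{\mathcal{H}_2}^2$ and $\|\hat{h}_{\omega,k,\mathrm{iDroop}}\|_{\mathcal{H}_2}^2$ for a generic mode $k$.

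The second step is to form these two transfer functions explicitly. Since Assumption~\ref{ass:noise} asserts the turbines are not triggered, I would substitute the standard swing representative generator \eqref{eq:go-sw}, namely $\hat{g}_\mathrm{o}(s) = 1/(ms+d)$, together with the representative iDroop controller \eqref{eq:co-idroop} into the closed-loop formula \eqref{eq:hp-s} for $\hat{h}_{\mathrm{p},k}$ and then into \eqref{eq:homega-s} for $\hat{h}_{\omega,k} = \hat{c}_\mathrm{o}(s)\hat{h}_{\mathrm{p},k}(s)$. Because $\hat{c}_\mathrm{o}(s)$ now carries the first-order lead/lag denominator $s+\delta$, clearing fractions in \eqref{eq:hp-s} raises the order: $\hat{h}_{\mathrm{p},k,\mathrm{iDroop}}$ will be a \emph{third}-order (strictly proper) rational function rather than the second-order ones appearing for DC. I expect a denominator proportional to $\check{d}m\delta^2 + (d+\nu)(\check{d}\delta+\lambda_k)$ after appropriate scaling, matching the target expression, and the key bookkeeping is to read off the coefficients $a_0,a_1,a_2,a_3$ and $b_0,b_1,b_2,b_3,b_4$ in the normalized form required by Lemma~\ref{lm:h2-4th}.

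The third step is the mechanical application of Lemma~\ref{lm:h2-4th}: I would normalize each transfer function so the leading denominator coefficient is $1$, verify $b_4=0$ (both $\hat{h}_{\mathrm{p},k}$ and $\hat{h}_{\omega,k}$ should remain strictly proper, since the lead term $\nu s$ in the numerator of $\hat{c}_\mathrm{o}$ is still dominated by the second-order generator dynamics in the loop), and then plug the coefficients into formula \eqref{eq:4throderh2} via the $\zeta$-coefficients of \eqref{eq:zeta}, using the third-order specialization ($a_0=b_0=0$) from Remark~\ref{rem:h2-3rd}. Combining the resulting two scalar norms with the weights $\kappa_\mathrm{p}^2$ and $\kappa_\omega^2$ and summing over $k$ against $\Gamma_{kk}$ should collapse to the stated closed form \eqref{eq:h2-idroop}.

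The main obstacle is purely algebraic rather than conceptual: carrying the extra pole at $-\delta$ through \eqref{eq:hp-s} produces messy cubic-over-quartic coefficient expressions, and I would need to be careful that the $\kappa_\mathrm{p}^2$ and $\kappa_\omega^2$ contributions share a common denominator $2m[\check{d}m\delta^2+(d+\nu)(\check{d}\delta+\lambda_k)]$ so that the numerators combine into $(\kappa_\mathrm{p}^2+r_\mathrm{r}^{-2}\kappa_\omega^2)m\delta^2 + (\kappa_\mathrm{p}^2+\nu^2\kappa_\omega^2)(\check{d}\delta+\lambda_k)$ as claimed. I would double-check the result against the DC case by taking the degenerate limit $\delta\to\infty$ (where the lead/lag reduces to a static gain $-\nu$, or with $\nu\to r_\mathrm{r}^{-1}$ recovers pure droop), which should reproduce \eqref{eq:noise-DC} and thereby validate the coefficient bookkeeping.
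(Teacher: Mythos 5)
Your proposal is correct and follows essentially the same route as the paper's own proof: reduce via Theorem~\ref{thm:h2-sum} to the scalar norms of $\hat{h}_{\mathrm{p},k,\mathrm{iDroop}}$ and $\hat{h}_{\omega,k,\mathrm{iDroop}}$, form these by substituting \eqref{eq:go-sw} and \eqref{eq:co-idroop} into \eqref{eq:hp-s} and \eqref{eq:homega-s} (yielding strictly proper third-order transfer functions with $a_0=b_0=b_4=0$), evaluate them with Lemma~\ref{lm:h2-4th} in its third-order specialization, and recombine over $k$. The coefficient bookkeeping you outline does collapse to the common denominator $2m\left[\check{d}m\delta^2+(d+\nu)\left(\check{d}\delta+\lambda_k\right)\right]$ exactly as claimed, and your $\delta\to\infty$ consistency check against \eqref{eq:noise-DC} is a sound (if optional) validation mirroring Lemma~\ref{lem:h2lim}.
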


\begin{proof}
The proof is based on the Theorem~\ref{thm:h2-sum} and Lemma~\ref{lm:h2-4th}. 
Applying \eqref{eq:go-sw} and \eqref{eq:co-idroop} to \eqref{eq:hp-s} and \eqref{eq:homega-s} shows
$\hat{h}_{{\mathrm{p},k},\mathrm{iDroop}}(s)$ is a transfer function with $b_4=a_0=b_0=0, a_1 = \left(\lambda_k \delta\right)/m, b_1 =0, a_2 = \left(\check{d} \delta + \lambda_k\right)/m, b_2 = \delta/m, a_3 = \left(m \delta + d +\nu\right)/m, b_3 =1/m$, while $\hat{h}_{{\omega,k},\mathrm{iDroop}}(s)$ is a transfer function with $b_4=a_0=b_0=0, a_1 = \left(\lambda_k \delta\right)/m, b_1 =0, a_2 = \left(\check{d} \delta + \lambda_k\right)/m, b_2 = -\left(r_\mathrm{r}^{-1}\delta\right)/m, a_3 = \left(m \delta + d +\nu\right)/m, b_3 =-\nu/m$.
Thus, by Lemma~\ref{lm:h2-4th},
\begin{align*}
\|\hat{h}_{{\mathrm{p},k},\mathrm{iDroop}}\|_{\mathcal{H}_2}^2={\dfrac{ m \delta^2 + \check{d} \delta + \lambda_k }{2m\left[\check{d}m\delta^2 + (d+\nu)\left(\check{d} \delta + \lambda_k \right)\right]}} \;,\\
\|\hat{h}_{{\omega,k},\mathrm{iDroop}}\|_{\mathcal{H}_2}^2={\dfrac{ r_\mathrm{r}^{-2} m \delta^2 + \nu^2 \left(\check{d} \delta + \lambda_k \right) }{2m\left[\check{d}m\delta^2 + (d+\nu)\left(\check{d} \delta + \lambda_k \right)\right]}}\;.
\end{align*}
Then \eqref{eq:h2-idroop} follows from Theorem~\ref{thm:h2-sum}.
\end{proof}

The explicit expression of $\|\hat{T}_{\omega\mathrm{dn}, \mathrm{iDroop}}\|^2_{\mathcal{H}_2}$ given in Corollary~\ref{thm:noise-idroop} is useful to show that iDroop can reduce the frequency variance relative to DC and VI. Given the fact that $\|\hat{T}_{\omega\mathrm{dn}, \mathrm{VI}}\|^2_{\mathcal{H}_2}$ is infinite, the question indeed lies in whether we can find a set of values for parameters $\delta$ and $\nu$ that ensure $\|\hat{T}_{\omega\mathrm{dn}, \mathrm{iDroop}}\|^2_{\mathcal{H}_2} \leq \|\hat{T}_{\omega\mathrm{dn}, \mathrm{DC}}\|^2_{\mathcal{H}_2}$. Fortunately, we can not only find such a set but also the optimal setting for \eqref{eq:h2-idroop}. The following three lemmas set the foundation of this important result which is given as Theorem 7.

\begin{lem}[Limit of $\|\hat{T}_{\omega\mathrm{dn}, \mathrm{iDroop}}\|^2_{\mathcal{H}_2}$]\label{lem:h2lim}
Let Assumptions~\ref{ass:proportion} and \ref{ass:noise} hold. If $\delta \to \infty$, then $\|\hat{T}_{\omega\mathrm{dn}, \mathrm{iDroop}}\|^2_{\mathcal{H}_2} = \|\hat{T}_{\omega\mathrm{dn}, \mathrm{DC}}\|^2_{\mathcal{H}_2}$.
\end{lem}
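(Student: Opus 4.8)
The plan is to obtain the result by a direct leading-order analysis of the closed-form expression for $\|\hat{T}_{\omega\mathrm{dn}, \mathrm{iDroop}}\|^2_{\mathcal{H}_2}$ derived in Corollary~\ref{thm:noise-idroop}, matching it against the expression \eqref{eq:noise-DC} for $\|\hat{T}_{\omega\mathrm{dn}, \mathrm{DC}}\|^2_{\mathcal{H}_2}$ from Corollary~\ref{thm:noise-VI}. Both quantities are finite sums over $k \in \{1,\dots,n\}$ of $\Gamma_{kk}$ times a scalar that, in the iDroop case, is a rational function of $\delta$. Since the sum has finitely many terms and $\Gamma_{kk}$ is independent of $\delta$, the limit $\delta \to \infty$ commutes with the summation, so it suffices to compute the limit of a single summand.

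Fixing an index $k$, I would isolate the corresponding summand of \eqref{eq:h2-idroop}, whose numerator is $(\kappa_\mathrm{p}^2 + r_\mathrm{r}^{-2} \kappa_\omega^2)\, m \delta^2 + (\kappa_\mathrm{p}^2+\nu^2 \kappa_\omega^2)(\check{d}\delta + \lambda_k)$ and whose denominator is $2m[\check{d}m\delta^2 + (d+\nu)(\check{d}\delta + \lambda_k)]$. Dividing numerator and denominator by $\delta^2$ and letting $\delta \to \infty$, every term of order $\delta$ or lower becomes negligible relative to the $\delta^2$ terms: the numerator tends to $(\kappa_\mathrm{p}^2 + r_\mathrm{r}^{-2} \kappa_\omega^2)\, m$ and the denominator tends to $2m^2\check{d}$. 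Hence the summand converges to $(\kappa_\mathrm{p}^2 + r_\mathrm{r}^{-2} \kappa_\omega^2)/(2m\check{d})$, which is precisely the $k$th summand of \eqref{eq:noise-DC}. Summing over $k$ then yields $\lim_{\delta\to\infty}\|\hat{T}_{\omega\mathrm{dn}, \mathrm{iDroop}}\|^2_{\mathcal{H}_2} = \|\hat{T}_{\omega\mathrm{dn}, \mathrm{DC}}\|^2_{\mathcal{H}_2}$, as claimed.

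I do not expect any substantive obstacle: the argument is a routine highest-degree comparison for a rational function of $\delta$. The only points meriting care are verifying that the $\delta^2$ coefficients in both numerator and denominator are strictly positive---which holds because $m>0$, $\check{d}=d+r_\mathrm{r}^{-1}>0$, and the weighting constants satisfy $\kappa_\mathrm{p},\kappa_\omega>0$---so that each limit is finite and well defined, and observing that interchanging the limit with the finite sum is immediate. Conceptually, the identity simply reflects that as $\delta\to\infty$ the lead/lag controller \eqref{eq:co-idroop} degenerates to the constant gain $-r_\mathrm{r}^{-1}$ of DC in \eqref{eq:co-dc}, so the two inverter dynamics, and therefore the $\mathcal{H}_2$ norms they induce, agree in the limit.
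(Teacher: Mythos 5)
Your proposal is correct and follows essentially the same route as the paper: the paper's proof also takes the limit of the closed-form expression \eqref{eq:h2-idroop} as $\delta \to \infty$ and identifies the result with \eqref{eq:noise-DC}, merely stating the limit where you spell out the highest-degree comparison. The extra care you take (positivity of the leading coefficients, commuting the limit with the finite sum) is sound but routine, so there is no substantive difference.
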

\begin{proof}
The limit of \eqref{eq:h2-idroop} as $\delta \to \infty$ can be computed as
\[
   \underset{\delta \to \infty}{\lim} \|\hat{T}_{\omega\mathrm{dn}, \mathrm{iDroop}}\|^2_{\mathcal{H}_2}
    \!=\! \sum_{k=1}^n \Gamma_{kk} {\dfrac{\kappa_\mathrm{p}^2 + r_\mathrm{r}^{-2} \kappa_\omega^2  }{2m\check{d}}} \!=\! \|\hat{T}_{\omega\mathrm{dn}, \mathrm{DC}}\|^2_{\mathcal{H}_2}\,,
\]
where the second equality follows from \eqref{eq:noise-DC}.
% which concludes the proof.
\end{proof}

Lemma~\ref{lem:h2lim} shows that $\|\hat{T}_{\omega\mathrm{dn}, \mathrm{iDroop}}\|_{\mathcal{H}_2}^2$ asymptotically converges to $\|\hat{T}_{\omega\mathrm{dn}, \mathrm{DC}}\|_{\mathcal{H}_2}^2$ as $\delta \to \infty$. 
% Indeed, this convergence is monotonically from either above or below depending on the value of the parameter $\nu$. 
The next lemma shows that this convergence is monotonically from either above or below depending on the value of the parameter $\nu$.
% is concerned with this dependence.

\begin{lem}[$\nu$-dependent monotonicity of $\|\hat{T}_{\omega\mathrm{dn}, \mathrm{ iDroop}}\|_{\mathcal{H}_2}^2$ with respect to $\delta$ ]\label{lem:mono-alp}
Let Assumptions~\ref{ass:proportion} and \ref{ass:noise} hold. Define
\begin{equation}
	\alpha_1 (\nu)
     := \dfrac{- \check{d}\kappa_\omega^2   \nu^2 + \left(\kappa_\mathrm{p}^2 + r_\mathrm{r}^{-2}\kappa_\omega^2 \right)\nu + d r_\mathrm{r}^{-2} \kappa_\omega^2  - r_\mathrm{r}^{-1}\kappa_\mathrm{p}^2}{d+\nu} \nonumber\,.\label{eq:alpha1}
\end{equation}
% \begin{equation}
% \alpha_1 (\nu) := (\nu - r_\mathrm{r}^{-1})\left[\dfrac{\kappa_\mathrm{p}^2+\nu^2 \kappa_\omega^2}{d+\nu} - (\nu + r_\mathrm{r}^{-1})\kappa_\omega^2\right]\;.\label{eq:alpha1}
% \end{equation}
Then
\begin{itemize}
\item
$\|\hat{T}_{\omega\mathrm{dn}, \mathrm{iDroop}}\|_{\mathcal{H}_2}^2$ is a monotonically increasing or decreasing function of $\delta > 0$ if and only if $\alpha_1 (\nu)$ is positive or negative, respectively. 
\item 
$\|\hat{T}_{\omega\mathrm{dn}, \mathrm{iDroop}}\|_{\mathcal{H}_2}^2$ is independent of $\delta>0$ if and only if $\alpha_1 (\nu)$ is zero.
\end{itemize}
\end{lem}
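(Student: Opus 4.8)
The plan is to differentiate the closed-form expression for $\|\hat{T}_{\omega\mathrm{dn}, \mathrm{iDroop}}\|_{\mathcal{H}_2}^2$ obtained in Corollary~\ref{thm:noise-idroop} with respect to $\delta$, and to show that the sign of this derivative is, uniformly over the modal index $k$, governed entirely by $\alpha_1(\nu)$. First I would write the $k$-th summand of \eqref{eq:h2-idroop} as a ratio $N_k(\delta)/D_k(\delta)$, collecting the $\delta$-independent coefficients into $A := \kappa_\mathrm{p}^2 + r_\mathrm{r}^{-2}\kappa_\omega^2$ and $B := \kappa_\mathrm{p}^2 + \nu^2\kappa_\omega^2$, so that $N_k(\delta) = Am\delta^2 + B(\check{d}\delta + \lambda_k)$ and $D_k(\delta) = 2m[\check{d}m\delta^2 + (d+\nu)(\check{d}\delta + \lambda_k)]$. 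Since $\Gamma_{kk} > 0$ (as established in the proof of Corollary~\ref{cor:optimal-h2-dc}) and $D_k(\delta)^2 > 0$, the sign of $\partial_\delta\|\hat{T}_{\omega\mathrm{dn}, \mathrm{iDroop}}\|_{\mathcal{H}_2}^2$ is controlled, term by term, by the quotient-rule numerators $N_k'D_k - N_kD_k'$.

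The crux is the expansion of $N_k'D_k - N_kD_k'$. The key — and slightly delicate — fact that makes the lemma work is that this polynomial in $\delta$ collapses: its cubic and constant coefficients cancel identically, leaving
\[
N_k'D_k - N_kD_k' = 2m^2\bigl(A(d+\nu) - B\check{d}\bigr)\,\delta\,(\check{d}\delta + 2\lambda_k).
\]
Because $\delta > 0$, $\check{d} > 0$ and $\lambda_k \geq 0$, the factor $\delta(\check{d}\delta + 2\lambda_k)$ is strictly positive for every $k$, so the sign is fixed by $A(d+\nu) - B\check{d}$, a quantity independent of both $k$ and $\delta$. A short algebraic check, using $\check{d} = d + r_\mathrm{r}^{-1}$, identifies this with the numerator of $\alpha_1(\nu)$, i.e.\ $A(d+\nu) - B\check{d} = (d+\nu)\alpha_1(\nu)$; since $d + \nu > 0$, its sign equals that of $\alpha_1(\nu)$.

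Assembling these pieces, $\partial_\delta\|\hat{T}_{\omega\mathrm{dn}, \mathrm{iDroop}}\|_{\mathcal{H}_2}^2$ is a sum of terms each carrying the common sign of $\alpha_1(\nu)$, weighted by the positive quantities $\Gamma_{kk}/D_k(\delta)^2$. Hence the derivative is everywhere positive, everywhere negative, or identically zero according as $\alpha_1(\nu)$ is positive, negative, or zero, giving strict monotone increase, strict monotone decrease, or independence of $\delta$, respectively. The three hypotheses on $\mathrm{sign}(\alpha_1(\nu))$ and the three conclusions each form an exhaustive, mutually exclusive trichotomy, so these one-directional implications upgrade automatically to the stated ``if and only if'' statements.

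I expect the main obstacle to be purely bookkeeping: carrying out the expansion of $N_k'D_k - N_kD_k'$ carefully enough to witness the cancellation of the top- and bottom-degree terms in $\delta$, and the clean factorization in which $\lambda_k$ enters only through the always-positive factor $(\check{d}\delta + 2\lambda_k)$. This decoupling of the sign from the modal index is precisely what lets the common sign be pulled outside the sum over $k$, and it is the only place where something could conceivably fail, so I would verify the two vanishing coefficients and the identity $A(d+\nu)-B\check{d}=(d+\nu)\alpha_1(\nu)$ explicitly.
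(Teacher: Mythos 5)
Your proof is correct and takes essentially the same route as the paper: both differentiate the closed-form expression from Corollary~\ref{thm:noise-idroop} term by term and show that the sign of each term's $\delta$-derivative is governed by $\alpha_1(\nu)$ through a factor that is positive uniformly in $k$ (your identities $N_k'D_k - N_kD_k' = 2m^2\left(A(d+\nu)-B\check{d}\right)\delta\left(\check{d}\delta+2\lambda_k\right)$ and $A(d+\nu)-B\check{d}=(d+\nu)\alpha_1(\nu)$ both check out, and $\Gamma_{kk}>0$ is invoked exactly as in the paper). The only difference is cosmetic: the paper extracts $\alpha_1(\nu)$ \emph{before} differentiating, by rewriting each summand as $\alpha_5(\nu)+\alpha_1(\nu)\delta^2/\left(\alpha_2\delta^2+\alpha_3(\nu)\delta+\alpha_4(\nu,\lambda_k)\right)$, whereas you extract it \emph{after}, via the quotient-rule cancellation---the resulting derivative expressions are identical.
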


\begin{proof}
Provided that $\|\hat{T}_{\omega\mathrm{dn}, \mathrm{iDroop}}\|_{\mathcal{H}_2}^2$ is a function of $\delta$ and $\nu$, in what follows we denote it by $\Pi(\delta, \nu)$. To make it clear how  $\Pi(\delta, \nu)$ changes with $\delta$, we firstly put it into the equivalent form of
\begin{equation*}
\Pi(\delta, \nu) = \sum_{k=1}^n \Gamma_{kk} \left[\dfrac{\alpha_1(\nu) \delta^2}{\alpha_2 \delta^2+\alpha_3(\nu) \delta +\alpha_4(\nu, \lambda_k)} + \alpha_5(\nu) \right]\
    \end{equation*}
with
\begin{subequations} \label{eq:alpha}
    \begin{align*}
        &\alpha_1(\nu) := \dfrac{- \check{d}\kappa_\omega^2   \nu^2 + \left(\kappa_\mathrm{p}^2 + r_\mathrm{r}^{-2}\kappa_\omega^2 \right)\nu + d r_\mathrm{r}^{-2} \kappa_\omega^2  - r_\mathrm{r}^{-1}\kappa_\mathrm{p}^2}{d+\nu}\;,\\
        &\alpha_2 := 2m\check{d}\;,\qquad\;
        \alpha_3(\nu) := 2(d+\nu)\check{d}\;,   \\
        &\alpha_4(\nu, \lambda_k) := 2(d+\nu)\lambda_k\;,\qquad
        \alpha_5(\nu) := \dfrac{\kappa_\mathrm{p}^2+\nu^2 \kappa_\omega^2}{2m(d+\nu)}\;. 
    \end{align*}    
\end{subequations}
We then take the partial derivative of $\Pi(\delta, \nu)$ with respect to $\delta$ as
\begin{align*}
    \partial_{\delta} \Pi(\delta, \nu)
    =\! \alpha_1(\nu) \sum_{k=1}^n \Gamma_{kk} \!\left[ \dfrac{ \alpha_3(\nu) \delta^2 +2 \alpha_4(\nu, \lambda_k) \delta}{(\alpha_2 \delta^2+\alpha_3(\nu) \delta +\alpha_4(\nu, \lambda_k))^2}\right].
\end{align*}

Since $m > 0$, $d > 0$, $\nu > 0$, and $r_\mathrm{r}^{-1} > 0$, $\alpha_2$ and $\alpha_3(\nu)$ are positive. Also, given that all the eigenvalues of the scaled Laplacian matrix $L_\mathrm{F}$ are non-negative, $\alpha_4(\nu, \lambda_k)$ must be non-negative. Thus, $\forall \delta > 0$, $( \alpha_3(\nu) \delta^2 +2 \alpha_4(\nu, \lambda_k) \delta)/(\alpha_2 \delta^2+\alpha_3(\nu) \delta +\alpha_4(\nu, \lambda_k))^2 > 0$. 

% It remains to show that $\Gamma_{kk} > 0$, $\forall  k \in \{1,\dots, n\}$. Recall that $\Gamma := V^T F^{-1} V$. We know $\Gamma_{kk} = \sum_{j=1}^n ( v_{k,j}^2/f_j)$. Since $v_k$ is an eigenvector, $\forall k \in \{1,\dots, n\}$, there must exist at least one $j\in \mathcal{V}$ such that $v_{k,j}\neq0$. Since $f_i > 0$, $\forall i$, we have that $\Gamma_{kk} > 0$, $\forall  k \in \{1,\dots, n\}$.
%where $v_{k,j}$ denotes the $j$-th element of $v_k$
Recall from the proof of Corollary~\ref{cor:optimal-h2-dc} that $\Gamma_{kk} > 0$, $\forall  k \in \{1,\dots, n\}$. Therefore, $\forall \delta > 0$, $\mathrm{sign} \left( \partial_{\delta} \Pi(\delta, \nu) \right) = \mathrm{sign} \left( \alpha_1(\nu) \right)$. 
\end{proof}

By Lemma~\ref{lem:mono-alp}, for a given $\nu$, if $\alpha_1(\nu) < 0$, then $\|\hat{T}_{\omega\mathrm{dn}, \mathrm{iDroop}}\|^2_{\mathcal{H}_2}$ always decreases as $\delta$ increases. However, according to Lemma~\ref{lem:h2lim}, even if $\delta \to \infty$, we can only obtain $\|\hat{T}_{\omega\mathrm{dn}, \mathrm{iDroop}}\|^2_{\mathcal{H}_2} = \|\hat{T}_{\omega\mathrm{dn}, \mathrm{DC}}\|^2_{\mathcal{H}_2}$. Similarly, if $\alpha_1 (\nu) = 0$, then $\|\hat{T}_{\omega\mathrm{dn}, \mathrm{iDroop}}\|^2_{\mathcal{H}_2}$ keeps constant as $\delta$ increases, which means whatever $\delta$ is we will always obtain $\|\hat{T}_{\omega\mathrm{dn}, \mathrm{iDroop}}\|^2_{\mathcal{H}_2} = \|\hat{T}_{\omega\mathrm{dn}, \mathrm{DC}}\|^2_{\mathcal{H}_2}$. Therefore, iDroop cannot outperform DC when $\alpha_1 (\nu) \le 0$. To put it another way, Lemmas~\ref{lem:h2lim} and \ref{lem:mono-alp} imply that in order to improve the frequency variance through iDroop, one needs to set $\nu$ such that $\alpha_1(\nu) > 0$ and $\delta$ as small as practically possible. The following lemma characterizes the minimizer $\nu^{\star}$ of $\|\hat{T}_{\omega\mathrm{dn}, \mathrm{iDroop}}\|_{\mathcal{H}_2}^2$ when $\delta = 0$. 

\begin{lem}[Minimizer $\nu^{\star}$ of $\|\hat{T}_{\omega\mathrm{dn}, \mathrm{iDroop}}\|_{\mathcal{H}_2}^2$ when $\delta = 0$]\label{lem:vstar}
Let Assumptions~\ref{ass:proportion} and \ref{ass:noise} hold. Then 
\begin{equation}\label{eq:vstar}
  \nu^{\star}\!\!:=\! \argmin_{\delta = 0, \nu > 0} \!\|\hat{T}_{\omega\mathrm{dn}, \mathrm{iDroop}}\|_{\mathcal{H}_2}^2 \!\!=\! -d +\! \sqrt{d^2 + (\kappa_{\mathrm{p}}/\kappa_{\omega})^2}\, .
\end{equation}
\end{lem}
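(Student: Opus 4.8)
The plan is to evaluate the closed-form expression from Corollary~\ref{thm:noise-idroop} at $\delta = 0$ and then carry out an elementary single-variable minimization over $\nu > 0$. First I would substitute $\delta = 0$ into \eqref{eq:h2-idroop}. For each mode $k$ with $\lambda_k > 0$ (that is, $k \geq 2$, recalling $0 = \lambda_1 < \lambda_2 \leq \cdots$), both numerator and denominator carry a common factor $\lambda_k$, and setting $\delta = 0$ directly gives the per-mode contribution $\Gamma_{kk}(\kappa_\mathrm{p}^2 + \nu^2\kappa_\omega^2)/\bigl(2m(d+\nu)\bigr)$. The only subtlety is the mode $k = 1$, where $\lambda_1 = 0$ makes a direct substitution of $\delta = 0$ produce the indeterminate $0/0$; there I would instead cancel the factor $\delta$ shared by numerator and denominator (both vanish linearly in $\delta$ once $\lambda_1 = 0$) and then let $\delta \to 0^+$, which yields exactly the same per-mode value $\Gamma_{11}(\kappa_\mathrm{p}^2 + \nu^2\kappa_\omega^2)/\bigl(2m(d+\nu)\bigr)$. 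Summing over all $k$ then gives
\[
\left.\|\hat{T}_{\omega\mathrm{dn}, \mathrm{iDroop}}\|_{\mathcal{H}_2}^2\right|_{\delta = 0} = \left(\sum_{k=1}^n \Gamma_{kk}\right) \dfrac{\kappa_\mathrm{p}^2 + \nu^2\kappa_\omega^2}{2m(d+\nu)}\,.
\]

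Since the prefactor $\bigl(\sum_{k=1}^n \Gamma_{kk}\bigr)/(2m)$ is a strictly positive constant independent of $\nu$ (positivity of each $\Gamma_{kk}$ was established in the proof of Corollary~\ref{cor:optimal-h2-dc}), minimizing over $\nu > 0$ is equivalent to minimizing the scalar function $g(\nu) := (\kappa_\mathrm{p}^2 + \nu^2\kappa_\omega^2)/(d+\nu)$. I would compute $g'(\nu)$, whose sign is governed by the numerator $\kappa_\omega^2\nu^2 + 2d\kappa_\omega^2\nu - \kappa_\mathrm{p}^2$. Setting this quadratic to zero and retaining the unique positive root gives $\nu^\star = -d + \sqrt{d^2 + (\kappa_\mathrm{p}/\kappa_\omega)^2}$, matching \eqref{eq:vstar}.

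To confirm this critical point is the global minimizer, I would observe that the numerator of $g'$ is an upward-opening parabola in $\nu$ with a single positive root $\nu^\star$, so $g' < 0$ on $(0,\nu^\star)$ and $g' > 0$ on $(\nu^\star,\infty)$; hence $g$ is strictly decreasing and then strictly increasing, and $\nu^\star$ is its global minimizer on $(0,\infty)$. The main obstacle is purely bookkeeping, namely correctly disposing of the $\lambda_1 = 0$ mode so that the clean factored form above is valid; once that is done the optimization is routine and in fact mirrors the computation of $r_\mathrm{r}^{-1\star}$ in Corollary~\ref{cor:optimal-h2-dc}, with the two minimizers coinciding.
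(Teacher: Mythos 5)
Your proposal is correct and takes essentially the same route as the paper: evaluate the closed-form expression of Corollary~\ref{thm:noise-idroop} at $\delta = 0$ to obtain $\bigl(\sum_{k=1}^n \Gamma_{kk}\bigr)\,(\kappa_\mathrm{p}^2 + \nu^2\kappa_\omega^2)/\bigl(2m(d+\nu)\bigr)$, then minimize over $\nu > 0$ via the unique positive root of the quadratic numerator of the derivative together with a sign analysis, exactly mirroring the computation of $r_\mathrm{r}^{-1\star}$ in Corollary~\ref{cor:optimal-h2-dc} (which is also how the paper closes its proof). Your explicit limiting treatment of the $\lambda_1 = 0$ mode is a small point of added care that the paper's proof passes over silently, but it changes nothing substantive.
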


\begin{proof}
Recall from the proof of Lemma~\ref{lem:mono-alp} that $\|\hat{T}_{\omega\mathrm{dn}, \mathrm{iDroop}}\|_{\mathcal{H}_2}^2 = \Pi(\delta, \nu)$. Then we have
\begin{equation*}
\Pi(0, \nu) = \dfrac{\kappa_\mathrm{p}^2+\nu^2 \kappa_\omega^2}{2m(d + \nu)} \sum_{k=1}^n \Gamma_{kk}\;,
\end{equation*}
whose derivative with respect to $\nu$ is given by
\begin{equation}
        \Pi'(0, \nu) = \dfrac{\kappa_\omega^2\nu^2 + 2 d\kappa_\omega^2 \nu - \kappa_\mathrm{p}^2}{2m (d+\nu)^2} \sum_{k=1}^n \Gamma_{kk}\;. \label{eq:g'(0,nu)}
\end{equation}
% By equating \eqref{eq:g'(0,nu)} to 0, we can solve the corresponding $\nu$ as $\nu^{\star}_\pm=-d \pm \sqrt{d^2 + (\kappa_{\mathrm{p}}/\kappa_{\omega})^2}$. The only positive root is therefore $\nu^{\star}:=-d + \sqrt{d^2 + (\kappa_{\mathrm{p}}/\kappa_{\omega})^2}$. We have shown in the proof of Corollary~\ref{cor:optimal-h2-dc} that $\Gamma_{kk} > 0$, $\forall  k \in \{1,\dots, n\}$. In addition, since the denominator of \eqref{eq:g'(0,nu)} is always positive and the highest order coefficient of the numerator is positive, whenever $0 < \nu < \nu^{\star}$, then $ \Pi'(0, \nu) < 0$, and if $\nu > \nu^{\star}$, then $ \Pi'(0, \nu) > 0$. Therefore, $\nu^{\star}$ is actually the minimizer of $\Pi(0, \nu)$ and $\Pi(0, {\nu^{\star}})$ is the minimum.
Note that \eqref{eq:g'(0,nu)} and \eqref{eq:h2-dc-partial} are in the same form. Thus, $\nu^{\star}$ is determined in the same way as in the proof of Corollary~\ref{cor:optimal-h2-dc}.
\end{proof}

We are now ready to prove the next theorem.

\begin{thm}[$\|\hat{T}_{\omega\mathrm{dn}, \mathrm{iDroop}}\|_{\mathcal{H}_2}^2$ optimal tuning]\label{thm:h2-improves}
Let Assumptions~\ref{ass:proportion} and \ref{ass:noise} hold. Define $\nu^{\star}$ as in \eqref{eq:vstar}. Then 
\begin{itemize}
\item whenever $(\kappa_\mathrm{p}/\kappa_\omega)^2 \neq 2r_\mathrm{r}^{-1}d + r_\mathrm{r}^{-2}$, for any $\delta > 0$ and $\nu$ such that
\begin{equation}\label{eq:condition}
\nu \in [\nu^{\star},r_\mathrm{r}^{-1}) \quad \text{or}\quad \nu \in (r_\mathrm{r}^{-1},\nu^\star]\;,
\end{equation}
iDroop outperforms DC in terms of frequency variance, i.e.,
\begin{equation*}
	\|\hat{T}_{\omega\mathrm{dn}, \mathrm{iDroop}}\|^2_{\mathcal{H}_2}<\|\hat{T}_{\omega\mathrm{dn}, \mathrm{DC}}\|^2_{\mathcal{H}_2}\;.
\end{equation*}
Moreover, the global minimum of $\|\hat{T}_{\omega\mathrm{dn}, \mathrm{iDroop}}\|^2_{\mathcal{H}_2}$ is obtained by setting $\delta \to 0$ and $\nu \to \nu^{\star}$.
\item if $(\kappa_\mathrm{p}/\kappa_\omega)^2 = 2r_\mathrm{r}^{-1}d + r_\mathrm{r}^{-2}$, then for any $\delta > 0$, by setting $\nu \to \nu^{\star} = r_\mathrm{r}^{-1}$, iDroop matches DC in terms of frequency variance, i.e.,
\begin{equation*}
	\|\hat{T}_{\omega\mathrm{dn}, \mathrm{iDroop}}\|^2_{\mathcal{H}_2}=\|\hat{T}_{\omega\mathrm{dn}, \mathrm{DC}}\|^2_{\mathcal{H}_2}\;.
\end{equation*}
\end{itemize}
\end{thm}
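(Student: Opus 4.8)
Write $\Pi(\delta,\nu) := \|\hat{T}_{\omega\mathrm{dn}, \mathrm{iDroop}}\|_{\mathcal{H}_2}^2$ as in the proof of Lemma~\ref{lem:mono-alp}. The plan is to reduce the entire statement to the sign of $\alpha_1(\nu)$ and to locate the two tuning values that bracket the claimed interval: the DC-matching point $\nu = r_\mathrm{r}^{-1}$ and the $\delta = 0$ optimizer $\nu = \nu^\star$. The first thing I would record is the dictionary furnished by Lemmas~\ref{lem:h2lim} and~\ref{lem:mono-alp}: because $\Pi(\cdot,\nu)$ is monotone in $\delta$ with $\mathrm{sign}(\partial_\delta\Pi) = \mathrm{sign}(\alpha_1(\nu))$ and $\lim_{\delta\to\infty}\Pi(\delta,\nu) = \|\hat{T}_{\omega\mathrm{dn}, \mathrm{DC}}\|_{\mathcal{H}_2}^2$, for every $\delta > 0$ we have $\alpha_1(\nu) > 0 \Leftrightarrow \Pi(\delta,\nu) < \|\hat{T}_{\omega\mathrm{dn}, \mathrm{DC}}\|_{\mathcal{H}_2}^2$, and likewise with equality and with the reversed inequality. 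So the whole theorem is governed by the sign of $\alpha_1$, equivalently of its numerator $N(\nu)$; since the leading coefficient of $N$ is $-\check d\kappa_\omega^2 < 0$, $N$ is a downward-opening quadratic in $\nu$.

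Next I would pin down two facts about $N$. (a) $\nu = r_\mathrm{r}^{-1}$ is always a root, i.e.\ $\alpha_1(r_\mathrm{r}^{-1}) = 0$; this is a direct substitution in which the $\kappa_\mathrm{p}^2$ terms cancel and the remaining factor vanishes thanks to $-\check d + r_\mathrm{r}^{-1} + d = 0$. (b) In the non-degenerate case $\alpha_1(\nu^\star) > 0$. Rather than expand $N$, I would argue this from optimality: a one-line check gives $\Pi(0,r_\mathrm{r}^{-1}) = \|\hat{T}_{\omega\mathrm{dn}, \mathrm{DC}}\|_{\mathcal{H}_2}^2$ (the denominators agree because $d + r_\mathrm{r}^{-1} = \check d$), while Lemma~\ref{lem:vstar} says $\nu^\star$ is the unique minimizer of $\Pi(0,\cdot)$; hence $\Pi(0,\nu^\star) < \Pi(0,r_\mathrm{r}^{-1}) = \|\hat{T}_{\omega\mathrm{dn}, \mathrm{DC}}\|_{\mathcal{H}_2}^2$ whenever $\nu^\star \neq r_\mathrm{r}^{-1}$, and the dictionary turns this into $\alpha_1(\nu^\star) > 0$.

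Given (a) and (b) the interval claim is a parabola argument: a downward quadratic is strictly positive on the open segment between any interior point where it is positive and either of its roots, so $N > 0$ on the half-open segment from $\nu^\star$ (included, since $N(\nu^\star) > 0$) to the root $r_\mathrm{r}^{-1}$ (excluded), which is exactly $[\nu^\star, r_\mathrm{r}^{-1})$ or $(r_\mathrm{r}^{-1}, \nu^\star]$ according to their order; this handles both orderings uniformly. On that segment $\alpha_1 > 0$, so $\Pi(\delta,\nu) < \|\hat{T}_{\omega\mathrm{dn}, \mathrm{DC}}\|_{\mathcal{H}_2}^2$ for every $\delta > 0$. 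For the global minimum I would split on the sign of $\alpha_1(\nu)$: where $\alpha_1(\nu) > 0$ the infimum over $\delta$ is approached as $\delta \to 0$ and equals $\Pi(0,\nu) \geq \Pi(0,\nu^\star)$, while where $\alpha_1(\nu) \leq 0$ the infimum over $\delta$ is $\|\hat{T}_{\omega\mathrm{dn}, \mathrm{DC}}\|_{\mathcal{H}_2}^2 > \Pi(0,\nu^\star)$; either way the global infimum is $\Pi(0,\nu^\star)$, approached as $\delta \to 0$, $\nu \to \nu^\star$.

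For the degenerate branch I would observe that $(\kappa_\mathrm{p}/\kappa_\omega)^2 = 2r_\mathrm{r}^{-1}d + r_\mathrm{r}^{-2}$ is precisely the condition $\nu^\star = r_\mathrm{r}^{-1}$: substituting it into \eqref{eq:vstar} makes the radicand the perfect square $(d + r_\mathrm{r}^{-1})^2$, giving $\nu^\star = r_\mathrm{r}^{-1}$. Then the two roots of $N$ coincide at $r_\mathrm{r}^{-1}$, so $N \leq 0$ everywhere and in particular $\alpha_1(\nu^\star) = 0$; by Lemma~\ref{lem:mono-alp} this makes $\Pi$ independent of $\delta$, and its common value is $\|\hat{T}_{\omega\mathrm{dn}, \mathrm{DC}}\|_{\mathcal{H}_2}^2$. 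The main obstacle is fact (b) and the interval step: the temptation is to expand $N$ and compute its second root, but the clean route is to use optimality of $\nu^\star$ together with the identity $\Pi(0,r_\mathrm{r}^{-1}) = \|\hat{T}_{\omega\mathrm{dn}, \mathrm{DC}}\|_{\mathcal{H}_2}^2$ to place $\nu^\star$ strictly inside the positive region, after which the downward-parabola shape yields the whole interval without further case analysis.
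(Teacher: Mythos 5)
Your proposal is correct, and it reaches the theorem by a genuinely different route through the core technical step than the paper does. Both arguments share the same skeleton: reduce everything to the sign of $\alpha_1(\nu)$ via Lemmas~\ref{lem:h2lim} and~\ref{lem:mono-alp}, exploit that the numerator $N(\nu)$ of $\alpha_1$ is a downward-opening quadratic with $r_\mathrm{r}^{-1}$ as a root (your computation of $N(r_\mathrm{r}^{-1})=0$ matches the paper's $\nu_1=r_\mathrm{r}^{-1}$), and split on the degenerate condition $(\kappa_\mathrm{p}/\kappa_\omega)^2 = 2r_\mathrm{r}^{-1}d + r_\mathrm{r}^{-2}$, which both you and the paper identify with $\nu^\star = r_\mathrm{r}^{-1}$. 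The difference is how positivity of $\alpha_1$ between $\nu^\star$ and $r_\mathrm{r}^{-1}$ is established. The paper computes the second root explicitly, $\nu_2 = \left[(\kappa_\mathrm{p}/\kappa_\omega)^2 - r_\mathrm{r}^{-1}d\right]/\check{d}$, and verifies $\nu_1 < \nu^\star < \nu_2$ (resp.\ $\nu_2 < \nu^\star < \nu_1$) by manipulating the radical in \eqref{eq:vstar}. You never compute $\nu_2$: you use the identity $\Pi(0,r_\mathrm{r}^{-1}) = \|\hat{T}_{\omega\mathrm{dn},\mathrm{DC}}\|_{\mathcal{H}_2}^2$ together with the strict optimality of $\nu^\star$ for $\Pi(0,\cdot)$ (Lemma~\ref{lem:vstar}) to get $\Pi(0,\nu^\star) < \|\hat{T}_{\omega\mathrm{dn},\mathrm{DC}}\|_{\mathcal{H}_2}^2$, convert this into $\alpha_1(\nu^\star)>0$ via the monotonicity dictionary, and then let parabola geometry (positive at $\nu^\star$, root at $r_\mathrm{r}^{-1}$) give positivity on the whole half-open segment, handling both orderings uniformly. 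This variational argument replaces the paper's explicit algebra; what it gives up is the full description of the admissible set $S=(\nu_1,\nu_2)$ that the paper's computation provides, but the theorem only asserts the sub-interval, so nothing needed is lost. Your treatment of the global minimum is in fact slightly more complete than the paper's: you explicitly dispose of the region $\alpha_1(\nu)\le 0$, where the infimum over $\delta$ equals the DC value, a comparison the paper leaves implicit.

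Two minor points to tighten. First, turning $\Pi(0,\nu^\star) < \|\hat{T}_{\omega\mathrm{dn},\mathrm{DC}}\|_{\mathcal{H}_2}^2$ into $\alpha_1(\nu^\star)>0$ applies your dictionary at $\delta=0$, which tacitly uses $\Pi(0,\nu)=\lim_{\delta\to 0^+}\Pi(\delta,\nu)$; this is true because each summand $\alpha_1(\nu)\delta^2/(\alpha_2\delta^2+\alpha_3(\nu)\delta+\alpha_4(\nu,\lambda_k))$ vanishes as $\delta\to 0^+$ (including the $k=1$ term, where $\alpha_4=0$), but it deserves a sentence. Second, in the degenerate branch your assertion that the two roots of $N$ coincide is stated without justification; it is true (the sum of the roots equals $2r_\mathrm{r}^{-1}$ there, or one can argue by contradiction with the optimality of $\nu^\star$), but it is also unnecessary, since all the second bullet needs is $\alpha_1(\nu^\star)=\alpha_1(r_\mathrm{r}^{-1})=0$, which follows from your fact (a) once $\nu^\star=r_\mathrm{r}^{-1}$ is established.
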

\begin{proof}
As discussed before,  to guarantee $\|\hat{T}_{\omega\mathrm{dn}, \mathrm{iDroop}}\|^2_{\mathcal{H}_2} < \|\hat{T}_{\omega\mathrm{dn}, \mathrm{DC}}\|^2_{\mathcal{H}_2}$, one requires to set $\nu$ such that $\alpha_1 (\nu) > 0$. In this case, $\|\hat{T}_{\omega\mathrm{dn}, \mathrm{iDroop}}\|^2_{\mathcal{H}_2}$ always increases as $\delta$ increases, so choosing $\delta$ arbitrarily small is optimal for any fixed $\nu$.

We now look for the values of $\nu$ that satisfy the requirement  $\alpha_1 (\nu) > 0$. Since the denominator of $\alpha_1 (\nu) $ is always positive, the sign of $\alpha_1 (\nu)$ only depends on its numerator. Denote the numerator of $\alpha_1 (\nu)$ as $N_{\alpha_1}(\nu)$. Clearly, $N_{\alpha_1}(\nu)$ is a univariate quadratic function in $\nu$, whose roots are: $\nu_1 = r_\mathrm{r}^{-1}$ and $\nu_2 =  \left[(\kappa_\mathrm{p}/\kappa_\omega)^2 - r_\mathrm{r}^{-1} d \right]/\check{d}$.
% \begin{equation*}
% \nu_1 = r_\mathrm{r}^{-1} \nonumber \quad \text{and} \quad
% \nu_2 =  \dfrac{(\kappa_\mathrm{p}/\kappa_\omega)^2 - r_\mathrm{r}^{-1} d }{\check{d}} \nonumber\;.
% \end{equation*}
Provided that the highest order coefficient of $N_{\alpha_1}(\nu)$ is negative, the graph of $N_{\alpha_1}(\nu)$ is a parabola that opens downwards. Therefore, if $\nu_1 < \nu_2$, then $\nu \in (\nu_1, \nu_2)$ guarantees $\alpha_1 (\nu) > 0$; if $\nu_1 > \nu_2$, then $\nu \in (\nu_2, \nu_1)\cap(0, \infty)$ guarantees $\alpha_1 (\nu) > 0$. Notably, if $\nu_1 = \nu_2$, there exists no feasible points of $\nu$ to make $\alpha_1 (\nu) > 0$. 

The condition $\nu_1 = \nu_2$ happens only if $(\kappa_\mathrm{p}/\kappa_\omega)^2 = 2r_\mathrm{r}^{-1} d + r_\mathrm{r}^{-2}$, from which it follows that $\nu^{\star} = r_\mathrm{r}^{-1} = \nu_1 = \nu_2$. Then $\alpha_1 (\nu^{\star}) = \alpha_1 (r_\mathrm{r}^{-1}) = 0$. Therefore, by setting $\nu \to \nu^{\star} = r_\mathrm{r}^{-1}$, we get $\|\hat{T}_{\omega\mathrm{dn}, \mathrm{iDroop}}\|^2_{\mathcal{H}_2}=\|\hat{T}_{\omega\mathrm{dn}, \mathrm{DC}}\|^2_{\mathcal{H}_2}$. This concludes the proof of the second part. 

We now focus on the case where the set $S=(\nu_1, \nu_2)\cup\{(\nu_2, \nu_1)\cap(0, \infty)\}$
% \begin{equation*}
% 	S=(\nu_1, \nu_2)\cup\{(\nu_2, \nu_1)\cap(0, \infty)\}  
% \end{equation*}
is nonempty. Recall from the proof of Lemma~\ref{lem:mono-alp} that $\|\hat{T}_{\omega\mathrm{dn}, \mathrm{iDroop}}\|_{\mathcal{H}_2} = \Pi(\delta, \nu)$. For any fixed $\nu \in S$, it holds that $\alpha_1 (\nu) > 0$ and thus $\Pi(\delta, \nu) > \Pi(0, \nu)$ for any $\delta >0$. Recall from the proof of Lemma~\ref{lem:vstar} that $\nu^{\star}$ is the minimizer of $\Pi(0, \nu)$. Hence, $(0, \nu^{\star})$ globally minimizes $\Pi(\delta, \nu)$ as long as $\nu^{\star}\in S$. In fact, we will show next that $\nu^{\star}$ is always within $S$ whenever $S\neq\emptyset$.

Firstly we consider the case when $\nu_1 < \nu_2$, which implies that $(\kappa_\mathrm{p}/\kappa_\omega)^2 > 2r_\mathrm{r}^{-1} d + r_\mathrm{r}^{-2}$. Then we have $\nu^{\star} >-d+\sqrt{d^2 + 2r_\mathrm{r}^{-1}d + r_\mathrm{r}^{-2}} = r_\mathrm{r}^{-1} = \nu_1$.
% \begin{align*}
% 	\nu^{\star} >-d+\sqrt{d^2 + 2r_\mathrm{r}^{-1}d + r_\mathrm{r}^{-2}} = r_\mathrm{r}^{-1} = \nu_1\;.
% \end{align*}
We also want to show $\nu^{\star} < \nu_2$ which holds if and only if
\begin{align*}
	\sqrt{d^2 + (\kappa_\mathrm{p}/\kappa_\omega)^2} \!<\! \dfrac{(\kappa_\mathrm{p}/\kappa_\omega)^2 - r_\mathrm{r}^{-1} d }{\check{d}} + d = \dfrac{(\kappa_\mathrm{p}/\kappa_\omega)^2 + d^2 }{\check{d}}
\end{align*}
which is equivalent to $1 < \sqrt{d^2 + (\kappa_\mathrm{p}/\kappa_\omega)^2}/\check{d}$. 
% \begin{align*}
% 1 < \sqrt{d^2 + (\kappa_\mathrm{p}/\kappa_\omega)^2}/\check{d}\;.
% \end{align*}
This always holds since $(\kappa_\mathrm{p}/\kappa_\omega)^2 > 2r_\mathrm{r}^{-1} d + r_\mathrm{r}^{-2}$. Thus, $\nu_1 < \nu^{\star} < \nu_2$.
Similarly, we can prove that in the case when $\nu_1 > \nu_2$, $\nu_2 < \nu^{\star} < \nu_1$ holds and thus $\nu^{\star} \in (\nu_2, \nu_1)\cap(0, \infty)$.
It follows that $(0, \nu^{\star})$ is the global minimizer of $\Pi(\delta, \nu)$.

Finally, by Lemma~\ref{lem:h2lim}, $\|\hat{T}_{\omega\mathrm{dn}, \mathrm{DC}}\|^{2}_{\mathcal{H}_2} = \Pi(\infty, \nu)$. The condition \eqref{eq:condition} actually guarantees $\nu \in S$ and thus $\alpha_1(\nu) > 0$. Then, by Lemma~\ref{lem:mono-alp}, we have $\|\hat{T}_{\omega\mathrm{dn}, \mathrm{DC}}\|^{2}_{\mathcal{H}_2} = \Pi(\infty, \nu) > \Pi(\delta, \nu)$.
This concludes the proof of the first part.
\end{proof} 

Theorem \ref{thm:h2-improves} shows that, to optimally improve the frequency variance, iDroop needs to first set $\delta$ arbitrarily close to zero. Interestingly, this implies that the transfer function $\hat{c}_\mathrm{o}(s)\approx-\nu$ except for $\hat{c}_\mathrm{o}(0)=-r_\mathrm{r}^{-1}$. In other words, iDroop uses its first-order lead/lag property to effectively decouple the dc gain $\hat{c}_\mathrm{o}(0)$ from the gain at all the other frequencies such that $\hat{c}_\mathrm{o}(\boldsymbol{j\omega})\approx -\nu$. This decouple is particularly easy to understand in two special regimes: (i) If $\kappa_\mathrm{p}\ll \kappa_\omega$, the system is dominated by measurement noise and therefore $\nu^{\star} \approx 0 <r_\mathrm{r}^{-1}$ which makes iDroop a lag compensator. Thus, by using lag compensation (setting $\nu<r_\mathrm{r}^{-1}$) iDroop can attenuate frequency noise; 
(ii) If $\kappa_\mathrm{p} \gg \kappa_\omega$, the system is dominated by power fluctuations and therefore $\nu^\star \approx \kappa_\mathrm{p}/\kappa_\omega >r_\mathrm{r}^{-1}$ which makes iDroop a lead compensator. Thus, by using lead compensation (setting $\nu>r_\mathrm{r}^{-1}$) iDroop can mitigate power fluctuations. 

\subsection{Synchronization Cost}
Theorem~\ref{thm:bound-cost} implies that the bounds on the synchronization cost of $\hat{T}_{\omega \mathrm{p, iDroop}}$ are closely related to $\| \hat{h}_{\mathrm{u},k,\mathrm{iDroop}} \|_{\mathcal{H}_2}^2$. If we can find a tuning that forces $\| \hat{h}_{\mathrm{u},k,\mathrm{iDroop}}\|_{\mathcal{H}_2}^2$ to be zero, then both lower and upper bounds on the synchronization cost converge to zero. Then, the zero synchronization cost is achieved naturally. The next theorem addresses this problem.

\begin{thm}[Zero synchronization cost tuning of iDroop] 
\label{th:zero-syn-idroop}
Let Assumptions~\ref{ass:proportion} and \ref{ass:step} hold. Then a zero synchronization cost of the system $\hat{T}_{\omega \mathrm{p, iDroop}}$, i.e., $\|\tilde{\omega}_\mathrm{iDroop}\|_2^2 = 0$, can be achieved by setting $\delta \to 0$ and $\nu \to \infty$.
\end{thm}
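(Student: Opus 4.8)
The plan is to leverage the sandwich bound of Theorem~\ref{thm:bound-cost}, which states that the synchronization cost of $\hat{T}_{\omega\mathrm{p,iDroop}}$ is squeezed between $\underline{\|\tilde{\omega}\|_2^2}$ and $\overline{\|\tilde{\omega}\|_2^2}$, both of which are nonnegative multiples of $\sum_{k=1}^{n-1}\tilde{u}_{0,k}^2\|\hat{h}_{\mathrm{u},k,\mathrm{iDroop}}\|_{\mathcal{H}_2}^2$. Since $\|\tilde{\omega}_\mathrm{iDroop}\|_2^2\geq0$ and $\min_i f_i>0$, it suffices to prove that each mode contribution $\|\hat{h}_{\mathrm{u},k,\mathrm{iDroop}}\|_{\mathcal{H}_2}^2$ vanishes in the prescribed limit; the upper bound then collapses to zero and forces the cost itself to zero.

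First I would make the transfer function explicit. Substituting the second-order turbine representative \eqref{eq:go-sw-tb} and the iDroop representative \eqref{eq:co-idroop} into \eqref{eq:hp-s} and dividing by $s$ yields $\hat{h}_{\mathrm{u},k,\mathrm{iDroop}}(s)=\hat{h}_{\mathrm{p},k+1,\mathrm{T,iDroop}}(s)/s$. Because the iDroop controller contributes an extra pole at $-\delta$, this is a strictly proper fourth-order transfer function (so $b_4=0$), one order higher than its VI counterpart in Theorem~\ref{lem:bounds-VI}; its coefficients are rational in $(\delta,\nu,\lambda_{k+1})$ and the generator/turbine parameters, and Lemma~\ref{lm:h2-4th} then delivers a closed-form expression for $\|\hat{h}_{\mathrm{u},k,\mathrm{iDroop}}\|_{\mathcal{H}_2}^2$.

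The decisive step is the limit. Sending $\delta\to0$ first, the constant term of the denominator (proportional to $\lambda_{k+1}\delta$) vanishes while the numerator retains a factor $s$: the slow denominator pole, located at $\approx-\delta$, coalesces with the numerator zero at $-\delta$ and cancels, reducing $\hat{h}_{\mathrm{u},k,\mathrm{iDroop}}$ to the stable third-order transfer function $\frac{\tau s+1}{m\tau s^3+(m+d\tau+\tau\nu)s^2+(d+r_\mathrm{t}^{-1}+\tau\lambda_{k+1}+\nu)s+\lambda_{k+1}}$. Applying the third-order specialization of Remark~\ref{rem:h2-3rd} to this function and then letting $\nu\to\infty$, the numerator of the norm grows like $\nu$ whereas the denominator grows like $\nu^2$, so $\|\hat{h}_{\mathrm{u},k,\mathrm{iDroop}}\|_{\mathcal{H}_2}^2\sim 1/(2\lambda_{k+1}\nu)\to0$ (here $\lambda_{k+1}>0$ for all $k\in\{1,\dots,n-1\}$ because the network is connected). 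Summing the vanishing mode contributions gives $\overline{\|\tilde{\omega}_\mathrm{iDroop}\|_2^2}\to0$, hence $\|\tilde{\omega}_\mathrm{iDroop}\|_2^2=0$.

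The main obstacle is justifying that the $\mathcal{H}_2$ norm passes continuously through the $\delta\to0$ reduction, since that limit drives a pole onto the imaginary axis at the origin. One must check that the cancellation is exact, i.e., that the residue attached to the vanishing slow pole goes to zero as the numerator zero at $-\delta$ tracks it, so that the norm of the limiting third-order system really is the limit of the fourth-order norms. The safe, if more laborious, alternative is to carry the full Lemma~\ref{lm:h2-4th} expression and evaluate the iterated limit $\delta\to0$, $\nu\to\infty$ directly in the rational formula, which sidesteps any continuity argument at the cost of heavier algebra.
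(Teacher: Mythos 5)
Your proposal is correct and follows essentially the same route as the paper: it reduces the claim, via the sandwich bound of Theorem~\ref{thm:bound-cost}, to showing each modal norm $\|\hat{h}_{\mathrm{u},k,\mathrm{iDroop}}\|_{\mathcal{H}_2}^2\to0$, computes that norm with the third-order specialization of Remark~\ref{rem:h2-3rd} after the $\delta\to 0$ reduction, and obtains the same $\sim 1/\left(2\lambda_{k+1}\nu\right)$ decay as $\nu\to\infty$ that the paper's limit expression exhibits. The only (welcome) difference is that you explicitly flag the continuity of the $\mathcal{H}_2$ norm through the $\delta\to0$ pole--zero cancellation, which the paper passes over silently; your fallback of taking the limit inside the full fourth-order formula of Lemma~\ref{lm:h2-4th} indeed settles it, since with $b_3=0$ and $a_0=\lambda_{k+1}b_0$ the factor $b_0$ cancels between numerator and denominator, recovering exactly the third-order expression in the limit.
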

\begin{proof}
Since the key is to show that $\| \hat{h}_{\mathrm{u},k,\mathrm{iDroop}} \|_{\mathcal{H}_2}^2 \to 0$ as $\delta \to 0$ and $\nu \to \infty$, 
% are equivalent to the squared $\mathcal{H}_2$ norms $\|\tilde{h}_{k,\mathrm{iDroop}}\|_{\mathcal{H}_2}^2$ since
% \begin{align*}
%     &\langle \tilde{h}_{k,\mathrm{iDroop}}, \tilde{h}_{k,\mathrm{iDroop}} \rangle \!=\!\! \int_0^\infty\!\tilde{h}_{k,\mathrm{iDroop}}(t)^2 \mathrm{d}t =\|\tilde{h}_{k,\mathrm{iDroop}}\|_{\mathcal{H}_2}^2\;.
% \end{align*}
we can use Lemma~\ref{lm:h2-4th}. Applying \eqref{eq:go-sw-tb} and \eqref{eq:co-idroop} to \eqref{eq:hp-s} shows $\hat{h}_{\mathrm{u},k,\mathrm{iDroop}}(s)=\hat{h}_{\mathrm{p},k+1,\mathrm{T,iDroop}}(s)/s$ is a transfer function with
% \begin{align}
% &\frac{1}{s} \hat{h}_{\mathrm{p},k+1,\mathrm{T,iDroop}}(s) \nonumber\\
% =& \frac{\tau s^2 + (\delta \tau + 1) s + \delta }{m \tau s^4 + (m \delta \tau + m + d \tau + \nu \tau) s^3 + (m \delta + d \delta \tau + d + r_{\mathrm{t}}^{-1} + \lambda_{k+1} \tau + \nu + \tau \delta r_{\mathrm{r}}^{-1}) s^2 + (d \delta + r_{\mathrm{t}}^{-1} \delta + \lambda_{k+1}\delta \tau + \lambda_{k+1}+ \delta r_{\mathrm{r}}^{-1})s + \lambda_{k+1}\delta}\label{eq:s-hp}
% \end{align}
% $\frac{1}{s} \hat{h}_{\mathrm{p},k+1,\mathrm{T,iDroop}}(s)$ is a transfer function with
\begin{subequations}
\begin{align*}
    a_0 =& \frac{\lambda_{k+1}\delta}{m \tau}\;,
    \qquad b_0 = \frac{\delta}{m \tau}\;,\\
    a_1 =& \frac{ \delta(\check{d}+ r_{\mathrm{t}}^{-1}+\lambda_{k+1}\tau) + \lambda_{k+1}}{m \tau}\;,
    \qquad b_1 = \frac{\delta \tau + 1}{m \tau}\;,\\
    a_2 =& \frac{\delta(m+ \check{d} \tau )  + d + r_{\mathrm{t}}^{-1} + \lambda_{k+1} \tau + \nu }{m \tau},\qquad b_2 =\! \frac{1}{m}\;,\\
    a_3 =& \frac{m \delta \tau + m + d \tau + \nu \tau}{m \tau},
    \qquad b_3 = 0\;,\qquad b_4 = 0\;.
\end{align*}
\end{subequations}
Considering that $a_0\to0$ and $b_0\to0$ as $\delta \to 0$ and $\nu \to \infty$,
% \begin{align*}
% \underset{\delta \to 0, \nu \to \infty}{\lim}\! a_0=0\quad\text{and}\quad \underset{\delta \to 0, \nu \to \infty}{\lim} \!b_0=0\;,
% \end{align*}
we can employ the $\mathcal{H}_2$ norm computation formula for the third-order transfer function in Remark~\ref{rem:h2-3rd}. Then
\begin{align*}
\underset{\delta \to 0, \nu \to \infty}{\lim} \!\!\|\hat{h}_{\mathrm{u},k,\mathrm{iDroop}}\|_{\mathcal{H}_2}^2
\!\!%\!=&\!\underset{\delta \to 0, \nu \to \infty}{\lim} \frac{a_3 b_1^2+a_1 b_2^2}{2 a_1 (a_2 a_3- a_1)}\\
\!=&\!\underset{\delta \to 0, \nu \to \infty}{\lim} \!\frac{\frac{\nu}{m} \!\left(\frac{1}{m \tau}\right)^2\!+\!\frac{\lambda_{k+1} }{m \tau} \!\left(\frac{1}{m}\right)^2}{2 \frac{\lambda_{k+1} }{m \tau} (\frac{\nu}{m \tau} \frac{\nu}{m}\!-\! \frac{\lambda_{k+1} }{m \tau})}
%=&\underset{\delta \to 0, \nu \to \infty}{\lim} \frac{1}{2\lambda_{k+1} \nu}
\!=\!0\,.
\end{align*}
Thus by Theorem~\ref{thm:bound-cost}, $\underline{\|\tilde{\omega}_\mathrm{iDroop}\|_2^2}=\overline{\|\tilde{\omega}_\mathrm{iDroop}\|_2^2}=0$, which forces $\|\tilde{\omega}_\mathrm{iDroop}\|_2^2 = 0$.
\end{proof}

Theorem~\ref{th:zero-syn-idroop} shows that unlike DC and VI that require changes on $r_\mathrm{r}^{-1}$ to arbitrarily reduce the synchronization cost, iDroop can achieve zero synchronization cost without affecting the steady-state effort share. Naturally, $\delta\approx0$ may lead to slow response and $\nu\rightarrow\infty$ may hinder robustness. Thus this result should be appreciated from the viewpoint of the additional tuning flexibility that iDroop provides.

\subsection{Nadir}
Finally, we show that with $\delta$ and $\nu$ tuned appropriately, iDroop enables the system frequency of $\hat{T}_{\omega \mathrm{p, iDroop}}$ to evolve as a first-order response to step power disturbances, which effectively makes Nadir disappear. The following theorem summarizes this idea.
\begin{thm}[Nadir elimination with iDroop]\label{thm:no nadir}
Let Assumptions~\ref{ass:proportion} and \ref{ass:step} hold. By setting $\delta = \tau^{-1}$ and $\nu = r_\mathrm{r}^{-1} + r_\mathrm{t}^{-1}$, Nadir \eqref{eq:Nadir} of $\hat{T}_{\omega \mathrm{p, iDroop}}$ disappears.
\end{thm}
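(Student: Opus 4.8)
The plan is to reduce the statement to showing that the system frequency $\bar{\omega}_\mathrm{iDroop}$ evolves as a first-order step response, since a first-order response is monotone and therefore exhibits no Nadir. First I would establish, exactly as in the proof of Theorem~\ref{thm:no-nadir-cond-VI}, that the system frequency is proportional to the unit-step response of the single scalar mode associated with $\lambda_1 = 0$. Using the diagonalization \eqref{eq:Tp} together with the fact that the first column of $V$ is $\left(\sum_{i} f_i\right)^{-1/2} F^{1/2} \mathbbold{1}_n$, the inertia-weighted average collapses onto the $k=1$ channel, giving
\[
\bar{\omega}_\mathrm{iDroop}(t) = \frac{\sum_{i=1}^n u_{0,i}}{\sum_{i=1}^n f_i}\, p_\mathrm{u,iDroop}(t),
\]
where $p_\mathrm{u,iDroop}$ is the unit-step response of $\hat{h}_{\mathrm{p},1,\mathrm{T,iDroop}}(s)$. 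Hence it suffices to analyze this one transfer function.

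Next I would compute $\hat{h}_{\mathrm{p},1,\mathrm{T,iDroop}}(s)$ by substituting the turbine-triggered representative generator \eqref{eq:go-sw-tb} and the iDroop controller \eqref{eq:co-idroop} into \eqref{eq:hp-s} with $\lambda_1 = 0$, which yields
\[
\hat{h}_{\mathrm{p},1,\mathrm{T,iDroop}}(s) = \frac{\hat{g}_\mathrm{o}(s)}{1 - \hat{g}_\mathrm{o}(s)\hat{c}_\mathrm{o}(s)} = \frac{(\tau s + 1)(s+\delta)}{D_\mathrm{g}(s)(s+\delta) + (\tau s + 1)(\nu s + \delta r_\mathrm{r}^{-1})},
\]
a generically third-order transfer function with $D_\mathrm{g}(s) := m\tau s^2 + (m + d\tau)s + d + r_\mathrm{t}^{-1}$. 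The key mechanism is a \emph{double pole-zero cancellation} driven by the two tuning choices. Setting $\delta = \tau^{-1}$ aligns the lag pole of iDroop with the turbine zero at $s = -\tau^{-1}$, so the factor $(\tau s + 1)$ appears in both numerator and denominator and cancels once, reducing the system to the second-order form $\left(\tau s + 1\right)\big/\left[m\tau s^2 + (m + (d+\nu)\tau)s + d + r_\mathrm{r}^{-1} + r_\mathrm{t}^{-1}\right]$.

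Then I would substitute $\nu = r_\mathrm{r}^{-1} + r_\mathrm{t}^{-1}$, so that $d + \nu = \check{d} + r_\mathrm{t}^{-1}$, and observe that the resulting denominator factors exactly as $(\tau s + 1)(ms + \check{d} + r_\mathrm{t}^{-1})$. This second cancellation of $(\tau s + 1)$ collapses the transfer function to the stable first-order system
\[
\hat{h}_{\mathrm{p},1,\mathrm{T,iDroop}}(s) = \frac{1}{ms + \check{d} + r_\mathrm{t}^{-1}}.
\]
Its unit-step response is the monotone curve $\frac{1}{\check{d} + r_\mathrm{t}^{-1}}\bigl(1 - e^{-(\check{d} + r_\mathrm{t}^{-1})t/m}\bigr)$, which has no interior extremum; hence $p_\mathrm{u,iDroop}$, and with it $\bar{\omega}_\mathrm{iDroop}$, has no Nadir.

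The step I expect to be the main obstacle is verifying the two cancellations cleanly, in particular confirming that $s = -\tau^{-1}$ is a root of the second-order denominator once $\nu = r_\mathrm{r}^{-1} + r_\mathrm{t}^{-1}$ is imposed; this is a short algebraic check but it is the conceptual heart of the result, since it is precisely the turbine lag (the zero at $-\tau^{-1}$) that iDroop is engineered to compensate. Everything else---the modal reduction to the $\lambda_1$ channel and the monotonicity of a first-order step response---follows directly from material already established.
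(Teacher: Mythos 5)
Your proposal is correct and takes essentially the same approach as the paper's proof: both reduce the problem to the unit-step response of the $\lambda_1=0$ mode $\hat{h}_{\mathrm{p},1,\mathrm{T,iDroop}}(s)$ via the modal formula for $\bar{\omega}_\mathrm{iDroop}$, and both verify algebraically that the tuning $\delta=\tau^{-1}$, $\nu=r_\mathrm{r}^{-1}+r_\mathrm{t}^{-1}$ collapses this transfer function to the first-order system $1/\left(ms+\check{d}+r_\mathrm{t}^{-1}\right)$, whose monotone step response has no Nadir. The only cosmetic difference is that the paper substitutes the tuning into the controller first, rewriting $\hat{c}_\mathrm{o}(s)=r_\mathrm{t}^{-1}/(\tau s+1)-\left(r_\mathrm{r}^{-1}+r_\mathrm{t}^{-1}\right)$ so the turbine-lag term cancels in one step inside $1/\hat{g}_\mathrm{o}(s)-\hat{c}_\mathrm{o}(s)$, whereas you carry out the same cancellation as two sequential pole-zero cancellations in the closed-loop expression.
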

\begin{proof}
The system frequency of $\hat{T}_{\omega \mathrm{p, iDroop}}$ is given by \cite{p2017ccc}
%\begin{equation}\label{eq:nadir}
%	\|\bar{\omega}(t)\|_{\infty} = \left\lvert \dfrac{\sum_{i=1}^n u_{0,i} }{ \sum_{i=1}^n f_i } \right\rvert\|p_\mathrm{u}(t)\|_\infty\;,
%\end{equation}
\begin{equation}\label{eq:nadir}
	\bar{\omega}_\mathrm{iDroop}(t) =  \dfrac{\sum_{i=1}^n u_{0,i} }{ \sum_{i=1}^n f_i } p_\mathrm{u,iDroop}(t)\;,
\end{equation}
where $p_\mathrm{u,iDroop}(t)$ is the unit-step response of $\hat{h}_{{\mathrm{p},1},\mathrm{T, iDroop}}(s)$. If we set $\delta = \tau^{-1}$ and $\nu = r_\mathrm{r}^{-1} + r_\mathrm{t}^{-1}$, then \eqref{eq:co-idroop} becomes
\begin{equation}\label{eq:co-idroop-nonadir}
\hat{c}_\mathrm{o}(s) =\frac{ r_\mathrm{t}^{-1} }{\tau s+1}-\left(r_\mathrm{r}^{-1} + r_\mathrm{t}^{-1}\right)\;.
\end{equation}
% Applying \eqref{eq:go-sw-tb} and \eqref{eq:co-idroop} to \eqref{eq:hp-s} yields
% \begin{align*} \label{eq:p_i}
%     \hat{h}_{{\mathrm{p},1},\mathrm{T, iDroop}}(s)
%     \!=\! \dfrac{(s \!+\! \delta)(s \tau \!+\! 1)}{\left [ \begin{multlined}(s \!+ \!\delta)(s \tau\! + \!1) (ms\!+\!d) \! +\! (s \!+\! \delta)  r_{\mathrm{t}}^{-1}\\+ \!(s \tau + 1)(\delta r_{\mathrm{r}}^{-1} \!+\! s \nu ) \end{multlined} \right ]}\;.
% \end{align*}
% Firstly, setting $\delta = \tau^{-1}$ reduces the order of $\hat{h}_{{\mathrm{p},1},\mathrm{T, iDroop}}(s)$, i.e.
% \begin{align*} 
%     \hat{h}_{{\mathrm{p},1},\mathrm{T, iDroop}}(s)
%     %=& \dfrac{s + \tau^{-1}}{ (s + \tau^{-1}) (ms+d) + \tau^{-1}  r_{\mathrm{t}}^{-1} + \tau^{-1} r_{\mathrm{r}}^{-1} + s \nu}\\
%     \!=& \dfrac{s + \tau^{-1}}{ (s + \tau^{-1})(s m + d) \!+ \!\tau^{-1} (r_{\mathrm{r}}^{-1} \!+\! r_{\mathrm{t}}^{-1} )\!+\! s \nu }\;.
% \end{align*}
Applying \eqref{eq:go-sw-tb} and \eqref{eq:co-idroop-nonadir} to \eqref{eq:hp-s} yields
\begin{align*} 
    \hat{h}_{{\mathrm{p},1},\mathrm{T, iDroop}}(s)
    %=& \dfrac{s + \tau^{-1}}{ (s + \tau^{-1})(s m + d) + \tau^{-1} (r_{\mathrm{r}}^{-1} + r_{\mathrm{t}}^{-1} )+ s (r_\mathrm{r}^{-1} + r_\mathrm{t}^{-1}) }\\
    %=& \dfrac{s + \tau^{-1}}{ (s + \tau^{-1})(s m + d) + (s + \tau^{-1}) (r_{\mathrm{r}}^{-1} + r_{\mathrm{t}}^{-1} ) }\\
     =& \dfrac{1}{m s + \check{d} + r_{\mathrm{t}}^{-1}}\;,
\end{align*}
whose unit-step response $p_\mathrm{u,iDroop}(t)$ is a first-order evolution. Thus, \eqref{eq:nadir} indicates that Nadir of the system frequency disappears.
% Thus, the unit-step response of $\hat{h}_{{\mathrm{p},1},\mathrm{T, iDroop}}(s)$ is given by
% \begin{align*}
% p_\mathrm{u}(t) =& \mathcal{L}^{-1} \left \{ \frac{1}{s}\hat{h}_{{\mathrm{p},1},\mathrm{T, iDroop}}(s)\right \} \\=&\mathcal{L}^{-1} \left \{ \frac{1}{d + r_{\mathrm{r}}^{-1} + r_{\mathrm{t}}^{-1}}\left(\frac{1}{s} - \frac{1}{s+\frac{d + r_{\mathrm{r}}^{-1} + r_{\mathrm{t}}^{-1}}{m}}\right)\right \}\\
% =& \frac{1}{d + r_{\mathrm{r}}^{-1} + r_{\mathrm{t}}^{-1}}\left(1 -  e^{-\frac{d + r_{\mathrm{r}}^{-1} + r_{\mathrm{t}}^{-1}}{m}t}\right)\;,
% \end{align*}
% which is a first-order evolution. Thus, \eqref{eq:nadir} indicates that the nadir of the system frequency disappears. 
\end{proof}

\section{Numerical Illustrations}\label{sec:simulation}

In this section, we present simultation results that compare iDroop with DC and VI.
The simulations are performed on the Icelandic Power Network taken from the Power Systems Test Case Archive \cite{iceland}. The dynamic model is built upon the Kron reduced system \cite{Dorfler2013kron} where only the $35$ generator buses are retained. Even though our previous analysis requires the proportionality assumption (Assumption \ref{ass:proportion}), in the simulations, for every bus $i$, the generator inertia coefficient, the turbine time constant, and the turbine droop coefficient are directly obtained from the dataset, i.e., $m_i = m_{\mathrm{d},i}$, $\tau_i = \tau_{\mathrm{d},i}$, and $r_{\mathrm{t},i} = r_{\mathrm{t},\mathrm{d},i}$.\footnotemark[5]\footnotemark[6] In addition, turbine governor deadbands are taken into account such that turbines are only responsive to frequency deviations exceeding \SI{\pm0.036}{\hertz}. Given that the values of generator damping coefficients are not provided by the dataset, we set $d_i = f_i d$ with $d$ being the representative generator damping coefficient and
\begin{equation*}
    f_i := \frac{m_i}{m}
\end{equation*}
being the proportionality parameters, where $m$ is the representative generator inertia defined as the mean of $m_i$'s, i.e.,
\begin{equation*}
    m := \frac{1}{n}\sum_{i=1}^n m_i.
\end{equation*}
We refer to this system without inverter control to 'SW' in the simulations.
\footnotetext[5]{Throughout this section, we use the subscript ${\mathrm{d},i}$ to denote the original parameters of the $i$th generator bus from the dataset.}
\footnotetext[6]{For illustrative purpose only, we reassign a part of the droop $r_{\mathrm{t},\mathrm{d},i}$'s on turbines in the dataset to let there be a deeper Nadir in the system frequency.}

We then add an inverter to each bus $i$, whose control law is either one of DC, VI, and iDroop. The design of controller parameters will be based on the representative generator parameters. Hence, besides $m$ and $d$, we define
\begin{equation*}
    \tau := \frac{1}{n}\sum_{i=1}^{n} \tau_{\mathrm{d},i}\quad\text{and}\quad
    r_{\mathrm{t}} := \frac{\sum_{i=1}^n f_i}{\sum_{i=1}^n r_{\mathrm{t},\mathrm{d},i}^{-1}}.
\end{equation*}
Note that to keep the synchronous frequency unchanged, once inverters are added, we halve the inverse turbine droop $r_{\mathrm{t},i}^{-1}$ and assign the representative inverter droop coefficient $r_{\mathrm{r}}$ a value such that the inverse inverter droop $ r_{\mathrm{r},i}^{-1} := f_i r_{\mathrm{r}}^{-1}$ should exactly compensate this decreased $r_{\mathrm{t}}^{-1}$ in the absence of turbine governor deadbands.
The values of all the representative parameters mentioned above are given in Table~\ref{table:parameter}.

% \begin{figure}
% \centering
% \includegraphics[width=0.6\columnwidth]{}
% \caption{Iceland Network diagram.}\label{fig:iceland}
% \end{figure}

\begin{table}
\renewcommand{\arraystretch}{1.2}
\centering
\small
\begin{tabular}{c|c|c}
\arrayrulecolor[rgb]{.7,.8,.9}
\hline\hline
Parameters & Symbol & Value\\
\hline
generator inertia
% \footnotemark[5]
& $m$ & \SI{0.0111}{\square\second\per\radian} \\
\hline
generator damping & $d$ & \SI{0.0014}{\second\per\radian}\\
\hline
turbine time const. & $\tau$ & \SI{4.59}{\second} \\
\hline
\multirow{2}{*}{turbine droop
% \footnotemark[6]
} & \multirow{2}{*}{$r_{\mathrm{t}}$} & \SI{374.49}{\radian\per\second} for SW,\\ &&\SI{748.97}{\radian\per\second} o.w.  \\
\hline
inverter droop
% \footnotemark[7]
& $r_{\mathrm{r}}$ & \SI{748.97}{\radian\per\second}\\
\hline\hline
\end{tabular}
\caption{Parameters of Representative Generator and Inverter}
\label{table:parameter}
\end{table}

\subsection{Comparison in Step Input Scenario}\label{sub:step-simulation}
Fig.~\ref{fig:step simulation 1} shows how different controllers perform when the system suffers from a step drop of $-0.3$ p.u. in power injection at bus number $2$ at time $t = \SI{1}{\second}$. As for the representative inverter, we turn $\delta = \tau^{-1} = \SI{0.218}{\per\second}$ and $\nu = r_\mathrm{r}^{-1} + r_\mathrm{t}^{-1} = \SI{0.004}{\second\per\radian}$ in iDroop such that Nadir of the system frequency disappears as suggested by Theorem~\ref{thm:no nadir} and we tune $m_\mathrm{v} = \SI{0.022}{\square\second\per\radian}$ in VI such that the system frequency is critically damped.\footnotemark[7] The inverter parameters on each bus $i$ are defined as follows: $\delta_i := \delta$, $\nu_i := f_i \nu$, and  $m_{\mathrm{v},i} = f_i m_{\mathrm{v}}$.

\footnotetext[7]{In the rest of this section, we keep tuning $m_\mathrm{v} = \SI{0.022}{\square\second\per\radian}$.}

The results are shown in Fig.~\ref{fig:step simulation 1}. One observation is that all three controllers lead to the same synchronous frequency as predicted by Corollaries~\ref{lem:syn-fre-dc} and~\ref{lem:syn-fre-idroop}. Another observation is that although both of VI and iDroop succeed in eliminating Nadir of the system frequency --which is better than what DC does-- the system synchronizes with much faster rate and lower cost under iDroop than VI. Interestingly, the synchronization cost under VI is even slightly higher than that under DC, which indicates that the benefit of eliminating Nadir through increasing $m_\mathrm{v}$ in VI is significantly diluted by the obvious sluggishness introduced to the synchronization process in the meanwhile. Finally, we highlight the huge control effort required by VI when compared with DC and iDroop.

\begin{figure}[t!]
\centering
\subfigure[Frequency deviations]
{\includegraphics[width=\columnwidth]{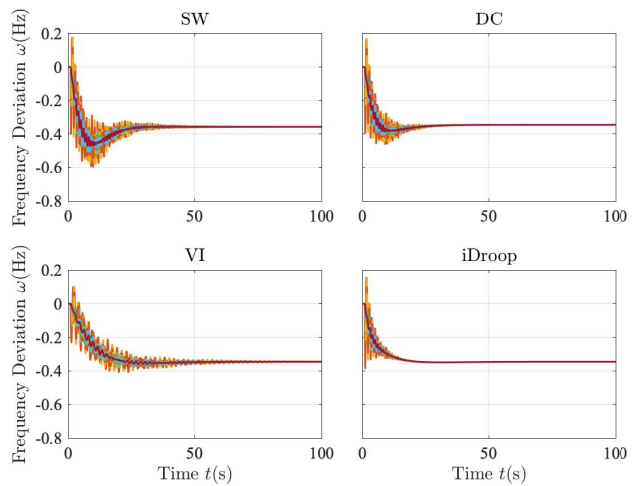}}
\hfil
\subfigure[Control effort]
{\includegraphics[width=\columnwidth]{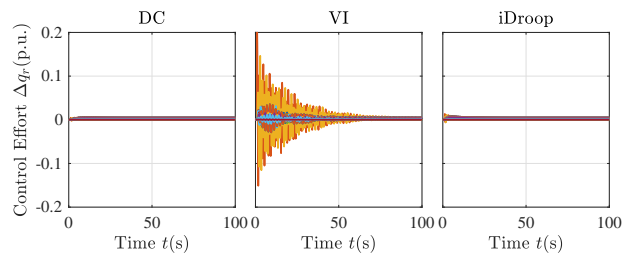}}
\hfil
\subfigure[System frequency and synchronization cost]
{\includegraphics[width=\columnwidth]{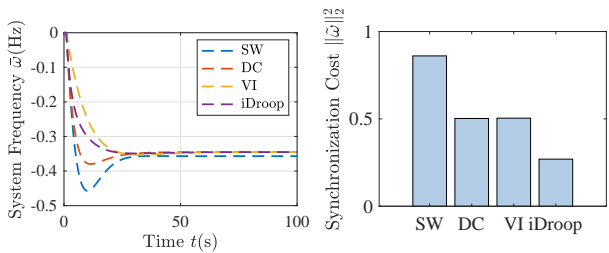}}
\caption{Comparison between controllers when a $-0.3$ p.u. step change in power injection is introduced to bus number $2$.}
\label{fig:step simulation 1}
\end{figure}

\subsection{Comparison in Noise Scenario}
\ifthenelse{\boolean{archive}}{Fig.~\ref{fig:noise simulation pdom} and Fig.~\ref{fig:noise simulation ndom} show how different controllers perform when the system encounters power fluctuations and measurement noise. Fig.~\ref{fig:noise simulation pdom} focuses on the case dominated by power fluctuations where $\kappa_\mathrm{p} = 10^{-4}$ and $\kappa_\omega = 10^{-5}$, while Fig.~\ref{fig:noise simulation ndom} corresponds to the case dominated by measurement noise where $\kappa_\mathrm{p} = 10^{-4}$ and $\kappa_\omega = 10^{-3}$. As required by Theorem~\ref{thm:h2-improves}, we tune $\delta$ to be a small value $\SI{0.1}{\per\second}$ and $\nu$ to be the optimal value $\nu^{\star}$ which is either $\SI{9.9986}{\second\per\radian}$ for $\kappa_\mathrm{p} \gg \kappa_\omega$ or $\SI{0.0986}{\second\per\radian}$ for $\kappa_\mathrm{p} \ll \kappa_\omega$.

 Observe from Fig.~\ref{fig:noise simulation pdom-fre} and Fig.~\ref{fig:noise simulation ndom-fre} that setting $\delta$ small enough and $\nu$ optimally guarantees iDroop has a better performance than DC in terms of noise variance, which actually fits well with Theorem~\ref{thm:h2-improves}. Note that, as expected from Corollary~\ref{thm:noise-VI}, whichever type of noise dominates the system under VI performs badly, therefore whenever noise appears the simulation results of VI will be omitted throughout this section.}
 {Fig.~\ref{fig:noise simulation pdom} shows how different controllers perform when the system encounters power fluctuations and measurement noise. Since in reality power fluctuations are larger than measurement noise, we focus on the case dominated by power fluctuations, where $\kappa_\mathrm{p} = 10^{-4}$ and $\kappa_\omega = 10^{-5}$. As required by Theorem~\ref{thm:h2-improves}, we tune $\delta$ to be a small value $\SI{0.1}{\per\second}$ and $\nu$ to be the optimal value $\nu^{\star}$ which is $\SI{9.9986}{\second\per\radian}$ here.

 Observe from Fig.~\ref{fig:noise simulation pdom-fre} that setting $\delta$ small enough and $\nu=\nu^\star$ ensures that iDroop has a better performance than DC in terms of frequency variance, as expected by Theorem~\ref{thm:h2-improves}. Note that, since by Corollary~\ref{thm:noise-VI}, VI performs badly, we do not evaluate VI in the presence of stochastic disturbances.}

\begin{figure}[t!]
\centering
\subfigure[Frequency deviations]
{\includegraphics[width=\columnwidth]{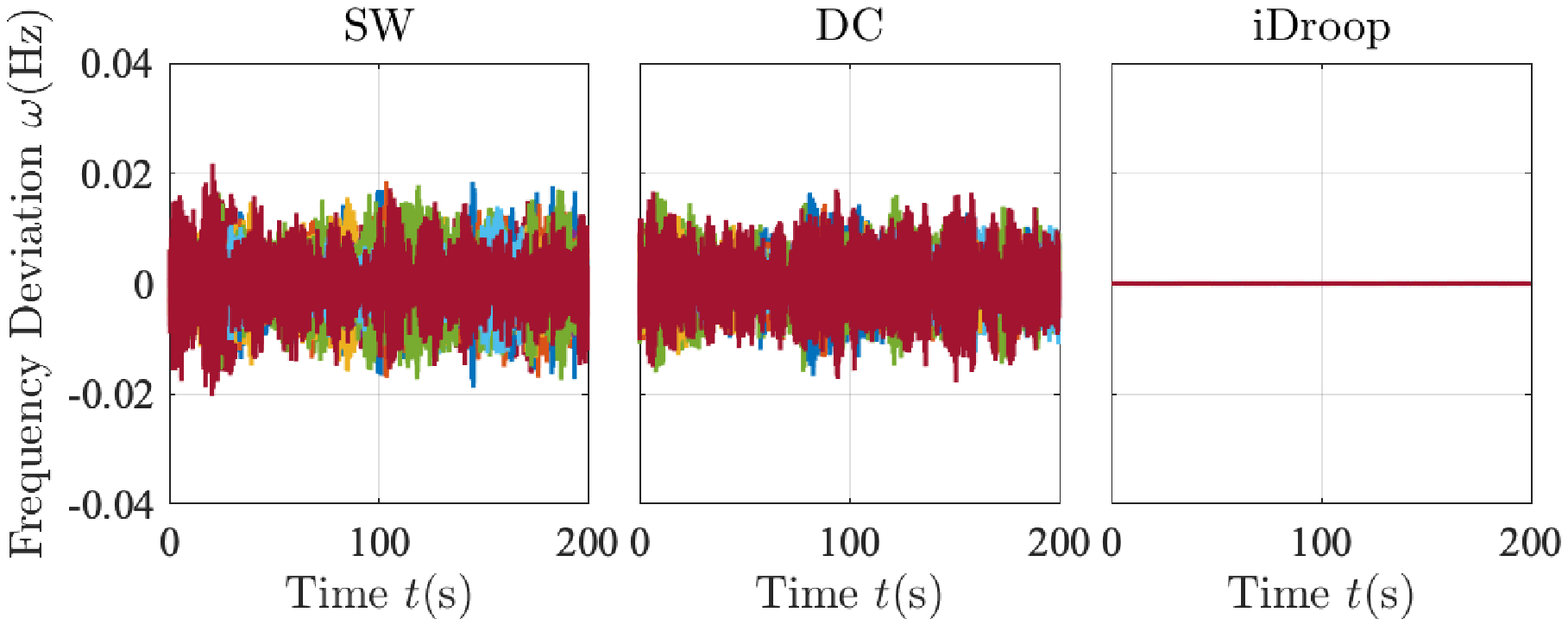}\label{fig:noise simulation pdom-fre}}
\hfil
\subfigure[Control effort]
{\includegraphics[width=\columnwidth]{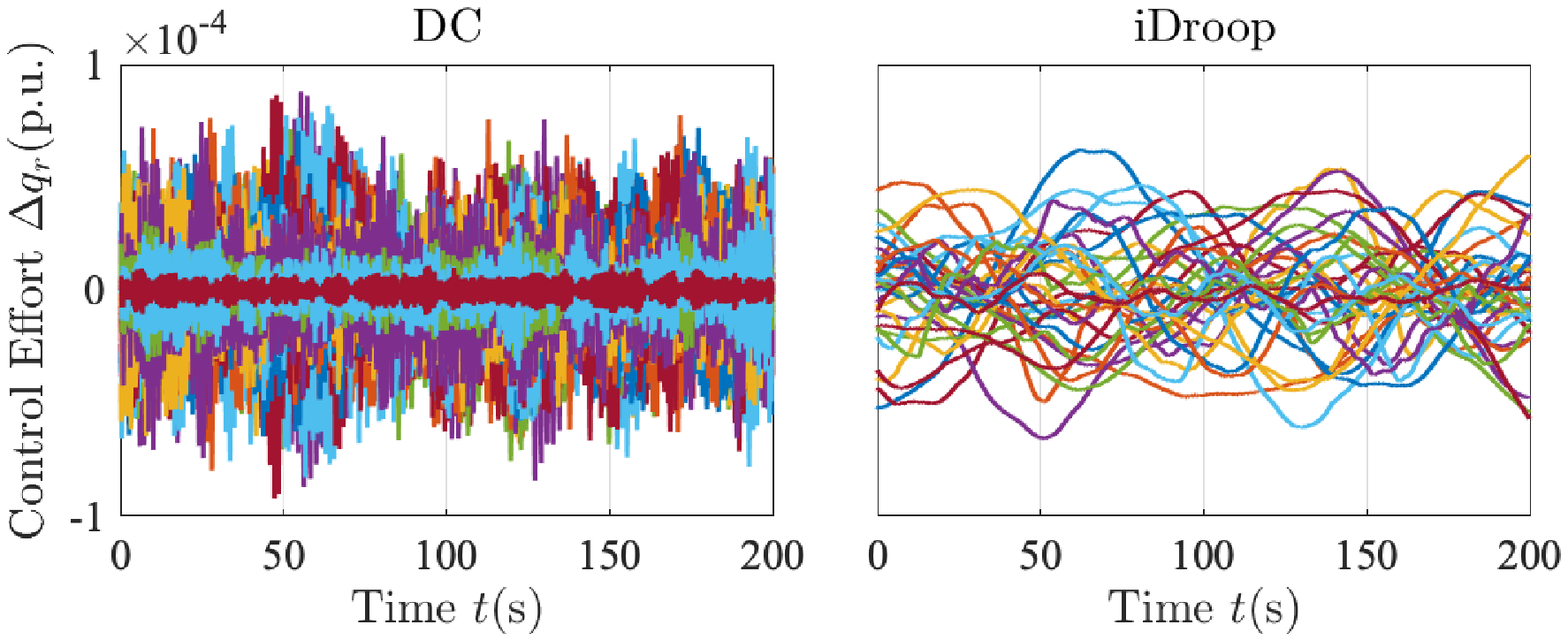}}
% \subfigure[Empirical PDF of frequency deviations]
% {\includegraphics[width=\columnwidth]{pdf_noise_pdom_sde_db.eps}}

\caption{Comparison between controllers when power fluctuations and measurement noise are introduced with $\kappa_\mathrm{p} = 10^{-4}$ and $\kappa_\omega = 10^{-5}$.}
\label{fig:noise simulation pdom}
\end{figure}

\ifthenelse{\boolean{archive}}{
\begin{figure}[t!]
\centering
\subfigure[Frequency deviations]
{\includegraphics[width=\columnwidth]{frequency_noise_ndom_sde_db.eps}\label{fig:noise simulation ndom-fre}}
\hfil
\subfigure[Control effort]
{\includegraphics[width=\columnwidth]{control_noise_ndom_sde_db.eps}}
\hfil
\subfigure[Empirical PDF of frequency deviations]
{\includegraphics[width=\columnwidth]{pdf_noise_ndom_sde_db.eps}}
\caption{Comparison between controllers when power fluctuations and measurement noises are introduced with $\kappa_\mathrm{p} = 10^{-4}$ and $\kappa_\omega = 10^{-3}$.}
\label{fig:noise simulation ndom}
\end{figure}}

\subsection{Tuning for Combined Noise and Step Disturbances}\label{sub:comb-simulation}
Although our current study does not contemplate jointly step and stochastic disturbances, we illustrate here that the Nadir eliminated tuning of Theorem \ref{thm:no nadir} for iDroop can perform quite well in more realistic scenarios with combined step and stochastic disturbances.

 In Fig.~\ref{fig:combine simulation pdom np}, we show how different controllers perform when the system is subject to a step drop of $-0.3$p.u. in power injection at bus number $2$ at time $t = \SI{1}{\second}$ as well as power fluctuations and measurement noise. \ifthenelse{\boolean{archive}}{Since in reality usually power disturbances have a higher order of magnitude than measurement noise does}{Again}, we consider the case with $\kappa_\mathrm{p} = 10^{-4}$ and $\kappa_\omega = 10^{-5}$. Here we employ the same inverter parameters setting as in the step input scenario. More precisely, we tune inverter parameters in iDroop on each bus $i$ as follows: $\delta_i := \delta$, $\nu_i := f_i \nu$, where $\delta = \tau^{-1} = \SI{0.218}{\per\second}$ and $\nu = r_\mathrm{r}^{-1} + r_\mathrm{t}^{-1} = \SI{0.004}{\second\per\radian}$.

 Some observations are in order. First, even though the result is not given here, there is no surprise that the system under VI performs badly due to its inability to reject noise.
 Second, the performance of the system under DC and iDroop is similar to the one in the step input scenario except additional noise. Last but not least, a bonus of the Nadir eliminated tuning is that iDroop outperforms DC in frequency variance as well. This can be understood through Theorem~\ref{thm:h2-improves}. Provided that $\kappa_\mathrm{p} \gg \kappa_\omega$, we know from the definition in Lemma~\ref{lem:vstar} that $\nu^{\star} \approx \kappa_{\mathrm{p}}/\kappa_{\omega}$. Thus, for realistic values of system parameters, $\nu^{\star} \gg r_\mathrm{r}^{-1}$ always holds. It follows directly that $\nu = r_\mathrm{r}^{-1} + r_\mathrm{t}^{-1} \in (r_\mathrm{r}^{-1},\nu^{\star}]$. By Theorem~\ref{thm:h2-improves}, iDroop performs better than DC in terms of frequency variance. Further, the preceding simulation results suggest that the Nadir eliminated tuning of iDroop designed based on the proportional parameter assumption works relatively well even when parameters are non-proportional.

%  \subsection{A Good Trade-off Provided by the Nadir Eliminated Tuning }
%  We next turn to investigate the impact of the parameter $\nu$ on the performance of iDroop. We fix $\delta = \SI{0.218}{\per\second}$ and vary $\nu$ within $\left[0.00004, 0.008\right]\SI{}{\second\per\radian}$ to get the percentage overshoot and the $\pm 2 \%$ settling time of the system frequency under the step input scenario as in Subsection~\ref{sub:step-simulation} as well as the average control effort on each node and the frequency standard deviation (SD) under the more realistic scenario as in Subsection~\ref{sub:comb-simulation} given in Fig.~\ref{fig:tradeoff}. It can be observed that the percentage overshoot and the frequency SD decreases with $\nu$ while the average control effort increases with $\nu$. The variation of the $\pm 2 \%$ settling time with respect to $\nu$ is more complex, yet basically the Nadir eliminated tuning $\nu = \SI{0.004}{\second\per\radian}$ gives a relative short settling time. Overall, the Nadir eliminated tuning strikes a good trade-off among various performance metrics.

\begin{figure}[t!]
\centering
\subfigure[Frequency deviations]
{\includegraphics[width=\columnwidth]{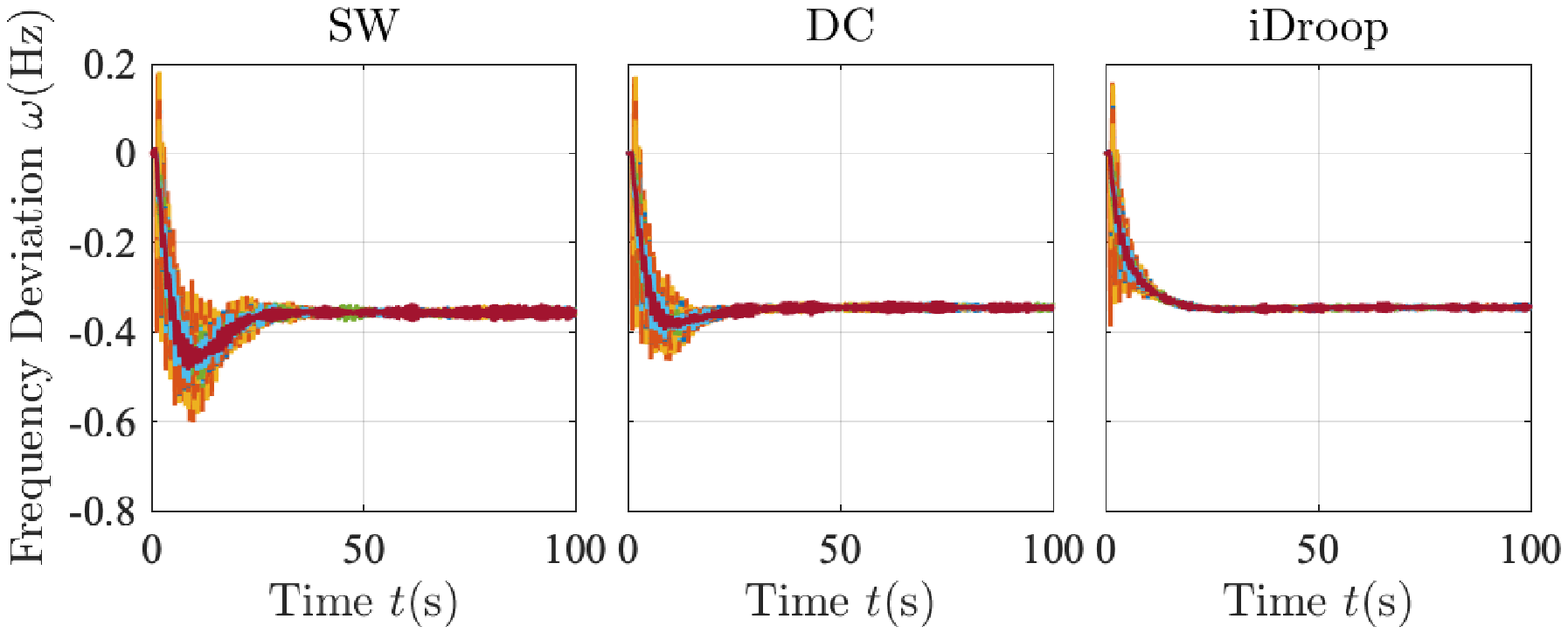}}
\hfil
% \subfigure[Control effort]
% {\includegraphics[width=\columnwidth]{control_combine_sde_db.eps}}
\subfigure[Empirical PDF of frequency deviations and system frequency]
{\includegraphics[width=\columnwidth]{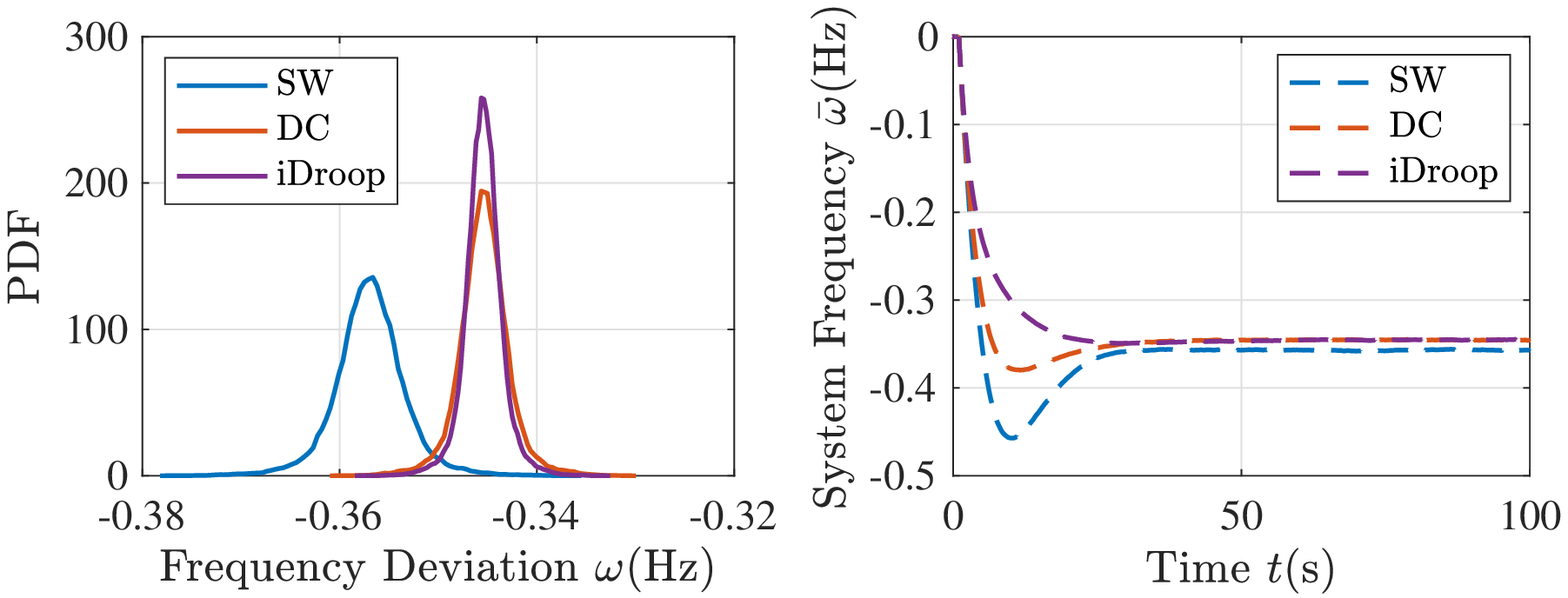}}
\caption{Comparison between controllers when a $-0.3$ p.u. step change in power injection is introduced to bus number $2$ and power fluctuations and measurement noise are introduced with $\kappa_\mathrm{p} = 10^{-4}$ and $\kappa_\omega = 10^{-5}$.}
\label{fig:combine simulation pdom np}
\end{figure}

% \begin{figure}[t!]
% \centering
% \includegraphics[width=\columnwidth]{tradeoff_nu_sde_db.eps}
% \caption{Overshoot, settling time, average control effort, and frequency SD as functions of $\nu$ for $\delta = \SI{0.218}{\per\second}$.}\label{fig:tradeoff}
% \end{figure}

\section{Conclusions}\label{sec:conclusion}
This paper studies the effect of grid-connected inverter-based control on the power system performance. When it comes to the existing two common control strategies, we show that DC cannot decouple the dynamic performance improvement from the steady-state effort share and VI can introduce unbounded frequency variance. Therefore, we propose a new control strategy named iDroop, which is able to enhance the dynamic performance and preserve the steady-state effort share at the same time. We show that iDroop can be tuned to achieve strong noise rejection, zero synchronization cost, and frequency Nadir elimination when the system parameters satisfy the proportionality assumption. We illustrate numerically that the Nadir eliminated tuning designed based on the proportional parameters assumption strikes a good trade-off among various performance metrics even if parameters are non-proportional.

\section{Acknowledgements}
The authors would like to acknowledge and thank Fernando Paganini, Petr Vorobev, and Janusz Bialek for their insightful comments that helped improve earlier versions of this manuscript.

\bibliographystyle{IEEEtran}
\bibliography{iDroop}
% \printbibliography

% \input{jiang-bio.tex}
% \vskip 0pt plus -1fil
% \input{pates-bio.tex}
% \vskip 0pt plus -1fil
% \input{mallada-bio.tex}

\end{document}